\pdfoutput=1
\documentclass[letterpaper]{article}
\usepackage{aaai18}
\usepackage{times}
\usepackage{helvet}
\usepackage{courier}
\usepackage{url}
\usepackage{graphicx}
\frenchspacing

\usepackage{times}
\usepackage{amsmath}
\usepackage{amsthm}
\usepackage{mathtools}
\usepackage{amsfonts}
\usepackage{amssymb}
\usepackage{graphicx}

\usepackage{tikz}
\usetikzlibrary{arrows}
\usepackage{algorithm}
\usepackage{rotating}
\usepackage{pdflscape}
\usepackage[noend]{algpseudocode}

\usepackage{enumitem}
\usepackage{subfigure}
\usepackage{booktabs}
\usepackage[group-separator={,}]{siunitx}

%
\setlength{\pdfpagewidth}{8.5in} 
\setlength{\pdfpageheight}{11in} 
%
%
%

\pdfinfo{
/Title (Balancing Lexicographic Fairness and a Utilitarian Objective with Application to Kidney Exchange)
/Author (Duncan C. McElfresh, John P. Dickerson)
/Keywords ()
}

%
\setcounter{secnumdepth}{2}




\makeatletter
\newtheorem*{rep@theorem}{\rep@title}
\newcommand{\newreptheorem}[2]{%
\newenvironment{rep#1}[1]{%
 \def\rep@title{#2 \ref{##1}}%
 \begin{rep@theorem}}%
 {\end{rep@theorem}}}
\makeatother

\newreptheorem{theorem}{Theorem}

\DeclareMathOperator*{\argmax}{arg\,max}
\DeclareMathOperator{\sgn}{sgn}

\DeclarePairedDelimiter\floor{\lfloor}{\rfloor}

\newcommand{\englishcite}[1]{\citeauthor{#1}~\shortcite{#1}}
\newcommand{\POF}[2]{\textsf{POF}(#1, #2)}

\newcommand{\Secref}[1]{Section~\ref{#1}} 
 
\newcommand{\Appref}[1]{Appendix~\ref{#1}} 
 
\newcommand{\Eqref}[1]{Equation~(\ref{#1})} 

\newtheorem{lemma}{Lemma}
\newtheorem{theorem}{Theorem}
\newtheorem{proposition}{Proposition}

\newcommand{\AB}{\text{A-B}}
\newcommand{\BA}{\text{B-A}}
\newcommand{\OA}{\text{O-A}}
\newcommand{\OB}{\text{O-B}}
\newcommand{\BO}{\text{B-O}}
\newcommand{\AO}{\text{A-O}}

\newcommand{\OX}{\text{O-X}}
\newcommand{\XAB}{\text{X-AB}}

\newcommand{\ABO}{\text{AB-O}}
\newcommand{\OAB}{\text{O-AB}}
\newcommand{\AAB}{\text{A-AB}}
\newcommand{\ABA}{\text{AB-A}}
\newcommand{\BAB}{\text{B-AB}}
\newcommand{\ABB}{\text{AB-B}}
\renewcommand{\O}{\text{O}} 
\newcommand{\A}{\text{A}}
\newcommand{\B}{\text{B}}
\newcommand{\X}{\text{X}}
\newcommand{\XX}{\text{$X$-$X$}}
\newcommand{\XY}{\text{$X$-$Y$}}

\newcommand{\Pcal}{\mathcal{P}}
\newcommand{\Mcal}{\mathcal{M}}
\newcommand{\pbar}{\bar{p}}
\newcommand{\mAB}{\mu_\text{AB}}

\title{Balancing Lexicographic Fairness and a Utilitarian Objective\\with Application to Kidney Exchange}
\author{Duncan C. McElfresh$^{\dagger,\ddagger}$\\
$^\dagger$Department of Mathematics\\
University of Maryland\\
\texttt{dmcelfre@math.umd.edu}
\And
John P. Dickerson$^{\dagger,\ddagger}$\\
$^\ddagger$Department of Computer Science\\
University of Maryland\\
\texttt{john@cs.umd.edu}
}


\begin{document}

\maketitle

\begin{abstract}
Balancing fairness and efficiency in resource allocation is a classical economic and computational problem.  The price of fairness measures the worst-case loss of economic efficiency when using an inefficient but fair allocation rule; for indivisible goods in many settings, this price is unacceptably high.  One such setting is kidney exchange, where needy patients swap willing but incompatible kidney donors.  In this work, we close an open problem regarding the theoretical price of fairness in modern kidney exchanges.  We then propose a general hybrid fairness rule that balances a strict lexicographic preference ordering over classes of agents, and a utilitarian objective that maximizes economic efficiency. We develop a utility function for this rule that favors disadvantaged groups lexicographically; but if cost to overall efficiency becomes too high, it switches to a utilitarian objective. This rule has only one parameter which is proportional to a bound on the price of fairness, and can be adjusted by policymakers. We apply this rule to real data from a large kidney exchange and show that our hybrid rule produces more reliable outcomes than other fairness rules.
\end{abstract}

\section{Introduction}\label{sec:intro}
Chronic kidney disease is a worldwide problem whose societal burden is likened to that of diabetes~\cite{Neuen13:Global}.  Left untreated, it leads to end-stage renal failure and the need for a donor kidney---for which demand far outstrips supply.  In the United States alone, the kidney transplant waiting list grew from \num{58000} people in \num{2004} to over \num{100000} needy patients~\cite{Hart16:Kidney}.\footnote{{\tiny\texttt{https://optn.transplant.hrsa.gov/converge/data/}}}  

To alleviate some of this supply-demand mismatch, \emph{kidney exchanges}~\cite{Rapaport86:Case,Roth04:Kidney} allow patients with willing \emph{living} donors to trade donors for access to compatible or higher-quality organs. In addition to these patient-donor pairs, modern exchanges include \emph{non-directed donors}, who enter the exchange without a patient in need of a kidney. Exchanges occur in cycle- or chain-like structures, and now account for 10\% of living transplants in the United States.  Yet, access to a kidney exchange does not guarantee equal access to kidneys themselves; for example, certain classes of patients may be particularly disadvantaged based on health characteristics or other logistical factors.  Thus, \emph{fairness} considerations are an active topic of theoretical and practical research in kidney exchange and the matching market community in general.

Intuitively, any enforcement of a fairness constraint or consideration may have a negative effect on overall economic efficiency.  A quantification of this tradeoff is known as the \emph{price of fairness}~\cite{Bertsimas11:Price}.  Recent work by \englishcite{Dickerson14:Price} adapted this concept to the kidney exchange case, and presented two fair allocation rules that strike a balance between fairness and efficiency.  Yet, as we show in this paper, those rules can ``fail'' unpredictably, yielding an arbitrarily high price of fairness.

With this as motivation, we adapt to the kidney exchange case a recent technique for trading off a form of fairness and utilitarianism in a principled manner.  This technique is parameterized by a bound on the price of fairness, as opposed to a set of parameters that may result in hard-to-predict final matching behavior, as in past work.  We implement our rule in a realistic mathematical programming framework and--on real data from a large, multi-center, fielded kidney exchange--show that our rule effectively balances fairness and efficiency without unwanted outlier behavior.

\subsection{Related Work}\label{sec:intro-rw}
We briefly overview related work in balancing efficiency and fairness in resource allocation problem.  \englishcite{Bertsimas11:Price} define the price of fairness; that is, the relative loss in system efficiency under a fair allocation rule.  \englishcite{Hooker12:Combining} give a formal method for combining utilitarianism and equity.  We direct the reader to those two papers for a greater overview of research in fairness in general resource allocation problems.

Fairness in the context of kidney exchange was first studied by \englishcite{Roth05:Pairwise}; they explore concepts like Lorenz dominance in a stylized  model, and show that preferring fair allocations can come at great cost.  \englishcite{Li14:Egalitarian} extend this model and present an algorithm to solve for a Lorenz dominant matching.  Stability in kidney exchange, a concept intimately related to fairness, was explored by \englishcite{Liu14:Internally}.  The use of randomized allocation machanisms to promote fairness in stylized models is theoretically promising~\cite{Fang15:Randomized,Aziz16:Egalitarianism,Mattei17:Mechanisms}.  Recent work discusses fairness in stylized random graph models of dynamic kidney exchange~\cite{Ashlagi13:Kidney,Anderson15:Dynamic}.  None of these papers provide practical models that could be implemented in a fully-realistic and fielded kidney exchange.

Practically speaking, \englishcite{Yilmaz11:Kidney} explores in simulation equity issues from combining living and deceased donor allocation; that paper is limited to only short length-two kidney swaps, while real exchanges all use longer cycles and chains.  \englishcite{Dickerson14:Price} introduced two fairness rules explicitly in the context of kidney exchange, and proved bounds on the price of fairness under those rules in a random graph model; we build on that work in this paper, and describe it in greater detail later.  That work has been incorporated into a framework for learning to balance efficiency, fairness, and dynamism in matching markets~\cite{Dickerson15:FutureMatch}; we note that the fairness rule we present in this paper could be used in that framework as well.

\subsection{Our Contributions}\label{sec:intro-contributions}

\englishcite{Dickerson14:Price} finds that the theoretical price of fairness in kidney exchange is small when \emph{only} patient-donor pairs participate in the exchange.  They did not include non-directed donors (NDDs).  However, in modern kidney exchanges, non-directed donors (NDDs) provide many more matches than patient-donor pairs; furthermore, NDDs create more opportunities to expand the fair matching, potentially increasing the price of fairness.  Here, we prove that adding NDDs to the theoretical model actually \emph{decreases} the price of fairness, and that---with enough NDDs---the price of fairness is zero.

Real kidney exchanges are less dense and more uncertain than the (standard) theoretical model in which we prove our results. Previous approaches to incorporating fairness into kidney exchange have neglected this fact: they have been either ad-hoc---e.g., ``priority points'' decided on by committee~\cite{KPDPP13:CMR}---or brittle~\cite{Roth05:Pairwise,Dickerson14:Price}, resulting in an unacceptably high price of fairness. This paper provides the first approach to incorporating fairness into kidney exchange in a way that both prioritizes disadvantaged participants, but also comes with acceptable worst-case guarantees on the price of fairness.  Our method is easily applied as an objective in the mathematical-programming-based clearing methods used in today's fielded exchanges; indeed, using real data we show that this method guarantees a limit on efficiency loss.

\Secref{sec:prelims} introduces the kidney exchange problem. \Secref{sec:randomgraph} extends work by~\englishcite{Ashlagi14:Free} and~\englishcite{Dickerson14:Price}, showing that the price of fairness is small on the canonical random graph model even with NDDs. \Secref{sec:worstcase} shows that two recent fair allocation rules from the kidney exchange literature~\cite{Dickerson14:Price} can perform unacceptably poorly in the worst case. Then, \Secref{sec:combined} presents a new allocation rule that allows policymakers to set a limit on efficiency loss, while also favoring disadvantaged patients. \Secref{sec:experiments} shows on real data from a large fielded kidney exchange that our method limits efficiency loss while still favoring disadvantaged patients when possible.

\subsection{Preliminaries}\label{sec:prelims}
A kidney exchange can be represented as a directed \emph{compatibility graph} $G = (V,E)$, with vertices $V=P \cup N$ including both patient donor pairs $p \in P$ and non-directed-donors $n \in N$~\cite{Roth04:Kidney,Roth05:Kidney,Roth05:Pairwise,Abraham07:Clearing}. A directed edge $e$ is drawn from vertex $v_i$ to $v_j$ if the donor at $v_i$ can give to the patient at $v_j$.  Fielded kidney exchanges consist mainly of directed cycles in $G$, where each patient vertex in the cycle receives the donor kidney of the previous vertex. Modern exchanges also include non-cyclic structures called chains~\cite{Montgomery06:Domino,Rees09:Nonsimultaneous}.  Here, an NDD donates her kidney to a patient, whose paired donor donates her kidney to another patient, and so on.

In practice, cycles are limited in size, or ``capped,'' to some small constant $L$, while chains are limited in size to a much larger constant $R$---or not limited at all.  This is because all transplants in a cycle must execute \emph{simultaneously}; if a donor whose paired patient had already received a kidney backed out of the donation, then some participant in the market would be strictly worse off than before.  However, chains need not be executed simultaneously; if a donor backs out after her paired patient receives a kidney, then the chain breaks but no participant is strictly worse off.  We will discuss how these caps affect fairness and efficiency in the coming sections.

The goal of kidney exchange programs is to find a \emph{matching} $M$---a collection of disjoint cycles and chains in the graph $G$.  The cycles and chains must be disjoint because no donor can give more than one of her kidneys (although ongoing work explores multi-donor kidney exchange~\cite{Ergin17:Multi,Farina17:Operation}).  The \emph{clearing problem} in kidney exchange is to find a matching $M^*$ that maximizes some utility function $u : {\cal M} \to \mathbb{R}$, where ${\cal M}$ is the  set of all legal matchings.   Real kidney exchanges typically optimize for the maximum weighted cycle cover (i.e., $u(M) = \sum_{c \in M} \sum_{e \in c} w_{e}$).  This \emph{utilitarian} objective can favor certain classes of patient-donor pairs while disadvantaging others.  This is formalized in the following section.

\subsection{The Price of Fairness} 

As an example for this paper, we focus on \emph{highly-sensitized} patients, who have a very low probability of their blood passing a feasibility test with a random donor organ; thus, finding a kidney is often quite hard, and their median waiting time for an organ jumps by a factor of three over less sensitized patients.\footnote{\texttt{https://optn.transplant.hrsa.gov/data/}} Utilitarian objectives will, in general, marginalize these patients. Sensitization is determined using the Calculated Panel Reactive Antibody (CPRA) level of each patient, which reflects the likelihood that a patient will find a matching donor.

Formally the sensitization of each patient-donor vertex $v$ be $v_s \in [0,100]$, the CPRA level of $v$'s patient; NDD vertices are not associated with patients, so they do not have sensitization levels. Each patient-donor vertex $v \in P$ is considered highly sensitized if $v_s$ exceeds threshold $\tau\in [0,100]$, and lowly-sensitized otherwise. These vertex sets $V_H$ and $V_L$ are defined as: 
\begin{itemize}
  \item Lowly sensitized: $V_L = \{ v\ |\ v \in P : v_s < \tau \}$
  \item Highly sensitized: $V_H=\{ v\ |\ v \in P : v_s \geq \tau \}$.
\end{itemize}
By definition, highly-sensitized patients are harder to match than lowly-sensitized patients. Naturally, efficient matching algorithms prioritize easy-to-match vertices in $V_L$, marginalizing $V_H$. 
Let $u_f : \Mcal{} \to \mathbb{R}$ be a \emph{fair} utility function.  Formally, a utility function is fair when its corresponding optimal match $M^*_f$ is viewed as fair, where $M^*_f$ is defined as:
\begin{align*}
  M^*_f &= \argmax_{M \in \Mcal{}} u_f(M) 
\end{align*}
\englishcite{Bertsimas11:Price} defined the \emph{price of fairness} to be the ``relative system efficiency loss under a fair allocation assuming that a fully efficient allocation is one that maximizes the sum of [participant] utilities.''  \englishcite{Caragiannis09:Efficiency} defined an essentially identical concept in parallel.  Formally, given a fair utility function $u_f$ and the utilitarian utility function $u$, the price of fairness is:
\begin{align} \label{eq:pof}
  \POF{\Mcal{}}{u_f} = \frac{u\left(M^*_{\text{ }}\right) - u\left(M^*_f\right)}{u\left(M^*\right)}
\end{align}

The price of fairness $\POF{\Mcal{}}{u_f}$ is the relative loss in (utilitarian) efficiency caused by choosing the fair outcome $M^*_f$ rather than the most efficient outcome.

In the next section we show that the theoretical price of fairness in kidney exchange is small, even when both cycles \emph{and chains} are used---thus generalizing an earlier result due to~\englishcite{Dickerson14:Price} to modern kidney exchanges.

\section{%
The Theoretical Price of Fairness with Chains is Low (or Zero)%
}

\label{sec:randomgraph}
In this section we use the random graph model for kidney exchange introduced by~\englishcite{Ashlagi14:Free} to show that the theoretical price of fairness is always small, especially when NDDs are included. A complete description of this model can be found in \Appref{sec:randgraphmodel}. \englishcite{Dickerson14:Price} finds that without NDDs, the maximum price of fairness is $2/33$. Adding NDDs to this model creates more opportunities to match highly sensitized patients, which could potentially lead to a higher price of fairness. However we find that including chains in this model only \emph{decreases} the price of fairness; furthermore, when the ratio of NDDs to patient-donor pairs is high enough, the price of fairness is zero.

\subsection{Price of Fairness}

\englishcite{Ashlagi14:Free} characterize efficient matchings in a random graph model without chains, and \englishcite{Dickerson14:Price} build on this to show that the price of fairness without chains is bounded above by $2/33$. \englishcite{Dickerson12:Optimizing} extend the efficient matching of~\englishcite{Ashlagi14:Free} to include chains, but do not calculate the price of fairness.  In this work, we close the remaining gap in theory regarding the price of fairness with chains.

Given $|P|$ patient-donor pairs, we parameterize the number of NDDs $|N|$ with $\beta \geq 0$ such that $|N| = \beta|P|$.  Theorems \ref{thm:pofdecreases} and \ref{thm:maxbeta} state our two main results: adding chains to the random graph model does not increase the price of fairness, and when the fraction of NDDs is high enough ($\beta>1/8$), the price of fairness is zero. The proofs of the following theorems are given in \Appref{sec:pofrandomgraph}.

\begin{theorem}\label{thm:pofdecreases}
Adding NDDs to the random graph model ($\beta>0$) does not increase the upper bound on the price of fairness found by~\englishcite{Dickerson14:Price}.
\end{theorem}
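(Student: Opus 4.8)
The plan is to reduce the claim to a \emph{marginal} comparison: show that each non-directed donor, equivalently each chain it seeds, contributes at least as much to a fair matching as it does to an efficient one, so that adding NDDs cannot make the relative efficiency gap $\frac{u(M^*)-u(M^*_f)}{u(M^*)}$ larger than its value at $\beta=0$. I would begin by recalling the type-based characterization of optimal matchings in the random graph model used by~\englishcite{Ashlagi14:Free} and~\englishcite{Dickerson14:Price}: pairs split into overdemanded, underdemanded, self-demanded, and reciprocally demanded classes according to the blood types of patient and donor, and the highly sensitized vertices of $V_H$ are the hard-to-match ones (few incoming edges) that an efficient matching tends to leave uncovered. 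In the chain-free regime this characterization pins down both $M^*$ and $M^*_f$ up to a few integer counts, and the arithmetic of those counts is exactly what produces the bound $2/33$; I would restate that computation so the later argument can be phrased as a perturbation of it.

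Next I would bring in chains using the extension of this characterization to NDD-seeded chains due to~\englishcite{Dickerson12:Optimizing}. The crucial structural fact I would isolate is that a chain, unlike a cycle, need not ``close'': its final donor's kidney is not required to return to the chain's head, so the tail of a chain can absorb one extra hard-to-match, highly sensitized pair at no cost to the rest of the matching. This asymmetry is the engine of the theorem. Using it, I would argue that in the efficient matching each additional NDD enlarges $u(M^*)$ by a bounded amount (the length of the chain it seeds), whereas the \emph{same} NDD can be routed in the fair matching to additionally cover a vertex of $V_H$ at its free tail. Consequently the numerator $u(M^*)-u(M^*_f)$ grows no faster---and typically slower---than the denominator $u(M^*)$ as $\beta$ increases.

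With this marginal inequality in hand, I would finish by a monotonicity argument in $\beta$: the price of fairness $\POF{\Mcal}{u_f}$ is non-increasing in $\beta$, so for every $\beta>0$ it is bounded above by its value at $\beta=0$, which is the $2/33$ bound of~\englishcite{Dickerson14:Price}. The companion Theorem~\ref{thm:maxbeta} would then pick up where this leaves off, driving the gap to zero once $\beta$ is large enough to supply a free tail for every highly sensitized vertex.

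I expect the main obstacle to be the combinatorial bookkeeping in the middle step: precisely tracking how an NDD-seeded chain reroutes matches among the blood-type classes, and proving the marginal inequality in every case. In particular I anticipate a split according to whether the bottleneck resource---the overdemanded pairs together with the NDDs---is fully consumed, since only in the saturated regime does adding an NDD trade off against other matches, while in the unsaturated regime the extra chain is ``free'' and the inequality is immediate. Handling the saturated case carefully, and confirming that the free chain tail can always be assigned to a highly sensitized vertex whenever one remains unmatched, is where the real work lies.
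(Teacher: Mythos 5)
Your strategy differs from the paper's in its finishing move, but not in where the real work lies. The paper never argues monotonicity in $\beta$: it enumerates the possible efficient matchings with chains (only four have nonzero price of fairness, Propositions~\ref{prop:smallbeta}, \ref{prop:midbeta11}, \ref{prop:midbeta13}, \ref{prop:midbeta3}), computes the price of fairness exactly in each regime, and compares each resulting formula directly against the $\beta=0$ expression of \englishcite{Dickerson14:Price} (Lemmas~\ref{lem:prop12-bounded} and \ref{lem:prop34-bounded}). In two regimes that comparison is immediate (identical numerator, denominator enlarged by $\beta(\mu_\A+\mu_\B+2\mu_\O)$); in the other two it requires showing the numerator shrinks \emph{and} the denominator grows, the latter leaning on the model assumptions $\pbar\leq 2/5$ and $\mu_\O\leq (3/2)\mu_\A$. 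Your marginal inequality $\delta_F\geq\delta_E$ (each NDD helps the fair matching at least as much as the efficient one) is in fact true in this model---one can read it off the paper's formulas, where the efficiency gap $u(M^*)-u(M^*_f)$ is constant in $\beta$ in the first two regimes and strictly decreasing in the last two---so a monotonicity proof could be assembled, and it would buy a slightly stronger statement than the paper's pointwise bound.

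However, two genuine gaps remain. First, your stated inference is logically insufficient: ``the numerator grows no faster than the denominator'' does not prevent the ratio from increasing (take numerator $1$ and denominator $2$, both growing at unit rate; the ratio rises from $1/2$ toward $1$). What you need, and what $\delta_F\geq\delta_E$ actually delivers, is that the numerator is \emph{non-increasing} while the denominator is non-decreasing. Second, and more substantively, the marginal inequality itself is asserted via a rerouting heuristic that cannot be executed as a clean exchange argument: the efficient matching's new chain may collide with the fair matching's cycles, and in the fair matching all matchable $V_H$ vertices are already covered (lexicographic fairness demands it), so the NDD's true contribution there is to \emph{undo} the 3-cycle-to-2-cycle sacrifices made for $V_H^\OAB$, not to cover additional $V_H$ vertices at a free tail. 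Making that precise forces exactly the regime-by-regime bookkeeping you defer at the end---which vertex classes are exhausted in which order as $\beta$ grows---and that bookkeeping \emph{is} the paper's four propositions. So your plan is best viewed as a correct reorganization of the same case analysis rather than a shortcut around it; as written, the theorem-bearing step is still unproven.
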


\textbf{Proof Sketch: } We explore every possible efficient matching on the random graph model with chains; only four of these matchings have nonzero price of fairness. For each case, we compare the price of fairness to that of the efficient matching without chains found in~\englishcite{Dickerson14:Price}, and find that the upper bound does not increase.

\begin{theorem}\label{thm:maxbeta}
The price of fairness is zero when $\beta>1/8$.
\end{theorem}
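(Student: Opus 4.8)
The plan is to reuse the case analysis from the proof of Theorem~\ref{thm:pofdecreases}, which enumerates every structural type of efficient matching on the random graph model with chains and isolates the only four whose price of fairness is nonzero. Since $\POF{\Mcal}{u_f}=0$ precisely when some fair matching $M^*_f$ attains $u(M^*_f)=u(M^*)$, it suffices to show that in each of these four cases the efficiency gap $u(M^*)-u(M^*_f)$ vanishes once $\beta>1/8$. The remaining (already zero) cases need no further argument.

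First I would recall the source of a positive price of fairness. The utilitarian optimum $M^*$ maximizes the number of transplants, and to do so it may strand some highly sensitized (underdemanded) pairs in $V_H$, because routing a cycle through such a pair would displace more easily matched pairs in $V_L$. The fair matching insists on including these $V_H$ pairs, and when doing so forces other transplants to be dropped, efficiency falls and $\POF{\Mcal}{u_f}>0$. The whole question is therefore whether the added NDDs can remove this tension.

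The key is the asymmetry between cycles and chains. A chain initiated by an NDD need not close, so it can be routed to terminate at a stranded $V_H$ pair: that patient receives a transplant while the pair's hard-to-place donor simply becomes a bridge donor at no cost. Hence each NDD-initiated chain can rescue one stranded highly sensitized pair without displacing any transplant already present in $M^*$. The decisive quantity is then the number $h$ of $V_H$ pairs that $M^*$ leaves unmatched in each problematic case, which the random graph model fixes in expectation as a function of the blood-type and sensitization distributions, and hence as a fraction of $|P|$. For each of the four cases I would compare $h$ against the number of available NDD chains $|N|=\beta|P|$ and show that $\beta>1/8$ forces $\beta|P|\ge h$ in the worst case; then $M^*_f$ can absorb every stranded $V_H$ pair into a distinct chain, covering all patients matched by $M^*$ and weakly more, so $u(M^*_f)=u(M^*)$ and the price of fairness is zero. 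On this reading the constant $1/8$ is exactly the worst-case ratio $h/|P|$ across the four cases.

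The main obstacle I anticipate is the packing argument rather than the high-level counting: I must verify, case by case, that the rescuing chains can be realized simultaneously and disjointly---that distinct chains neither compete for the same intermediate pairs nor for the same NDDs---and that reorienting the matching toward $V_H$ does not silently cost a transplant elsewhere. Pinning the threshold down to exactly $1/8$, as opposed to merely some sufficient bound, requires identifying the single tightest of the four cases and confirming that its chain construction remains feasible with precisely $\beta|P|$ NDDs; this feasibility check is where I expect the real effort to lie.
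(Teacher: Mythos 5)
Your plan has the logical structure inverted, and this is a genuine gap. The paper's proof does \emph{not} show that inside each of the four nonzero-PoF cases the efficiency gap closes once $\beta>1/8$; indeed it cannot, because each of those four cases is \emph{defined} by assumptions on $\beta$ (relative to $\mAB$, $\pbar$, $\lambda$, and the $\mu_X$'s) that are incompatible with $\beta>1/8$. The actual argument is an exclusion argument: for each of the four matchings with nonzero price of fairness, the structural assumptions that make that matching the efficient one imply a parameter-dependent upper bound on $\beta$ (e.g., $\beta < \mAB\left(\frac{\mu_\A}{\mu_\A+\mu_\O}-\pbar\right)$), and maximizing these bounds over the model's allowed parameter space ($\mAB\to 1/4$, $\mu_\A\to 1/2$, $\pbar\to 0$, etc.) yields at most $1/8$ (the four cases give $1/8$, $1/12$, $1/8$, $1/10$). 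Hence when $\beta>1/8$ none of the four bad cases can occur, the efficient matching is one of the other structural cases, and all of those have zero price of fairness (the large-$\beta$ case is handled separately, where the efficient matching itself already reaches every $V_H^\OAB$ vertex via O-type NDD chains and ($\ABO$, $\OAB$) 2-cycles, so fairness costs nothing). Your proposal, by contrast, asks to verify a statement whose hypotheses are contradictory, so the case-by-case ``rescue'' verification you outline has nothing to verify.

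Your proposed counting mechanism also misidentifies where $1/8$ comes from. The stranded pairs are those in $V_H^\OAB$, numbering roughly $(1-\lambda)\mu_\O\mAB|P|\leq |P|/16$, so $1/8$ is not ``the worst-case ratio $h/|P|$'' of stranded pairs to pool size. Moreover, the rescue is not free in the regimes where the price of fairness is positive: an $\OAB$ patient can only receive from an O donor, so only the roughly $\beta\mu_\O|P|$ O-type NDDs are relevant, and in the bad cases those NDDs are already committed to chains of the form ($\O$, $\OX$, $\XAB$) that match two pairs each; redirecting one to a stranded pair sacrifices a transplant, which is exactly why the paper's fair matching there swaps a 3-cycle ($\ABO$, $\OA$, $\AAB$) for a 2-cycle ($\ABO$, $\OAB$) at a net loss of one transplant per rescued vertex. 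The threshold $1/8$ is a worst case over the blood-type distribution parameters of the case-defining inequalities on $\beta$, not a supply-versus-demand comparison between NDDs and stranded $V_H$ pairs; a correct proof must go through those inequalities (the paper's Lemmas~\ref{lem:maxbeta-small}--\ref{lem:maxbeta-mid3}) rather than through a disjoint-chain packing argument.
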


\textbf{Proof sketch: } For each matching with nonzero price of fairness, $\beta \leq 1/8$. When $\beta>1/8$, a different matching occurs, and the price of fairness is zero.

\begin{figure}[h!]\label{fig:allcases-max}
\centering
\includegraphics[width=0.45\textwidth]{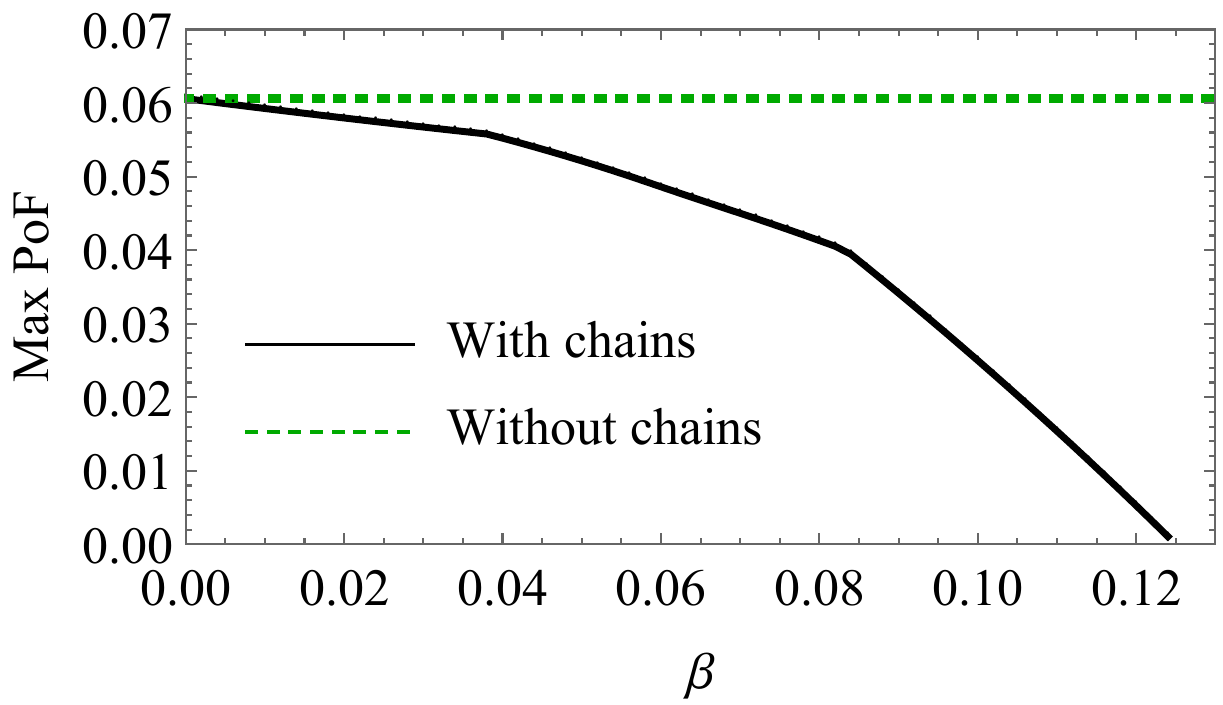}
\caption{Price of fairness with chains. (The horizontal dotted line at $2/33$ is the price of fairness without chains.)}
\end{figure}

To illustrate these results, we compute the price of fairness when $\beta \in [0,1/8]$. These calculations confirm our theoretical results, as shown in Figure~\ref{fig:allcases-max}: the price of fairness decreases as $\beta$ increases, and is zero when $\beta>1/8$.

The worst-case price of fairness is small in the random graph model, with or without NDDs. However, real exchange graphs are typically much sparser and less uniform---in reality the price of fairness can be high. In the next section, we discuss two notions of fairness in kidney exchange and determine their worst-case price of fairness.

\section{The Price of Fairness in State-of-the-Art Fair Rules can be Arbitrarily Bad}\label{sec:worstcase}

The price of fairness depends on how fairness is defined. This is especially true in real exchanges where the price of fairness can be unacceptably high.

In this section, we discuss two kidney-exchange-specific fairness rules introduced by~\englishcite{Dickerson14:Price}: lexicographic fairness and weighted fairness. These rules favor the disadvantaged class, or classes, without considering overall loss in efficiency; we will show in the worst case these rules allow the the price of fairness to approach $1$ (i.e., total efficiency loss). Proofs of these theorems are in \Appref{sec:poffairness}.

\subsection{Lexicographic Fairness}\label{ssec:lex}


As proposed by \englishcite{Dickerson14:Price}, $\alpha$-lexicographic fairness assigns nonzero utility only to matchings that award at least a fraction $\alpha$ of the maximum possible fair utility. Letting $u_H(M)$ and $u_L(M)$ be the utility assigned to only vertices in $V_H$ and $V_L$, respectively, the utility function for $\alpha$-lexicographic fairness is given in \Eqref{eq:alex}.
{\small
\begin{equation}
\label{eq:alex}
u_{\alpha}(M)=
\begin{cases}
u_L(M) + u_H(M) \\ \hspace{30pt} \text{ if } u_H(M)  \geq \alpha \max\limits_{M'\in \Mcal} u_H(M') \\
\vspace{5pt}
0  \hspace{25pt} \text{otherwise.}
\end{cases}
\end{equation}
}
Theorems~\ref{thm:lex-cycle} and~\ref{thm:lex-chain} state that strict lexicographic fairness ($\alpha=1$) allows the price of fairness to approach 1.
\begin{theorem}
\label{thm:lex-cycle}
For any cycle cap $L$ there exists a graph $G$ such that the price of fairness of $G$ under $\alpha$-lexicographic fairness with $0<\alpha\leq 1$ is bounded by {\small $\POF{\mathcal{M}}{u_{\alpha}} \geq \frac{L-2}{L}$}.
\end{theorem}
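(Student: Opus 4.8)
The plan is to exhibit, for each cycle cap $L$, a small ``gadget'' graph in which enforcing strict lexicographic fairness forces an extremely inefficient matching. First I would construct $G$ as follows: introduce a single highly-sensitized vertex $h \in V_H$ together with $L$ lowly-sensitized vertices $\ell_1, \dots, \ell_L \in V_L$. Among the lowly-sensitized vertices I place exactly one directed cycle of length $L$, namely $\ell_1 \to \ell_2 \to \cdots \to \ell_L \to \ell_1$, and I connect $h$ to the rest of the graph only through a $2$-cycle with $\ell_1$ (the edges $h \to \ell_1$ and $\ell_1 \to h$). All edge weights are unit. The key structural feature is that $\ell_1$ is the unique ``gateway'' for matching $h$, and it simultaneously lies on the only long cycle; these two uses of $\ell_1$ are in direct conflict.

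Next I would identify the two relevant optima. For the utilitarian objective, the $L$-cycle matches all $L$ lowly-sensitized vertices for a total utility of $L$; since $h$'s only neighbor is $\ell_1$, any matching that also serves $h$ must break this cycle, so the efficient matching $M^*$ is the $L$-cycle with $u(M^*) = L$ and $u_H(M^*) = 0$. For the fair objective I would observe that $u_H$ takes only the values $0$ and $1$ on this graph, with $\max_{M' \in \Mcal} u_H(M') = 1$ attained exactly when $h$ is matched. Hence for every $\alpha \in (0,1]$ the threshold $\alpha \max_{M' \in \Mcal} u_H(M') = \alpha > 0$ rules out all matchings with $u_H = 0$, so any matching receiving nonzero utility under $u_\alpha$ must match $h$, and therefore must contain the $2$-cycle $(h, \ell_1)$.

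I would then argue that once $(h,\ell_1)$ is forced, deleting $\ell_1$ from the $L$-cycle leaves only the directed path $\ell_2 \to \cdots \to \ell_L$, which contains no cycle; thus no further vertices can be matched, and the best (indeed only) fair matching is $M^*_f = \{(h,\ell_1)\}$ with $u(M^*_f) = 2$. Substituting into the definition of the price of fairness yields $\POF{\Mcal}{u_\alpha} = (u(M^*) - u(M^*_f))/u(M^*) = (L-2)/L$, which is the claimed lower bound (and the degenerate case $L = 2$ gives the trivial bound $0$, while $L \to \infty$ drives the price toward $1$).

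The main obstacle I anticipate is the criticality argument: I must make the gadget tight enough that $\ell_1$ is genuinely indispensable both for reaching $h$ and for completing any large cycle, so that forcing $h$ into the match provably collapses the achievable utility to exactly $2$ rather than to some intermediate value. Concretely, this means verifying that no alternative cycles exist among $\{\ell_2, \dots, \ell_L\}$ (by including no edges other than those of the single $L$-cycle and the $h$-$\ell_1$ pair) and confirming that the $L$-cycle is truly utilitarian-optimal. Everything else---the value computations and the substitution into \Eqref{eq:pof}---is routine once the graph is specified.
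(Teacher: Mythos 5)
Your proposal is correct and is essentially the same as the paper's own proof: the paper uses exactly this gadget (one highly-sensitized vertex joined by a $2$-cycle to a single vertex of a unit-weight $L$-cycle of lowly-sensitized vertices), with the utilitarian optimum being the $L$-cycle of utility $L$ and the $\alpha$-fair optimum being the $2$-cycle of utility $2$, giving $\POF{\mathcal{M}}{u_{\alpha}}=(L-2)/L$. Your additional verification that the forced $2$-cycle destroys all other cycles is a useful explicit spelling-out of a step the paper leaves implicit, but it is the same argument.
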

\begin{theorem}
\label{thm:lex-chain}
For any chain cap $R$ there exists a graph $G$ such that the price of fairness of $G$ under the $\alpha$-lexicographic fairness rule with $0<\alpha\leq 1$ is bounded by {\small $\POF{\mathcal{M}}{u_{\alpha}} \geq \frac{R-1}{R}$}.
\end{theorem}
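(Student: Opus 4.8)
The plan is to exhibit, for each chain cap $R$, a graph $G$ whose unique fair matching is forced to be a length-one chain serving a single highly-sensitized pair, while the utilitarian optimum is a length-$R$ chain serving $R$ lowly-sensitized pairs. This parallels the cycle construction of Theorem~\ref{thm:lex-cycle}; the improvement from $(L-2)/L$ to $(R-1)/R$ comes precisely from the fact that a chain can ``reach'' a disadvantaged vertex through a single edge from a non-directed donor, whereas a cycle must involve at least two vertices.

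First I would build $G$ around a single NDD $n \in N$, a single highly-sensitized pair $h \in V_H$, and $R$ lowly-sensitized pairs $\ell_1, \dots, \ell_R \in V_L$, with unit edge weights. The only edges are $n \to \ell_1$, the path $\ell_1 \to \ell_2 \to \cdots \to \ell_R$, and the single edge $n \to h$; the vertex $h$ has no outgoing edge, and no other edges exist. In particular every edge points ``forward,'' so $G$ is acyclic and any legal matching is a single chain rooted at $n$ (there is only one NDD, and there are no cycles). Thus a matching must choose exactly one of $n$'s two out-edges.

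Next I would identify the two candidate matchings. Taking the edge $n \to \ell_1$ and extending along the path yields the chain $n \to \ell_1 \to \cdots \to \ell_R$, which uses $R$ edges (respecting the cap) and has utility $R$ with $u_H = 0$; this is the utilitarian optimum $M^*$, so $u(M^*)=R$. Taking the edge $n \to h$ yields the chain $n \to h$ of utility $1$ with $u_H = 1$; because the $\ell_i$ are reachable only through $n$, diverting $n$ to $h$ leaves all of them unmatched. Hence $\max_{M \in \Mcal} u_H(M) = 1$, attained only by matchings that include $n \to h$. I would then apply the definition of $\alpha$-lexicographic fairness: for any $0 < \alpha \le 1$, a matching receives nonzero utility under $u_\alpha$ only if $u_H(M) \ge \alpha \max_{M'} u_H(M') = \alpha > 0$, and since $u_H$ is integer-valued here, this forces $u_H(M)=1$, i.e., $M$ matches $h$ via $n \to h$. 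Every such matching leaves $\ell_1,\dots,\ell_R$ unmatched, so the fair optimum is the chain $n \to h$ with $u(M^*_f)=1$. Substituting into \Eqref{eq:pof} gives $\POF{\mathcal{M}}{u_{\alpha}} = (R-1)/R$, as claimed.

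The step requiring the most care is the graph design: I must guarantee simultaneously that (i) the lowly-sensitized pairs cannot be salvaged once $n$ is committed to $h$ (achieved by making $n$ their only entry point), (ii) the $h$-chain cannot be extended to also collect utility from $V_L$ (achieved by giving $h$ no out-edge), and (iii) the argument is uniform over all $\alpha \in (0,1]$ (achieved because $u_H$ takes only the values $0$ and $1$, so any positive threshold $\alpha$ behaves like $\alpha=1$). A minor bookkeeping point is to confirm that the long chain has exactly $R$ edges and therefore respects the chain cap, while the short chain trivially does.
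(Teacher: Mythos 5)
Your proof is correct and takes essentially the same approach as the paper: the paper's construction is exactly your graph (a single NDD with one out-edge to a highly-sensitized patient and one out-edge into a path of $R$ lowly-sensitized patients), obtained there by converting a vertex of the Theorem~\ref{thm:lex-cycle} graph into an NDD. Your write-up additionally spells out the details the paper leaves implicit (acyclicity, the single-chain structure of any matching, and uniformity over $\alpha$), but the argument and the resulting bound are identical.
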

Thus, $\alpha$-lexicographic fairness allows for a price of fairness that approaches $1$ as the cycle and chain cap increase. 

\subsection{Weighted Fairness}\label{ssec:weighted}

The weighted fairness rule~\cite{Dickerson14:Price} defines a utility function by first modifying the original edge weights $w_e$ by a multiplicative factor $\gamma \in \mathbb{R}$ such that
{\small
$$w_e' =
\begin{cases}
(1+\gamma)w_e &\text{if $e$ ends in $V_H$}  \\
w_e &\text{otherwise.}
\end{cases}$$}
Then the weighted fairness rule $u_{\mathit{WF}}$ is
$$ u_{WF}(M) = \sum_{c \in M} u'(c), $$
where $u'(c)$ is the utility of a chain or cycle $c$ with modified edge weights.

The modified edge weights prompt the matching algorithm to include more highly-sensitized patients; as in the lexicographic case, we now show that the price of fairness approaches $1$ under weighted fairness.

\begin{theorem}\label{thm:weight-cycle}
For any cycle cap $L$ and $\gamma \geq L-1$, there exists a graph $G$ such that the price of fairness of $G$ under the weighted fairness rule is bounded by {\small $\POF{\mathcal{M}}{u_{WF}} \geq \frac{L-2}{L}$}.
\end{theorem}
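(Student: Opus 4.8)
The plan is to exhibit a single small rigid gadget $G$ in which the utilitarian optimum recovers a full $L$-cycle of utility $L$, while the weighted objective is driven to select a length-$2$ cycle of utility only $2$ that merely touches one highly-sensitized vertex. This will give $\POF{\mathcal{M}}{u_{WF}} = \frac{L-2}{L}$, which certainly satisfies the claimed lower bound.

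\textbf{Construction.} I would take $L$ lowly-sensitized pairs $v_1,\dots,v_L \in V_L$ joined in one directed $L$-cycle by the edges $v_i \to v_{i+1}$ (indices taken mod $L$), each of weight $1$; by including \emph{only} these edges among the $v_i$, the unique cycle they support is the full $L$-cycle. I then add one highly-sensitized pair $h \in V_H$ together with the two edges $v_1 \to h$ and $h \to v_1$, each of weight $1$, so that the only cycle containing $h$ is the $2$-cycle $(v_1,h)$, and once $v_1$ is consumed the remaining vertices $v_2,\dots,v_L$ form only a path and admit no cycle.

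\textbf{Two computations.} First, under the utilitarian objective $u$ the candidate matchings are the empty matching, the $2$-cycle ($u=2$), and the $L$-cycle ($u=L$); the latter two cannot coexist since they share $v_1$. Hence $M^*$ is the $L$-cycle with $u(M^*) = L$. Second, under the weighted rule the edge $v_1 \to h$ ends in $V_H$ and is rescaled to weight $(1+\gamma)$, so the $2$-cycle receives weighted utility $(1+\gamma) + 1 = 2 + \gamma$, whereas every edge of the $L$-cycle ends in $V_L$ and so it retains weighted utility $L$. Whenever $2+\gamma > L$ the weighted optimum $M^*_f$ is the $2$-cycle, with utilitarian value $u(M^*_f) = 2$; substituting into \Eqref{eq:pof} yields $\POF{\mathcal{M}}{u_{WF}} = \frac{L-2}{L}$.

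\textbf{Main obstacle.} The only real content is the second computation: I must confirm that the modest bonus $\gamma$ applied to a single highly-sensitized edge is enough to make the fair rule abandon the entire efficient $L$-cycle, i.e.\ that $2+\gamma$ strictly exceeds $L$. This is exactly where the hypothesis $\gamma \ge L-1$ enters, giving $2+\gamma \ge L+1 > L$. Beyond that, the remaining care is to verify the graph is genuinely rigid---no chords create shorter cycles among the $v_i$, and $h$ is reachable only through $v_1$---so that the case analysis over matchings is exhaustive and both $M^*$ and $M^*_f$ are uniquely pinned down.
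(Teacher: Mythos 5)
Your proposal is correct and is essentially the paper's own proof: the same gadget (an $L$-cycle of lowly-sensitized pairs plus a highly-sensitized vertex attached to $v_1$ by a $2$-cycle), the same two utility computations, and the same substitution into \Eqref{eq:pof}. If anything, your use of $\gamma \geq L-1$ to get the strict inequality $2+\gamma > L$ is slightly more careful than the paper's write-up, which invokes $\gamma \geq L-2$ and thereby glosses over the tie at $2+\gamma = L$.
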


\begin{theorem}\label{thm:weight-chain}
For any chain cap $R$ and $\gamma \geq R-1$, there exists a graph $G$ such that the price of fairness of $G$ under the weighted fairness rule is bounded by {\small $\POF{\mathcal{M}}{u_{WF}} \geq \frac{R-1}{R}$}.
\end{theorem}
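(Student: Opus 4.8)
The plan is to reuse the construction behind the lexicographic chain bound (Theorem~\ref{thm:lex-chain}) and to observe that, once $\gamma \geq R-1$, the weighted rule $u_{WF}$ behaves exactly like a strict lexicographic preference for $V_H$: a single transplant into a highly sensitized vertex has modified weight $(1+\gamma)\cdot 1 = 1+\gamma \geq R$, so it is worth at least as much as any chain of $R$ ordinary transplants into $V_L$. I would therefore engineer a graph in which the only way to serve the highly sensitized class is a maximally wasteful chain, and in which doing so cannibalizes the unique efficient structure.

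Concretely, I would take a single non-directed donor $n \in N$, one highly sensitized pair $h \in V_H$, and a ``spine'' of $R$ lowly sensitized pairs $\ell_1,\dots,\ell_R \in V_L$. The edge set consists only of the spine chain $n \to \ell_1 \to \cdots \to \ell_R$ together with the single edge $n \to h$; the donor of $h$ is made incompatible with everyone, so $h$ is a sink. First I would identify the utilitarian optimum: the spine is a legal chain of $R$ transplants, so $u(M^{*}) = R$. Next I would argue that $h$ can be matched only through $n \to h$ --- it has no other in-edge and cannot lie on a cycle --- and that this monopolizes the sole NDD, leaving the entire spine unreachable; hence every matching that serves $h$ has utilitarian value exactly $1$.

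The crux is then a short weighted comparison: $u_{WF}(\{n \to h\}) = 1+\gamma$ while $u_{WF}$ of the spine equals $R$, and since $\gamma \geq R-1$ gives $1+\gamma \geq R$, the weighted optimum $M^{*}_f$ selects the singleton chain $\{n \to h\}$ with $u(M^{*}_f) = 1$. Substituting into \Eqref{eq:pof} yields $\POF{\Mcal}{u_{WF}} = (R-1)/R$. I expect the main obstacle to be twofold: ensuring strict mutual exclusivity --- that no admissible matching can both capture $h$ and retain any of the $R$ efficient transplants, which is exactly what the single NDD and the dead-end donor of $h$ guarantee --- and disposing of the boundary case $\gamma = R-1$, where the two weighted values tie. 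I would settle the latter either by taking $\gamma > R-1$ (giving a unique fair optimum) or, at equality, by appealing to worst-case tie-breaking (equivalently, an infinitesimal reduction of one spine weight), so that the fair rule may return $\{n \to h\}$ and the stated lower bound $\POF{\Mcal}{u_{WF}} \geq (R-1)/R$ holds for every $\gamma \geq R-1$.
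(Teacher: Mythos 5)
Your proposal is correct and is essentially the paper's own proof: the paper likewise takes a single NDD with one edge to a highly sensitized sink $H$ plus a disjoint $R$-edge chain of lowly sensitized pairs, and observes that $1+\gamma \geq R$ makes the weighted rule select the 1-chain, giving $\POF{\Mcal}{u_{WF}} = (R-1)/R$. Your explicit treatment of the tie at $\gamma = R-1$ is a small refinement the paper leaves implicit, but the construction and argument are the same.
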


In the worst case, weighted fairness allows a price of fairness that approaches $1$ as the cycle and chain caps increase. The price of fairness also approaches $1$ as $\gamma$ increases.

\begin{theorem}
\label{thm:beta}
With no chain cap, there exists a graph $G$ such that the price of fairness of $G$ under the weighted fairness rule is bounded by {\small $\POF{\mathcal{M}}{u_{WF}} \geq \frac{\gamma}{\gamma+1}$}.
\end{theorem}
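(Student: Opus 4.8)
The plan is to construct an explicit graph $G$ with no chain cap on which the weighted fairness rule $u_{WF}$ forces the matching algorithm to chase a single highly-sensitized vertex at the expense of many lowly-sensitized matches, yielding the bound $\POF{\mathcal{M}}{u_{WF}} \geq \frac{\gamma}{\gamma+1}$. The key design principle is to create two competing structures that share a critical donor: one structure is utilitarian-optimal (matching many vertices in $V_L$), and the other includes exactly one highly-sensitized vertex in $V_H$ so that its incoming edge receives the multiplicative boost $(1+\gamma)$. By tuning the number of lowly-sensitized matches against the inflated weight of the single $V_H$ edge, I can make $u_{WF}$ strictly prefer the fair-but-inefficient structure while keeping the utilitarian value of that structure as small as desired relative to the optimum.

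First I would fix unit edge weights $w_e = 1$ throughout, so that under the modified weights every edge ending in $V_H$ contributes $1+\gamma$ and every other edge contributes $1$. Second, I would build a long chain (permitted since there is no chain cap) that matches $k$ lowly-sensitized patient-donor pairs and contributes utilitarian value proportional to $k$; this is the efficient matching $M^*$. Third, I would attach a single highly-sensitized vertex $h \in V_H$ reachable only through a donor that the long chain must also use, so that any matching containing $h$ must forgo the long chain. The fair matching $M^*_f$ then consists of a short structure terminating in $h$: under $u_{WF}$ its value is dominated by the boosted edge $(1+\gamma)$, which I would arrange to exceed $u_{WF}$ of the long chain, forcing the algorithm to select the short structure.

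The main obstacle is calibrating the two structures so the ratio works out to exactly $\frac{\gamma}{\gamma+1}$ rather than merely some bound tending to $1$. I expect the tight construction to have the fair matching $M^*_f$ capture only the single boosted edge into $h$ (utilitarian value $1$, weighted value $1+\gamma$) while the efficient matching $M^*$ captures $\gamma+1$ lowly-sensitized edges (utilitarian value $\gamma+1$, weighted value $\gamma+1$); the shared-donor gadget must make these two mutually exclusive. Then $u_{WF}(M^*_f) = 1+\gamma \geq \gamma+1 = u_{WF}(M^*)$, so the weighted rule selects $M^*_f$, and substituting into \Eqref{eq:pof} gives
\begin{equation*}
\POF{\mathcal{M}}{u_{WF}} = \frac{u(M^*) - u(M^*_f)}{u(M^*)} = \frac{(\gamma+1) - 1}{\gamma+1} = \frac{\gamma}{\gamma+1}.
\end{equation*}
The delicate part is verifying that no other legal matching in $\mathcal{M}$ achieves a higher $u_{WF}$ value (ruling out, in particular, matchings that combine part of the long chain with $h$), which requires the gadget's edge structure to genuinely enforce the either/or choice; this case analysis over $\mathcal{M}$ is where I expect the real work to lie.
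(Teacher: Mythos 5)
Your gadget is the right one---it is essentially the construction the paper uses (a single NDD whose chain can trigger either a highly-sensitized structure or a lowly-sensitized one, making the two mutually exclusive)---but your calibration has a genuine gap. You fix the fair side to a \emph{single} boosted edge and ask the efficient side to have exactly $\gamma+1$ unit edges. This fails for two reasons. First, $\gamma$ is a real parameter, so ``$\gamma+1$ lowly-sensitized edges'' is only meaningful when $\gamma$ is an integer; for non-integer $\gamma$ the best you can do against one boosted edge is a chain of $\lfloor\gamma\rfloor+1$ lowly-sensitized edges, giving $\POF{\mathcal{M}}{u_{WF}} = \frac{\lfloor\gamma\rfloor}{\lfloor\gamma\rfloor+1} < \frac{\gamma}{\gamma+1}$. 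Second, even for integer $\gamma$ your construction yields a \emph{tie}, $u_{WF}(M^*_f) = 1+\gamma = \gamma+1 = u_{WF}(M^*)$: the weighted rule is then not forced to select the fair matching (the maximizer in \Eqref{eq:pof} is ambiguous, and if the tie is broken toward the long chain the price of fairness is $0$). Forcing strict preference costs you an edge and drops the bound to $\frac{\gamma-1}{\gamma} < \frac{\gamma}{\gamma+1}$. So no fixed-size version of your construction reaches the claimed bound.

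The missing idea is to let \emph{both} sides scale. The paper attaches to the NDD a chain of $N$ highly-sensitized patients (weighted utility $(\gamma+1)N$, utilitarian utility $N$) and a chain of $N' = \floor*{(\gamma+1)N}-1$ lowly-sensitized patients (weighted and utilitarian utility $N'$). Then $(\gamma+1)N > N'$ holds strictly, so the weighted rule selects the highly-sensitized chain, and
\[
\POF{\mathcal{M}}{u_{WF}} \;=\; \frac{N'-N}{N'} \;\geq\; \frac{\gamma N - 2}{(\gamma+1)N-1} \;\longrightarrow\; \frac{\gamma}{\gamma+1} \quad \text{as } N \to \infty,
\]
so both the integrality loss and the edge sacrificed for strict preference vanish in the limit. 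As a side benefit, the case analysis you flagged as ``the real work'' becomes trivial in this gadget: with a single NDD and no cycles anywhere, every legal matching is a sub-chain of one of the two chains, and proper sub-chains are strictly dominated, so there are no hybrid matchings to rule out.
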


A similar result exists with cycles rather than chains.

\begin{theorem}
\label{thm:betacycle}
With no cycle cap there exists a graph $G$ such that the price of fairness of $G$ under the weighted fairness rule is bounded by {\small  $\POF{\mathcal{M}}{u_{WF}} \geq \frac{\gamma}{\gamma+1}$}.
\end{theorem}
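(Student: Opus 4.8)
The plan is to mirror the chain construction behind Theorem~\ref{thm:beta}, replacing the single highly-sensitized edge of a chain by a small cycle that touches one highly-sensitized vertex, and replacing the long efficient chain by a long efficient cycle. The essential mechanism is the same: a single lowly-sensitized ``hub'' vertex must choose between a long, all-lowly-sensitized cycle that is utilitarian-optimal, and a short cycle that reaches a highly-sensitized vertex and is therefore favored once the weight boost $(1+\gamma)$ is applied.

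Concretely, I would fix a hub vertex $a \in V_L$ and build two conflicting alternatives that share $a$. The efficient alternative $E$ is a cycle $a \to \ell_1 \to \cdots \to \ell_m \to a$ through lowly-sensitized vertices, with edge weights chosen so that its utilitarian value $u(E)$ equals $\gamma+1$; here the absence of a cycle cap is exactly what lets me grow the cycle to whatever length is needed to reach this value. The fair alternative $F$ is the $2$-cycle $a \to h \to a$, where $h \in V_H$; I give the forward edge $a\to h$ weight $1$ and the return edge $h \to a$ weight $0$, so that $u(F)=1$ but the boost on the edge into $V_H$ gives $u_{WF}(F) = (1+\gamma)\cdot 1 = \gamma+1$. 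Since $E$ and $F$ share the hub $a$, no legal matching can contain both, forcing the clearing problem to pick one.

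With these weights, $u_{WF}(E) = u(E) = \gamma+1 = u_{WF}(F)$, so the two alternatives tie under the weighted rule; breaking the tie toward the matching that serves the highly-sensitized vertex yields $M^*_f = F$ with $u(M^*_f) = 1$. The utilitarian optimum is instead $M^* = E$ with $u(M^*) = \gamma+1$. Substituting into \Eqref{eq:pof} gives $\POF{\mathcal{M}}{u_{WF}} = \frac{(\gamma+1)-1}{\gamma+1} = \frac{\gamma}{\gamma+1}$, as claimed.

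The step I expect to be delicate is pinning the bound at exactly $\gamma/(\gamma+1)$ rather than merely approaching it. This hinges on two points: that the return edge $h \to a$ contributes no efficiency (weight $0$), so that the fair cycle carries the same utilitarian value $1$ that a single highly-sensitized edge carries in the chain case; and that the weighted tie $u_{WF}(E)=u_{WF}(F)$ is resolved in favor of $F$. If one prefers strict preference and strictly positive weights, perturbing the edge into $h$ upward (or the efficient cycle's weights downward) by a small $\epsilon$ makes $F$ the unique weighted optimum and yields a price of fairness arbitrarily close to $\frac{\gamma}{\gamma+1}$; I would present the $\epsilon=0$ version with tie-breaking for the clean bound and note this limiting argument to cover the strict case.
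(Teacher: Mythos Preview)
Your argument is correct, and the tie-breaking/$\epsilon$-perturbation issue you flag is exactly the right caveat: with a strict perturbation you get graphs whose price of fairness is arbitrarily close to $\gamma/(\gamma+1)$, which is also all the paper actually establishes (its bound is obtained as a limit).

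Your construction, however, is genuinely different from the paper's. The paper keeps all edge weights equal to~$1$ and makes the hub a \emph{highly}-sensitized pair $H_0$: one cycle runs through $H_0,H_1,\dots,H_N$ (all in $V_H$), the other through $H_0,V_1,\dots,V_{N'}$ with $N'=\lfloor(\gamma+1)N\rfloor-1$, and the bound $\gamma/(\gamma+1)$ is obtained by sending $N\to\infty$. Your version instead puts the hub in $V_L$, uses real-valued (including zero) edge weights to hit $u(E)=\gamma+1$ and $u(F)=1$ exactly on a fixed small graph, and replaces the $N\to\infty$ limit with a tie (or an $\epsilon$-perturbation).

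Each approach buys something. The paper's unit-weight construction is closer to how kidney-exchange instances are usually modeled (positive, often integer, edge weights) and makes the role of the ``no cycle cap'' hypothesis transparent: you genuinely need long cycles to push the ratio toward $\gamma/(\gamma+1)$. Your approach is more economical---a constant-size graph suffices---but leans on two modeling choices you should call out explicitly: a zero-weight edge on $h\to a$, and a tie-breaking convention for the weighted rule. One small inconsistency worth cleaning up: once you allow arbitrary real edge weights, the length of $E$ is irrelevant (a $2$-cycle with total weight $\gamma+1$ would do), so the sentence ``the absence of a cycle cap is exactly what lets me grow the cycle'' is not really where the hypothesis is doing work in your version.
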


These bounds show that weighted fairness allows for a price of fairness that approaches $1$, i.e., arbitrarily bad, as the cycle cap, chain cap, or $\gamma$ increase. 

We have shown that the worst-case prices of fairness approach $1$ under both the lexicographic and weighted fairness rules of~\englishcite{Dickerson14:Price}. Next, we propose a rule that favors disadvantaged groups, but also strictly \emph{limits} the price of fairness using a parameter set by policymakers.

\section{Hybrid Fairness Rule}\label{sec:combined}

In this section, we present a hybrid fair utility function that balances lexicographic fairness and a utilitarian objective. We generalize the hybrid utility function proposed by \englishcite{Hooker12:Combining}, which chooses between a Rawlsian (or maximin) objective and a utilitarian objective for multiple classes of agents. 

\subsection{Utilitarian and Rawlsian Fairness}\label{ssec:util-rawls}

Consider two classes of agents that receive utilities $u_1(X)$ and $u_2(X)$, respectively, for outcome $X$. The fairness rule introduced by~\englishcite{Hooker12:Combining} maximizes the utility of the worst-off class, unless this requires taking too many resources from other classes. When the inequality exceeds a threshold $\Delta$ (i.e., $|u_1(X)-u_2(X)|>\Delta$) they switch to a utilitarian objective that maximizes $u_1(X) + u_2(X)$. The utility function for this rule is

\begin{figure*}[ht!]
\centering
\begin{subfigure}[Hybrid Rawlsian-Utilitarian]{\includegraphics[width=.3\linewidth] {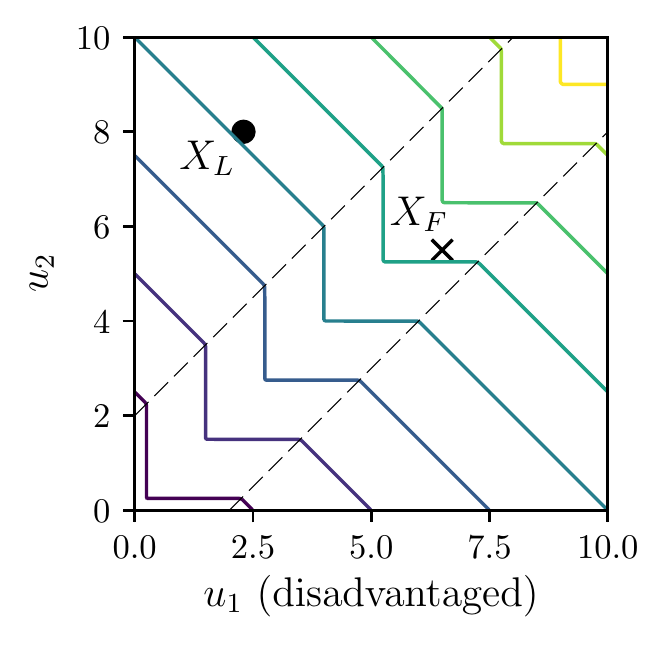}
   \label{fig:rawls}
 }%
\end{subfigure}\hfill
\begin{subfigure}[$\Delta$-Lexicographic ($u_{\Delta 1}$)]{\includegraphics[width=.3\linewidth] {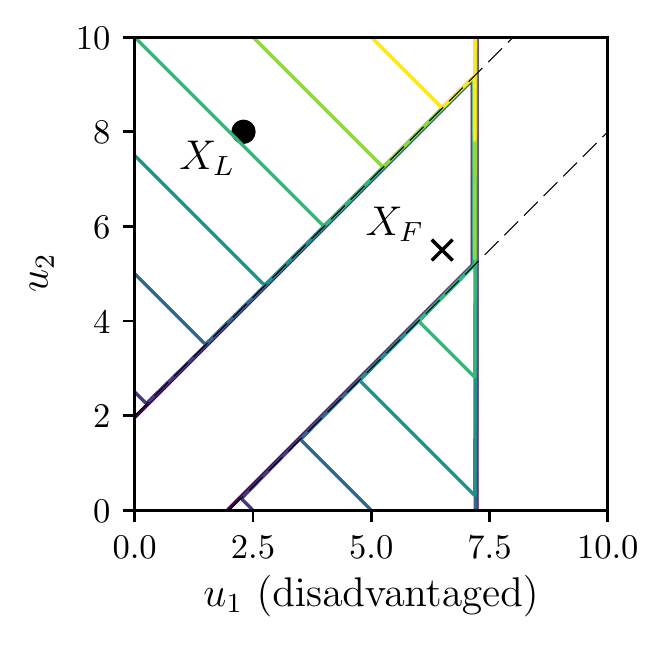}
   \label{fig:lex}
 }%
\end{subfigure}\hfill
 \begin{subfigure}[Relaxed $\Delta$-Lexicographic ($u_\Delta$)]{\includegraphics[width=.3\linewidth]{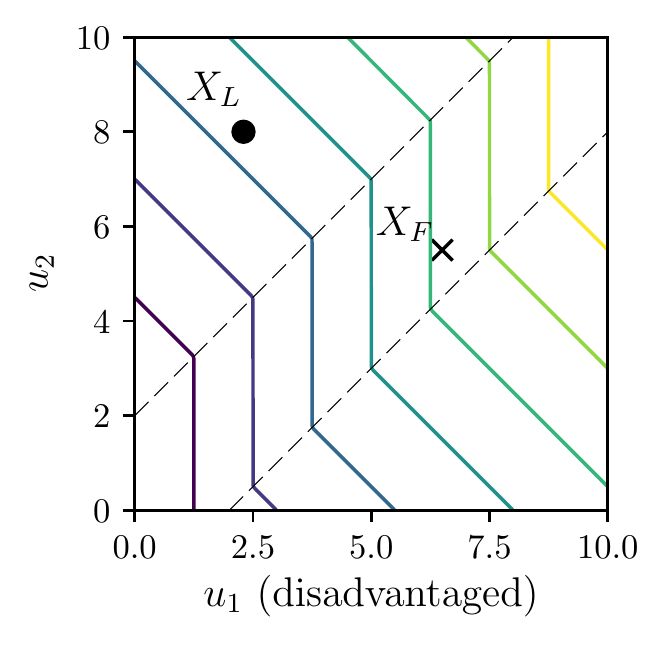}
   \label{fig:rlex}
 }%
\end{subfigure}%
\caption{Level sets for hybrid fair utility functions with $\Delta=2$, with example outcomes $X_L$ and $X_F$.}
\label{fig:contours}
\end{figure*}
{\small
\[
u_{\Delta}(X) =
\begin{cases}
2\min(u_1(X),u_2(X))+\Delta \\
\hspace{20pt}\text{if } |u_1(X)-u_2(X)| \leq \Delta \\
\\
u_1(X) + u_2(X) \\
\hspace{20pt}\text{otherwise.} 
\end{cases}
\]
}
The parameter $\Delta$ is problem-specific, and should be chosen by policymakers. Figure \ref{fig:rawls} shows the level sets of this utility function, with $\Delta =2$. This utility function can be generalized by switching to a different fairness rule in the \emph{fair region} (i.e. when $|u_1(X)-u_2(X)|\leq \Delta$). The next section generalizes this rule using lexicographic fairness.

\subsection{Hybrid-Lexicographic Rule}\label{sec:hybrid}

When it is desirable to favor one class of agents $g_1$ over class $g_2$, lexicographic fairness favors $g_1$. We propose a rule that implements lexicographic fairness only when inequality between groups does not exceed $\Delta$. This rule uses two steps: 1) determine whether inequality is small enough to use lexicographic fairness 2) choose the optimal outcome. These steps are outlined below, and formalized in Algorithm~\ref{alg1}.

\textbf{Step 1:} Find all outcomes that maximize a hybrid utility function, and determine whether lexicographic fairness is appropriate.

We use a utility function to identify outcomes that satisfy either a lexicographic or utilitarian objective. \Eqref{eq:hybrid1} shows one option for such a utility function, which assigns strict lexicographic utility ($\alpha=1$) according to \Eqref{eq:alex} in the fair region, and utilitarian utility otherwise. 
{\small
\begin{equation}\label{eq:hybrid1}
u_{\Delta 1}(X) =
\begin{cases}
u_1(X)+u_2(X)&\text{if } |u_1(X)-u_2(X)| \leq \Delta \\ 
 & \text{ and } u_1(X) = \max\limits_{X'\in \mathcal{X}}(u_1(X'))\\ 
 \\
u_1(X)+u_2(X) &\text{if } |u_1(X)-u_2(X)| >\Delta\\
\\
 0 &\text{otherwise.}
\end{cases}
\end{equation}
}
where $\mathcal{X}$ is the set of all possible outcomes. Figure~\ref{fig:lex} shows the contours $u_{\Delta 1}$. This utility function is clearly too harsh---it assigns zero utility to outcomes in the fair region that do not maximize $u_1$, and its optimal outcomes are not always Pareto efficient. Consider outcomes $X_F$ and $X_L$ in Figure~\ref{fig:lex}. $X_F$ is in the fair region but does not maximize $u_1$, so $u_{\Delta 1}(X_F)=0$; $X_L$ is in the utilitarian region but is less efficient, so $u_{\Delta 1}(X_L)= u(X_L)$. Under utility function $u_{\Delta 1}$, the less-efficient outcome $X_L$ is chosen over $X_F$.

To address this problem we introduce $u_{\Delta}$ in \Eqref{eq:rlex}, which relaxes $u_{\Delta 1}$. For outcomes in the fair region (that is, with $|u_1-u_2|\leq \Delta$), utility is assigned proportional to $u_1$. As shown in Figure~\ref{fig:rlex}, the contours of $u_{\Delta }$ are continuous. 

{\small
\begin{equation}
\label{eq:rlex}
u_{\Delta }(X) =
\begin{cases}
u_1(X)+u_2(X)-\Delta &\text{if } u_2(X)-u_1(X)>\Delta\\
2u_1(X) &\text{if } |u_1(X)-u_2(X)|\leq \Delta\\
u_1(X)+u_2(X)+\Delta &\text{if } u_1(X)-u_2(X)>\Delta
\end{cases}
\end{equation}
}

Let $X_{\mathit{OPT}}$ be the set of outcomes that maximize $u_{\Delta}$. If any outcomes in $X_{\mathit{OPT}}$ are in the utilitarian region , then any utilitarian-optimal outcome is selected. However, if any outcomes in $X_{\mathit{OPT}}$ are in the fair region, then Step 2 must be used. This process is described below, and formalized in Algorithm~\ref{alg1}.


\textbf{Step 2:} If any solution in $X_{\mathit{OPT}}$ is in the fair region, select the lexicographic-optimal solution in the fair region.

The utility function $u_{\Delta }$ assigns the same utility to all solutions in the fair region with the same $u_1(X)$, no matter the value of $u_2(X)$. However, if there exist two outcomes $X_A$ and $X_B$ such that $u_1(X_A)=u_1(X_B)$ and $u_2(X_A)>u_2(X_B)$, then $X_A$ is lexicographically preferred to $X_B$. 

{\small
\begin{algorithm}[H]                  
\caption{FairMatching}          
\label{alg1}                           
\textbf{Input:} Threshold $\Delta$, matchings $\Mcal$ \\
\textbf{Output:} Fair matching $M^*$ 
\begin{algorithmic}       
\State $\Mcal_{OPT} \leftarrow \arg\max_{M \in \mathcal{M}} u_{\Delta }(M)$
\If{$|\Mcal_{OPT}|>1$}
\State Select a matching $M\in \Mcal_{OPT}$
\If{$M$ is in the utilitarian region}
\State $M^* \leftarrow M$
\Else
\State $\Mcal_1 \leftarrow \{ M' \in \Mcal_{OPT} \mid u_1(M')=u_1(M) \}$
\State $M^* \leftarrow \arg\max_{M' \in \Mcal_1} u_2(M')  $
\EndIf
\Else
\State $M^* \leftarrow \Mcal_{OPT}$
\EndIf
\end{algorithmic}
\end{algorithm}
}

\subsection{Hybrid Rule for Several Classes}

We now generalize the hybrid-lexicographic fairness rule to more than two classes. Consider a set $\Pcal{}$ of classes $g_i$, $i=1,\dots,|\Pcal{}|$. Let there be an ordering $\succ$ over $g_i$, where $g_a \succ g_b$ indicates that $g_a$ should receive higher priority over $g_b$. WLOG, let the preference ordering over $g_i$ be $g_1 \succ g_2 \succ \dots \succ g_P$. Let $u_i(X)$ be the utility received by group $i$ under outcome $X$. As in the previous section, we 1) use a utility function to determine whether lexicographic fairness is appropriate, then 2) select either a lexicographic- or utilitarian-optimal outcome. 

\textbf{Step 1:} To define a utility function, we observe that in \Eqref{eq:rlex}, in the utilitarian region a positive offset $\Delta$ is added if $u_1(X)>u_2(X)$, and a negative offset is added otherwise. With $|\Pcal{}|$ classes, each solution in the utilitarian region receives a utility offset of $+\Delta$ if $u_1(X)>u_i(X)$, and $-\Delta$ otherwise, for each class $i=2,3,\dots,|\Pcal{}|$. As in the previous section, these offsets ensure continuity in the utility function, and ensure that at least one of the maximizing outcomes will be Pareto optimal.


{\small
\begin{equation}
\label{geneq}
u_{\Delta }(X) =  
\begin{cases}
|\Pcal{}|\cdot u_1(X) \\
\hspace{20pt} \text{if } \max_i(u_i(X))-\min_i(u_i(X))\leq \Delta,  \\
\\
u_1(X)+\sum_{i=2}^{|\Pcal{}|} (u_i(X) + \sgn(u_1(X)-u_i(X) )\Delta) \\
\hspace{20pt} \text{otherwise } \\ 
\end{cases}
\end{equation}
}

\textbf{Step 2:} Let $X_{\mathit{OPT}}$ be the set of solutions that maximize $u_{\Delta}$. If all optimal solutions are in the utilitarian region, any utilitarian-optimal solution is selected. If any optimal solution is in the fair region, then the lexicographic-optimal solution in the fair region must be selected, subject to the preference ordering $g_1 \succ g_2 \succ \dots \succ g_{|\Pcal{}|}$. 

{\small
\begin{algorithm}[H]
\caption{FairMatching for $|\Pcal{}| \geq 2$ classes}          
\label{alg2}                           
\textbf{Input:} Threshold $\Delta$, matchings $\Mcal$ \\
\textbf{Output:} Fair matching $M^*$ 
\begin{algorithmic}       
\State $\Mcal_{OPT} \leftarrow \arg\max_{M \in \Mcal} u_{\Delta }(M)$
\If{$|\Mcal_{OPT}|>1$}
\State Select a matching $M\in M_{OPT}$
\If{$M$ in utilitarian region}
\State $M^* \leftarrow M$
\Else 
\State $\Mcal_1 \leftarrow \{ M' \in \Mcal_{OPT} \mid u_1(M')=u_1(M) \}$
\For{$i=2,\dots,|\Pcal{}|$}
\State $\Mcal_i \leftarrow \arg\max_{M' \in \Mcal_{i-1}} u_i(M') $
\EndFor
\State $M^* \leftarrow$ any matching in $\Mcal_{|\Pcal{}|}$
\EndIf
\Else
\State $M^* \leftarrow \Mcal_{OPT}$
\EndIf
\end{algorithmic}
\end{algorithm}
}

\subsection{Price of Fairness for the Hybrid-Lexicographic Rule}

Theorem~\ref{thm:hybridpof} gives a bound on the price of fairness for the hybrid-lexicographic rule; its proof is given in \Appref{sec:poffairness}.

\begin{theorem}
\label{thm:hybridpof}
Assume the optimal utilitarian outcome $X_E$ receives utility $u(X_E)=u_E$, with most prioritized class $g_1 \in \Pcal{}$ receiving utility $u_1$, and $Z$ other classes $g_i \in \Pcal{}$ such that $u_1(X_E)>u_i(X_E)$. Then, $ \POF{\mathcal{M}}{u_{\Delta }} \leq \frac{2((|\Pcal{}|-1)-Z)\Delta}{u_E}$.
\end{theorem}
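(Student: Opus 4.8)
The plan is to lower-bound the utilitarian value $u(X^*)$ of the outcome $X^*$ returned by Algorithm~\ref{alg2}, since $\POF{\Mcal}{u_\Delta} = (u_E - u(X^*))/u_E$ and it therefore suffices to prove $u_E - u(X^*) \le 2((|\Pcal|-1)-Z)\Delta$. Write $n = |\Pcal|$. The only property of $X^*$ I would use is that it lies in $X_{\mathit{OPT}}$, so $u_\Delta(X^*) \ge u_\Delta(Y)$ for every outcome $Y$, and in particular $u_\Delta(X^*) \ge u_\Delta(X_E)$; the lexicographic tie-breaking of Step~2 can only raise $u(X^*)$ and so is irrelevant to a lower bound.

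First I would evaluate $u_\Delta(X_E)$. The hypothesis that exactly $Z$ of the classes $g_2,\dots,g_n$ satisfy $u_1(X_E) > u_i(X_E)$ is precisely a statement about the sign pattern in the utilitarian branch of \Eqref{geneq}, so I would work in that branch, where $u_\Delta(X_E) = u_E + \Delta\sum_{i=2}^{n} \sgn(u_1(X_E)-u_i(X_E))$. The $Z$ classes below $g_1$ each contribute $+\Delta$, while each of the remaining at most $(n-1)-Z$ classes contributes at least $-\Delta$; hence $u_\Delta(X_E) \ge u_E + Z\Delta - ((n-1)-Z)\Delta = u_E + (2Z-(n-1))\Delta$. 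This is the step that introduces the $Z$-dependence and is the heart of the bound.

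Next I would establish a uniform ``decoding'' inequality $u(Y) \ge u_\Delta(Y) - (n-1)\Delta$ valid for every outcome $Y$, checking the two branches of \Eqref{geneq} separately. In the utilitarian branch it is immediate, since the summed sign offsets total at most $(n-1)\Delta$. In the fair branch $u_\Delta(Y) = n\,u_1(Y)$, while $\max_i u_i(Y) - \min_i u_i(Y) \le \Delta$ forces $u_i(Y) \ge u_1(Y) - \Delta$ for all $i$, so $u(Y) = \sum_i u_i(Y) \ge n\,u_1(Y) - (n-1)\Delta$. Applying this to $X^*$ and chaining with the two previous estimates gives
\[
u(X^*) \ge u_\Delta(X^*) - (n-1)\Delta \ge u_\Delta(X_E) - (n-1)\Delta \ge u_E - 2\big((n-1)-Z\big)\Delta,
\]
whence $\POF{\Mcal}{u_\Delta} = (u_E - u(X^*))/u_E \le 2((n-1)-Z)\Delta/u_E$ follows at once.

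The main obstacle I anticipate is the implicit assumption that $X_E$ lies in the utilitarian region: the sharp lower bound on $u_\Delta(X_E)$ above is exactly the utilitarian-branch computation, and if the utilitarian optimum instead fell in the fair region the sign count $Z$ would only yield the weaker estimate $u_\Delta(X_E) = n\,u_1(X_E) \ge u_E - ((n-1)-Z)\Delta$, degrading the final bound. I would therefore either argue that the theorem's hypotheses place $X_E$ in the utilitarian region (the case of interest, where the efficient outcome is genuinely unfair) or dispatch the fair-region case for $X_E$ separately. A secondary bookkeeping point is the treatment of ties $u_1(X_E) = u_i(X_E)$, for which $\sgn$ contributes $0$; since ties only help, the worst case is the one used above in which all $(n-1)-Z$ non-dominated classes lie strictly above $g_1$.
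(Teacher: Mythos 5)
Your proof is correct, and it is essentially the paper's argument carried out more carefully. Both proofs extract the $Z$-dependence by evaluating $u_{\Delta}(X_E)$ in the utilitarian branch of \Eqref{geneq}, obtaining $u_E+(2Z-(|\Pcal{}|-1))\Delta$, and both pay a cost of $(|\Pcal{}|-1)\Delta$ to convert the $u_{\Delta}$-value of the selected outcome back into utilitarian utility via the fair-region width. Where you differ is robustness: the paper analyzes only the worst case in which the selected outcome lies in the fair region and beats $X_E$ by an infinitesimal $\epsilon$, whereas your decoding inequality $u(Y)\geq u_{\Delta}(Y)-(|\Pcal{}|-1)\Delta$, verified in both branches, also covers selected outcomes in the utilitarian region, and your treatment of ties ($\sgn=0$) is explicit rather than glossed over. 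Finally, the obstacle you flagged is genuine and is resolved in the paper exactly as you anticipated: its proof begins by positing that $X_E$ lies ``in the utilitarian regime,'' a hypothesis the theorem statement itself omits. That hypothesis is necessary, not cosmetic: with $|\Pcal{}|=2$ and $\Delta=1$, an optimum $X_E$ with $(u_1,u_2)=(10,9.5)$ lies in the fair region and has $Z=1$, so the claimed bound is $0$; yet it is beaten under $u_{\Delta}$ by an outcome with $(u_1,u_2)=(10.2,9.2)$, whose utilitarian utility is strictly smaller, so the price of fairness is strictly positive. Your proposed fix---assume $X_E$ is in the utilitarian region, or dispatch the fair-region case separately with the weaker bound you derived---is the right one.
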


\subsection{Hybrid Fairness in Kidney Exchange}

The hybrid-lexicographic fairness rule in \Eqref{eq:rlex} is easily applied to kidney exchange, with $u_H$ and $u_L$ the total utility received by highly-sensitized and lowly-sensitized patients, respectively,

{\small
\begin{equation}
u_{\Delta}(M)=
\begin{cases}
u_L(M)+u_H(M)-\Delta &\text{if } u_L(M)-u_H(M)>\Delta\\
2u_H(M) &\text{if } |u_L(M)-u_H(M)|\leq \Delta\\
u_L(M)+u_H(M)+\Delta &\text{if } u_H(M)-u_L(M)>\Delta
\end{cases}
\label{utilfunction}
\end{equation}
}

In the following section, we demonstrate the practical effectiveness of the hybrid-lexicographic rule by testing it on real kidney exchange data.

\section{Experiments}\label{sec:experiments}
In this section, we compare the behavior of $\alpha$-lexicographic, weighted, and hybrid-lexicographic fairness. All code for these experiemnts are available on GitHub.\footnote{https://github.com/duncanmcelfresh/FairKidneyExchange} We use each rule to find the optimal fair outcomes for $314$ real kidney exchanges from the United Network for Organ Sharing (UNOS), collected between 2010 and 2016. To solve the kidney exchange clearing problem (KEP) we use the PICEF formulation introduced by \englishcite{Dickerson16:Position}, with cycle cap $3$ and various chain caps. In real exchanges, not all recommended edges in a matching result in successful transplants. To reflect this uncertainty, we use the concept of failure-aware kidney exchange introduced in \cite{Dickerson13:Failure}: all edges in the exchange can fail with probability $(1-p)$; the matching algorithm maximizes \textit{expected} matching weight, considering edge success probability $p$.

\subsection{Procedure}

For each UNOS exchange graph $G$, we use the following procedure to implement each fairness rule. We repeat the following procedure for chain caps $0$, $3$, $10$, and $20$, and for edge success probabilities $p=0.1 n$, with $n=1,2,\dots,10$.

\begin{enumerate} 
\item Find the efficient matching $M_E$ by solving the to optimality the NP-hard kidney exchange problem (KEP) on $G$. 
\item Find the fair matching $M_F$ by solving the KEP on $G'=(V,E')$, where each edge $e \in E'$ has weight 1 if $e$ ends in $V_H$ and 0 otherwise.
\item \textbf{Weighted Fairness:} Find the $\gamma$-fair matching $M_\gamma$ by solving the KEP on $G^\gamma=(V,E^\gamma)$, where each edge $e\in E^\gamma$ has weight $1+\gamma$ if $e$ ends in $V_H$ and 1 otherwise. After finding $M_\gamma$, the reported utilities are calculated using edge weights of $E$ and not $E'$. We use weight parameters $\gamma = 2 n$, with $n=0,1,2,\dots,10$.
\item \textbf{$\alpha$-Lexicographic Fairness:} Find the $\alpha$-fair matching $M_\alpha$ by solving the KEP on $G$, with the additional constraint $u_H(M_\alpha)\geq \alpha u_H(M_E)$. We use parameters $\alpha = 0.1 n$, with $n=0,1,2,\dots,10$.
\item \textbf{Hybrid-Lexicographic Fairness:} Find the $\Delta$-fair matching $M_\Delta$ using the $\alpha$-fair matchings $M_\alpha$, and Algorithm \ref{alg1}. That is, $M_\Delta=\mathrm{FairMatching}(\Delta,M_\alpha)$. We use parameters $\Delta = 0.1 n\cdot u(M_E)$, with $n=0,1,2,\dots,10$.
\end{enumerate}

\begin{figure*}[ht!]
\centering
\includegraphics[width=\textwidth] {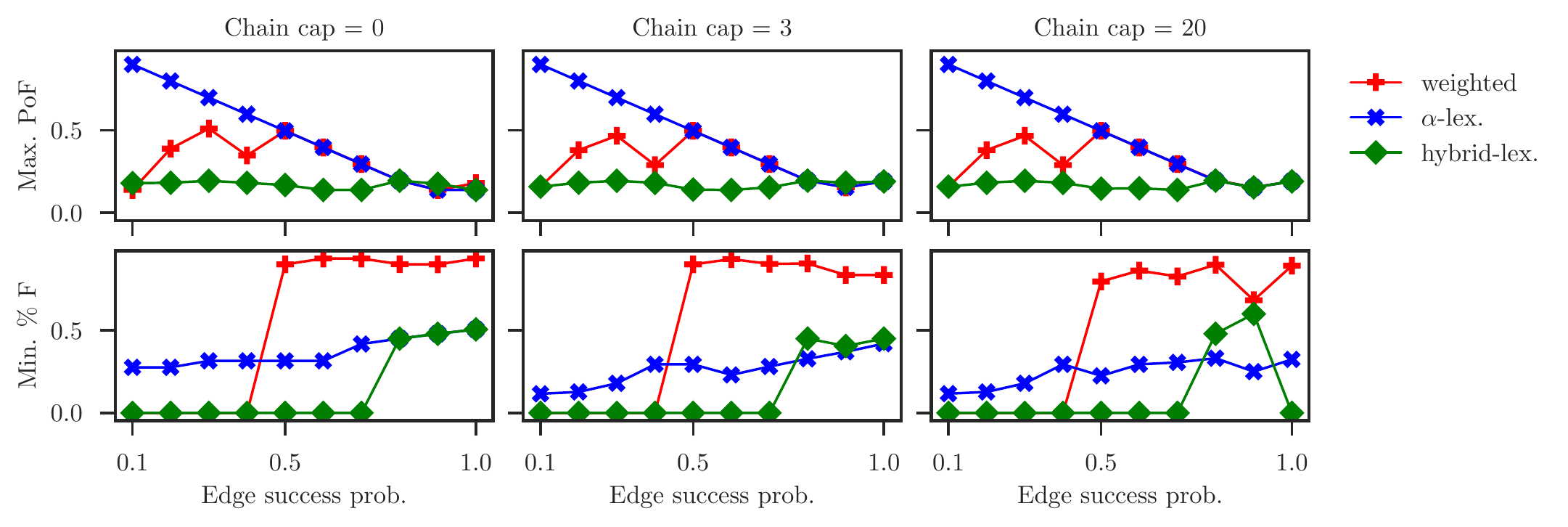}
\caption{Worst-case price of fairness and $\% F$ for various edge success probabilities, and fairness parameters $\alpha=0.1$, $\gamma=0.1$, $\Delta = 0.1 u(M_E)$. }
                    
   \label{fig:pof-minparam}
\end{figure*}

Throughout this procedure, we calculate the utility of the efficient matching ($u_E$) and the fair matching ($u_F$) for each UNOS graph, and for each fairness rule---with parameters $\alpha \in [0,1]$, $\gamma \in [0,20]$, and $\Delta \in [0, u(M_E)]$. 


There are two important outcomes of each fairness rule: Price of Fairness (PoF), and fraction of the fair score ($\% F$). To calculate PoF we use the definition in \Eqref{eq:pof}, using $u_E$ and $u_F$. We define $\% F$ as the fraction of the maximum highly sensitized utility, achieved by $M_{\{\alpha,\gamma,\Delta\}}$, defined as
$$\% F(M_{\{\alpha,\gamma,\Delta\}}, M_F) = u_H(M_{\{\alpha,\gamma,\Delta\}}) / u_H(M_F).$$
PoF and $\% F$ indicate the efficiency loss and the fairness of each rule, respectively. 

\subsection{Results and Discussion}

Each fairness rule offers a parameter that balances efficiency and fairness. Two of these rules guarantee a certain outcome: $\alpha$-lexicographic guarantees fairness, but allows high efficiency loss, while hybrid-lexicographic bounds overall efficiency loss.  Weighted fairness makes no guarantees. 

The price of fairness can be high in real exchanges, especially when edge success probability $p$ is small. In failure-aware kidney exchange, cycles and chains of length $k$ receive utility proportional to $p^k$. Fair matchings often use longer cycles and chains than the efficient matching, in order to reach highly sensitized patients; this leads to a high price of fairness when $p$ is small.



Even when $\alpha$ and $\gamma$ are small, there are cases when both $\alpha$-lexicographic and weighted fairness allow for a high PoF. This becomes worse with lower edge probability. Figure~\ref{fig:pof-minparam} shows the worst-case PoF and $\% F$ for each rule, for the smallest parameters tested, for a range of edge success probabilities. \Appref{sec:appendixresults} contains results for all parameter values tested.

Hybrid-lexicographic fairness limits PoF within the guaranteed bound of $0.2$; this comes at the cost of a low $\% F$---when edge success probability is small, hybrid-lexicographic fairness awards zero fair utility in the worst case. $\alpha$-lexicographic fairness produces the opposite behavior: $\% F$ is always larger than the guaranteed bound of $0.1$, but the worst-case price of fairness grows steadily as edge probability decreases.

Theory suggests that the price of fairness is small on denser random graphs (see Section~\ref{sec:randomgraph}). We empirically confirm this theoretical finding by calculating the worst-case price of fairness and $\% F$ for random graphs of various sizes generated from real data; these results are given in 
\Secref{sec:appendixresults}. In this case---when the price of fairness is small---$\alpha$-lexicographic fairness may be appropriate, as overall efficiency loss is not severe.

Both $\alpha$-lexicographic and hybrid-lexicographic fairness are useful, depending on the desired outcome. Policymakers may choose between these rules, and set the parameters $\alpha$ and $\Delta$ to guarantee either a minimum $\% F$ or a maximum price of fairness. 


\section{Conclusion}\label{sec:conclusion}
We addressed the classical problem of balancing fairness and efficiency in resource allocation, with a specific focus on the kidney exchange application area. Extending work by~\englishcite{Ashlagi14:Free} and~\englishcite{Dickerson14:Price}, we show that the theoretical price of fairness is small on a random graph model of kidney exchange, when both cycles and chains are used. However this model is too optimistic---real kidney exchanges are less certain and more sparse, and in reality the price of fairness can be unacceptably high.

Drawing on work by~\englishcite{Hooker12:Combining}, which is not applicable to kidney exchange, we provided the first approach to incorporating fairness into kidney exchange in a way that prioritizes marginalized participants, but also comes with acceptable worst-case guarantees on overall efficiency loss.  Furthermore, our method is easily applied as an objective in the mathematical-programming-based clearing methods used in today's fielded exchanges. Using data from a large fielded kidney exchange, we showed that our method bounds efficiency loss while also prioritizing marginalized participants when possible.

Moving forward, it would be of theoretical and practical interest to address fairness in a realistic \emph{dynamic} model of a matching market like kidney exchange~\cite{Anshelevich13:Social,Akbarpour14:Dynamic,Anderson15:Dynamic,Dickerson15:FutureMatch}.  For example, how does prioritizing a class of patients in the present affect their, or other groups', long-term welfare?  Similarly, exploring the effect on long-term efficiency of the single-shot $\Delta$ we use in this paper would be of practical importance; to start, $\Delta$ can be viewed as a hyperparameter to be tuned~\cite{Thornton13:Auto}.

\bibliographystyle{named}
\bibliography{refs,bib}

\clearpage
\appendix
\section{Price of Fairness in the Random Graph Model}\label{sec:pofrandomgraph}

\englishcite{Ashlagi14:Free} characterize efficient matchings in a random graph model without chains, and \englishcite{Dickerson14:Price} build on this to show that the price of fairness without chains is bounded above by $2/33$. \englishcite{Dickerson12:Optimizing} extend the efficient matching of~\englishcite{Ashlagi14:Free} to include chains, but do not calculate the price of fairness.  We close the remaining theory gap regarding the price of fairness with chains.  \Appref{sec:randgraphmodel} describes the random graph model, and \Appref{sec:pofwithchains} presents the theoretical price of fairness with chains.

\subsection{Random Graph Model}\label{sec:randgraphmodel}

Let all patient-donor pairs $P$ be partitioned into subsets $V^\XY$ for each patient blood type $X$ and donor blood type $Y$. These subsets will be further partitioned into lowly- and highly sensitized pairs $V^\XY_L$ and $V^\XY_H$. Let $\mu_X$ be the fraction of both patients and donors of each blood type $X$.  

Let $N^X$ be the set of NDDs of blood type $X$. Let $\beta |P|$ be the total number of NDDs, with the same blood type distribution as patients. That is, $|N^X|=\beta \mu_X  |P|$, with $X\in \{A,B,AB,O\}$.

Patient-donor vertices may be blood-type compatible, but will not be connected by a directed edge due to tissue-type incompatibility. Let $\pbar$ be the fraction of patient-donor pairs that are blood-type-compatible, but tissue-type-incompatible.

We refer to certain blood-type vertex subsets of as follows:

\begin{enumerate}
\item $V^\AB$ and $V^\BA$: reciprocal pairs
\item $V^\XX$: self-demanded pairs
\item $V^\ABB,V^\ABA,V^\ABO,V^\AO,V^\BO$: over-demanded pairs
\item $V^\AAB,V^\BAB,V^\OA,V^\OB,V^\OAB$: under-demanded pairs
\end{enumerate}

To reflect real-world exchanges, assume $\pbar > 1-\lambda$,  $\mu_\O>\mu_\A>\mu_\B>\mAB$, and $\pbar<2/5$. WLOG, let $|V^\AB|>|V^\BA|$, and assume that the absolute difference between these pools grows sublinearly with the size of the exchange, that is $|V^\AB|-|V^\BA|=o(n)$.

\subsection{The Price of Fairness With Chains} \label{sec:pofwithchains}

We calculate the price of fairness in this model by exploring all of the possible ways that the efficient matching can proceed, which depends on $\beta$. We state without proof that there are only four possible matchings with nonzero price of fairness, and several matchings with zero price of fairness. It is tedious, but straightforward, to confirm this statement, using the assumptions made while constructing these matchings. Figure \ref{fig:matchings} shows each possible matching on this model, and some of the impossible matchings.
 
\begin{figure*}[ht!]
\centering
\includegraphics[width=\linewidth] {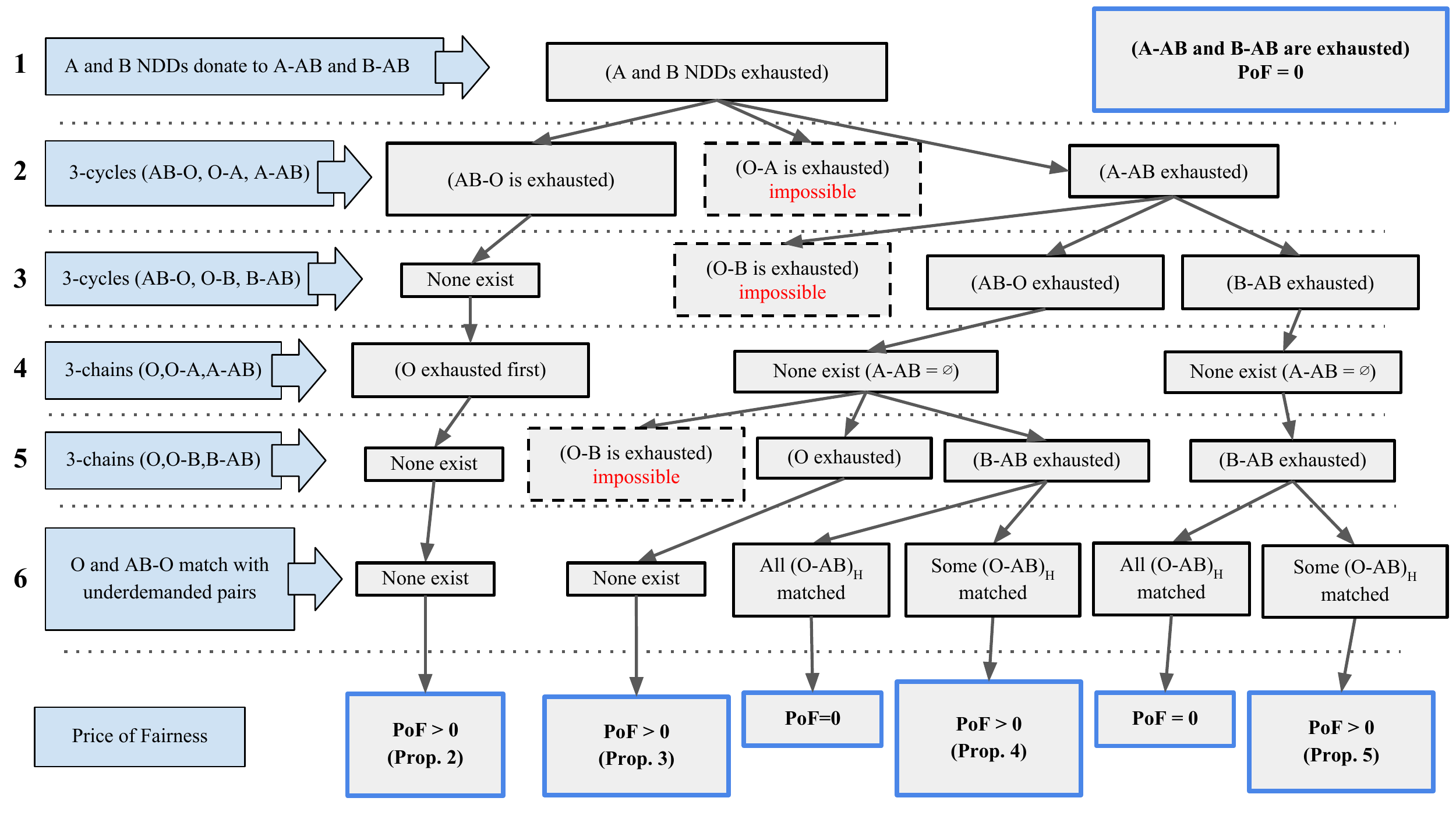}
\caption{All possible matchings on the random graph model. Boxes with blue borders represent the matching outcomes, and boxes with black borders represent intermediate steps in each matching. Some of the impossible matchings are shown as boxes with dashed black borders.}
\label{fig:matchings}
\end{figure*} 

Propositions \ref{prop:smallbeta}, \ref{prop:midbeta11}, \ref{prop:midbeta13}, and \ref{prop:midbeta3} give the price of fairness for each of the four matchings with nonzero price of fairness; for each of these cases, $\beta<\mAB(1-\pbar)$. Proposition \ref{prop:largebeta} states that the price of fairness is zero when $\beta>\mAB(1-\pbar)$.

In all of these matchings, the price of fairness is bounded above by the price of fairness without NDDs, found by \englishcite{Dickerson14:Price}; Theorem \ref{thm:pofdecreases} states this finding, which uses by Lemmas \ref{lem:prop12-bounded} and \ref{lem:prop34-bounded}.

Theorem \ref{thm:maxbeta} states that the price of fairness is zero when $\beta>1/8$, and Lemmas \ref{lem:maxbeta-small}, \ref{lem:maxbeta-mid11}, \ref{lem:maxbeta-mid13}, and \ref{lem:maxbeta-mid3} give bounds on $\beta$ for each matching with nonzero price of fairness.

We start with the efficient matching proposed in~\cite{Dickerson12:Optimizing} using cycles and chains up to length 3. This matching may proceed in many different ways, depending on $\beta$. However, most outcomes are impossible based on the canonical assumptions for the random graph model. Figure \ref{fig:matchings} shows all possible ways that the matching can proceed.

Lemma \ref{lem:underdemand} states that even without chains, all highly-sensitized patients except for those in $V^\OAB$ are matched in the efficient matching, only using cycles; this Lemma will be used in all following propositions.

\begin{lemma}\label{lem:underdemand}

Denote by $\Mcal{}$ the set of matchings in $G(n)$ using cycles and chains up to length 3. As $n \rightarrow \infty$, a.s. all highly sensitized pairs can be matched with no efficiency loss under the lexicographic fairness rule, except for those of type $\OAB$.\\
\\
(This Lemma uses the same efficient matching introduced by Dickerson~\cite{Dickerson12:Optimizing}.)
\end{lemma}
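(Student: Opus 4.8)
The plan is to prove the lemma by post-processing the efficient matching of \englishcite{Dickerson12:Optimizing} rather than constructing a fresh one. First I would invoke its asymptotic characterization (itself built on \englishcite{Ashlagi14:Free}): up to an $o(n)$ remainder, the efficient matching matches every self-demanded and over-demanded pair, matches all reciprocal pairs except the sublinear surplus $|V^\AB|-|V^\BA|=o(n)$, and matches the under-demanded types $\OA,\OB,\OAB,\AAB,\BAB$ only up to the capacity supplied by the over-demanded pool (which, because over-demanded pairs are blood-type compatible and hence enter only at rate $\pbar$, is strictly smaller than the under-demanded population under $\mu_\O>\mu_\A>\mu_\B>\mAB$). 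So a linear fraction of under-demanded pairs is necessarily unmatched in any efficient matching; the content of the lemma is that this shortfall can be confined to type $\OAB$ while every highly sensitized pair of every other type is matched, and that this can be done using cycles alone.

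Second, I would establish a sensitization-interchangeability claim inside each blood-type class $V^\XY$. Since sensitization is assigned independently of the tissue-type compatibility that generates edges, and since $\pbar>1-\lambda$ forces every blood-type-feasible edge to be present with probability bounded away from $0$, the highly- and lowly-sensitized members $V^\XY_H$ and $V^\XY_L$ are statistically interchangeable as cycle participants. Thus whenever the efficient matching routes a lowly-sensitized pair of some type through a $2$- or $3$-cycle, I can a.s.\ substitute a not-yet-matched highly sensitized pair of the same type: the substitute needs only the same constant number of blood-type-feasible incident edges, and a first-moment/union bound over the linearly many substitution requests shows a suitable substitute exists for all of them with probability $1-o(1)$. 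This lets me assume, at no change in cardinality and hence no efficiency loss, that within each type the matching prioritizes members of $V_H$.

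Third, I would confine the residual to $\OAB$ by a domination/exchange argument. In any cycle an $\OAB$ pair absorbs an incoming O donor (its O patient is the most demanding recipient) and emits an AB donor usable only by an AB patient (the least useful donor), so every occupied $\OAB$ slot has the form $[\text{O donor}]\to \OAB \to[\text{AB patient}]$. Every other under-demanded type \emph{dominates} $\OAB$ in such a slot: $\OA$ and $\OB$ take the same O-donor input while their A/B donors still feed an AB patient, and $\AAB,\BAB$ emit the same AB donor while their A/B patients also accept an O donor. Hence whatever slot an $\OAB$ pair occupies can be occupied by a pair of any other under-demanded type, so if a non-$\OAB$ under-demanded pair is unmatched while an $\OAB$ pair is matched, I can swap them at equal utility; iterating confines the unmatched under-demanded set to type $\OAB$. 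Combined with the interchangeability step, the resulting matching includes every highly sensitized pair of the over-, self-, reciprocal, and non-$\OAB$ under-demanded types, attains the maximum possible highly sensitized utility, and keeps the efficient total utility—so it is exactly the outcome selected by the lexicographic fairness rule, establishing zero efficiency loss.

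I expect the main obstacle to be the quantitative half of the third step: verifying, through the blood-type supply/demand arithmetic under the stated assumptions ($\mu_\O>\mu_\A>\mu_\B>\mAB$, $\pbar<2/5$), that the cycle capacity available for each non-$\OAB$ under-demanded type a.s.\ exceeds the number of its highly sensitized members, so that the exchange argument never runs out of matched $\OAB$ pairs to trade away before all non-$\OAB$ highly sensitized pairs are matched. This is the ``tedious but straightforward'' case analysis referenced in the text, and it must be carried out for the specific $2$- and $3$-cycle structures of the \englishcite{Dickerson12:Optimizing} matching; by contrast the probabilistic interchangeability argument is routine once the edge-probability lower bound from $\pbar>1-\lambda$ is invoked.
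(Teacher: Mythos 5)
Your high-level plan---post-processing an arbitrary efficient matching via within-type substitutions and an $\OAB$-domination exchange argument, rather than re-running the constructive matching of \englishcite{Dickerson12:Optimizing} step by step as the paper does---could in principle be made to work, but as written it has a genuine gap: everything that actually proves the lemma is deferred. The paper's proof consists precisely of the ``quantitative half'' you postpone as tedious but straightforward: it walks through the six initial steps of the efficient matching and, for each under-demanded type other than $\OAB$, checks that the over-demanded partners outnumber the highly sensitized pairs of that type, e.g.\ $|V_H^{\OA}| \propto (1-\lambda)\mu_\A\mu_\O < \pbar\mu_\A\mu_\O \propto |V^{\AO}|$, and similarly for $\OB$, $\AAB$, $\BAB$. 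Every one of these checks hinges on the assumption $\pbar > 1-\lambda$. Yet when you scope your deferred verification you cite only $\mu_\O>\mu_\A>\mu_\B>\mAB$ and $\pbar<2/5$, which cannot suffice (they say nothing about the sensitized fraction $1-\lambda$), and you never identify $\pbar>1-\lambda$ as what drives the capacity count. So the substance of the proof is absent from the proposal, and the one place it is gestured at is attributed to hypotheses under which it would fail.

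Second, the justification of your interchangeability step is wrong. Sensitization is not ``assigned independently of the tissue-type compatibility that generates edges'': in this model, being highly sensitized \emph{means} having a low probability of tissue-type compatibility---that is exactly why $V_H$ is hard to match and why a nonzero price of fairness exists at all; were your independence claim true, the lemma would be nearly vacuous. Likewise, $\pbar>1-\lambda$ compares two population fractions (blood-compatible-but-tissue-incompatible pairs versus highly sensitized pairs); it is not, and does not imply, a lower bound on edge probabilities. What actually licenses substituting members of $V_H$ into cycle slots is the density of the random graph model (constant compatibility probabilities, so linear-size bipartite subgraphs a.s.\ contain saturating matchings), combined with the counting above. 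A smaller slip: your closing claim that the resulting matching ``is exactly the outcome selected by the lexicographic fairness rule'' is false---the lexicographic rule additionally matches pairs in $V_H^{\OAB}$ by breaking 3-cycles of the form ($\ABO$, $\OA$, $\AAB$) into 2-cycles ($\ABO$, $\OAB$) at efficiency cost, which is precisely the source of the nonzero price of fairness in Propositions \ref{prop:smallbeta} through \ref{prop:midbeta3}. The lemma only needs the existence of an efficient matching covering all non-$\OAB$ highly sensitized pairs, which is what your construction, once the counting is supplied, would give.
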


\begin{proof}[sketch] 

Begin with the efficient matching $M^*$ using only cycles up to length 3, proposed by Dickerson in~\cite{Dickerson14:Price}. $M^*$ matches all over-demanded and self-demanded vertices with high probability, but leaves some under-demanded vertices unmatched. We proceed through the initial steps of matching $M^*$ to show that \textit{all} vertices in $V_H^{\OA}$, $V_H^{\OB}$, $V_H^{\AAB}$, and $V_H^{\BAB}$ are matched.

\begin{enumerate}
\item Match all vertices in $V^\BA$ in 2-cycles with $V^\AB$, exhausting $V^\BA$ and leaving $|V^\AB|\propto o(n)$. 

\item Match all remaining vertices in $V^\AB$ in 3-cycles with $V^\BO$ and $V^\OA$. There are only $|V^\AB|\propto o(n)$ of these cycles, which will become negligible to the price of fairness as $n \rightarrow \infty$.

\item Match all remaining vertices in $V^\AO$ in 2-cycles with $V^\OA$. Note that $|V^\AO|\propto \pbar\mu_A \mu_O$ and $|V^\OA| \propto \mu_A \mu_O$. The $V^\AO$ vertices are exhausted first if $|V^\AO|<|V^\OA|$, which holds almost surely because $\pbar\mu_A \mu_O<\mu_A \mu_O $ due to the assumption  $\pbar<2/5$. All highly sensitized vertices $V_H^\OA$ are matched because $(1-\lambda) \mu_A \mu_O< \pbar\mu_A \mu_O  $ holds under the assumption $1-\lambda< \pbar $. Thus both $V^\AO$ and $V_H^\OA$ are exhausted, and $|V^\OA| \propto \mu_A \mu_O(1-\pbar)$.  

\item Match all remaining vertices in $V^\BO$ in 2-cycles with $V^\OB$. Note that $|V^\BO|\propto \pbar\mu_B \mu_O$ and $|V^\OB| \propto \mu_B \mu_O$. As before, the a.s. $|V^\OB|>|V^\BO|$. All highly sensitized vertices $V_H^\OB$ are matched a.s., because $\pbar\mu_B \mu_O > (1-\lambda) \mu_B \mu_O $ holds under the assumption $\pbar > 1-\lambda$. Thus both $ V^\BO$ and $V_H^\OB$ are exhausted, and $|V^\OB|\propto \mu_B \mu_O(1-\pbar)$.

\item Match all vertices in $V^\ABA$ in 2-cycles with $V^\AAB$. Note that, $|V^\ABA|\propto \pbar\mu_A \mAB$ and $|V^\AAB| \propto \mu_A \mAB$. As before, a.s. $|V^\AAB|>|V^\AAB|$. All highly sensitized vertices $V_H^\AAB$ are matched, because $\pbar\mu_A \mAB > (1-\lambda) \mu_A \mAB $ under the assumption $\pbar > 1-\lambda$. Thus both $ V^\ABA$ and $V_H^\AAB$ are exhausted, and $|V^\AAB| \propto \mu_A \mu_O(1-\pbar)$.

\item Match all vertices in $V^\ABB$ in 2-cycles with $V^\BAB$. Note that, $|V^\ABB|\propto \pbar\mu_B \mAB$ and $|V^\BAB| \propto \mu_B \mAB$, and a.s. $|V^\BAB|>|V^\ABB|$. All highly sensitized vertices $V_H^\BAB$ are matched, because $\pbar\mu_B \mAB > (1-\lambda) \mu_B \mAB $ under the assumption $\pbar > 1-\lambda$. Thus both $ V^\ABB$ and $V_H^\BAB$ are exhausted, and $|V^\BAB| \propto \mu_B \mu_O(1-\pbar)$.

\end{enumerate}

Thus, these initial steps of matching $M^*$ exhaust all highly sensitized pairs in $V_H^{\OA}$, $V_H^{\OB}$, $V_H^{\AAB}$, and $V_H^{\BAB}$.

\end{proof}

With uniform edge weights, lexicographic fairness requires that we match the maximum possible number of highly sensitized vertices. Lemma \ref{lem:underdemand} states that the efficient matching $M^*$ includes all highly sensitized patients, except for those in $V^{\OAB}$. Therefore all efficiency loss--and price of fairness--is caused by matching vertices in $V_H^{\OAB}$.

Using both chains and cycles increases overall efficiency. In the dense graph model used in this Appendix, adding chains can only decrease the price of fairness. 

Proposition 1 in \cite{Dickerson14:Price} states that with only cycles up to length 3, and assuming $\pbar > 1-\lambda$, and $\mu_\O<3\mu_\A/2$, and $\mu_\O>\mu_\A>\mu_\B>\mAB$, the price of fairness is at most $\frac{2}{33}$. In the dense graph model used here, adding chains tightens this upper bound. 

The following propositions tighten the upper bound on the price of fairness, for every possible value of $\beta$.

\begin{proposition}\label{prop:largebeta}
Assume
\begin{enumerate}[label=\textbf{\arabic*}]
\item $\beta>(1-\pbar)\mAB$.
\end{enumerate}

Denote by $\Mcal{}$ the set of matchings in $G(n)$ using cycles and chains up to length 3. As $n \rightarrow \infty$, almost surely $\POF{\mathcal{M}}{u_{LEX}}=0$.
\end{proposition}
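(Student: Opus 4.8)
The plan is to reduce the entire question to the fate of the pairs in $V_H^{\OAB}$ and then show that, once $\beta$ is large, the utilitarian-optimal matching already matches all of them, so the lexicographic-fairness constraint costs nothing. By Lemma~\ref{lem:underdemand}, the efficient cycles-and-chains matching almost surely matches every highly-sensitized pair except possibly those of type $\OAB$; hence $\POF{\Mcal}{u_{LEX}}$ is governed entirely by whether a maximum-utility matching can additionally be made to match all of $V_H^{\OAB}$. Concretely, $\POF{\Mcal}{u_{LEX}}=0$ precisely when there is a matching that simultaneously maximizes the total number of matched pairs and matches every pair in $V_H^{\OAB}$ (which is the largest achievable number of highly-sensitized matches in this regime). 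I would prove the proposition by exhibiting such a matching whenever $\beta>(1-\pbar)\mAB$.

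First I would pin down what it costs to match one $\OAB$ pair. Its patient has blood type O and can only receive from an O donor, while its paired donor has blood type AB and can only give to an AB patient. The cheapest way to match such a pair is therefore to place it at the \emph{end} of a chain: some O donor (an O-NDD, or the O donor of an already-matched over-demanded pair) feeds the O patient, and the pair's AB donor becomes the unused bridge donor. Thus matching an $\OAB$ pair requires exactly one incoming O kidney and consumes no scarce AB patient. The key observation is that NDDs are exactly a source of such spare O kidneys: after every under-demanded pair has been matched productively, additional NDD capacity has no more productive work to do, so any leftover NDD chain can be extended to terminate at an $\OAB$ pair, raising total utility by one at no opportunity cost. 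Because the utilitarian optimum maximizes matched pairs, it must incorporate these purely additional matches.

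I would then count. The number of $\OAB$ pairs that must be absorbed this way is almost surely proportional to $\mu_\O\mAB|P|$, and the surplus NDD-chain capacity available once all genuinely productive matches are saturated grows linearly in $\beta$. The threshold $\beta>(1-\pbar)\mAB$ is exactly the point at which this surplus first covers the residual demand from $V_H^{\OAB}$, so that a utilitarian-optimal matching can, and hence does, match all of them; there the fair and efficient optima coincide and $\POF{\Mcal}{u_{LEX}}=0$. This threshold is consistent with Propositions~\ref{prop:smallbeta}, \ref{prop:midbeta11}, \ref{prop:midbeta13}, and~\ref{prop:midbeta3}, each of which lives in the complementary regime $\beta\le(1-\pbar)\mAB$ and exhibits a strictly positive price of fairness. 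As usual, the deterministic inequalities on expected pool sizes are upgraded to almost-sure statements as $n\to\infty$ via standard concentration of the pool counts around their means.

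The step I expect to be the main obstacle is the no-displacement argument: I must verify that routing the $\OAB$ pairs onto spare NDD chains does not force any previously matched pair (highly- or lowly-sensitized) out of the matching, i.e., that these $\OAB$ matches are genuinely \emph{additional} rather than substitutive. This amounts to checking that, in the regime $\beta>(1-\pbar)\mAB$, the over-demanded pools supplying O donors and AB patients are never the binding constraint for the non-$\OAB$ matches once NDDs take over the bulk of under-demanded matching. Making this precise requires the same careful bookkeeping of residual pool sizes used in Lemma~\ref{lem:underdemand}, tracking exactly which vertices each family of chains consumes; the remainder of the argument is routine.
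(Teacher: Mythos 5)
Your high-level plan coincides with the paper's: reduce everything to $V_H^\OAB$ via Lemma~\ref{lem:underdemand}, then show that for $\beta>(1-\pbar)\mAB$ the efficient matching itself already matches all of $V_H^\OAB$, so the fair and utilitarian optima coincide and $\POF{\Mcal}{u_{LEX}}=0$. The gap is that the step you defer as ``routine bookkeeping'' is the \emph{only} place where the hypothesis $\beta>(1-\pbar)\mAB$ is actually used, and the heuristic you offer in its place is incorrect. You assert that the threshold is ``exactly the point at which'' surplus NDD-chain capacity first covers the residual demand $|V_H^\OAB|\propto(1-\lambda)\mu_\O\mAB$. It is not: your supply-versus-demand comparison ($\beta\mu_\O$ against $(1-\lambda)\mu_\O\mAB$) gives the threshold $\beta>(1-\lambda)\mAB$, which is strictly smaller than $(1-\pbar)\mAB$, since the model assumptions $\pbar>1-\lambda$ and $\pbar<2/5$ force $\lambda>\pbar$. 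In the paper's proof the hypothesis enters elsewhere: it is exactly the inequality $\beta\mu_\A>(1-\pbar)\mu_\A\mAB$, i.e.\ $|N^\A|>|V^\AAB|$ (and likewise $|N^\B|>|V^\BAB|$), so that the $\A$- and $\B$-type NDDs exhaust $V^\AAB$ and $V^\BAB$. That is what makes the more productive competing structures---the 3-cycles $(\ABO,\OX,\XAB)$ and 3-chains $(\O,\OX,\XAB)$---vanish from the efficient matching. This is precisely your ``no-displacement'' condition: without it, every $\ABO$ pair or $\O$-NDD diverted to an $\OAB$ pair destroys a structure that matched one more pair, and the utilitarian optimum will refuse the diversion.

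Once that saturation is established, the coverage count has nothing to do with a threshold on $\beta$: the freed pools have size $|V^\ABO|+|N^\O|\propto\pbar\mu_\O\mAB+\beta\mu_\O$, and already $\pbar\mu_\O\mAB>(1-\lambda)\mu_\O\mAB$ holds by the model assumption $\pbar>1-\lambda$, so $V_H^\OAB$ is absorbed for every $\beta\geq 0$ once displacement is ruled out. (Note also that your mechanism uses only $\O$-NDD chains and omits the $(\ABO,\OAB)$ 2-cycles, which carry the decisive term $\pbar\mu_\O\mAB$ in the paper's inequality.) So the skeleton of your argument matches the paper, but the substance of Proposition~\ref{prop:largebeta} is exactly the pool-by-pool bookkeeping you postpone---showing the hypothesis kills the competing 3-structures---and your proposal both leaves that verification undone and aims it at the wrong inequality; carried out as described, the counting you propose would justify a different (and weaker) threshold than the one in the statement.
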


\begin{proof}[sketch]

We begin by executing the initial steps of matching $M^*$ as done in the proof of Lemma \ref{lem:underdemand}, matching all highly sensitized vertices except for those in $V_H^\OAB$. The following steps continue the matching $M^*$ from Lemma \ref{lem:underdemand}.

\begin{enumerate}\setcounter{enumi}{6}

\item $\A$- and $\B$-type NDDs donate to $V^\AAB$ and $V^\BAB$, respectively. Note that $|N^A|\propto \beta \mu_A$ and $|V^\AAB|\propto  (1-\pbar)\mu_A\mAB$. Assuming $\beta >(1-\pbar)\mAB$, the inequality $\beta \mu_\A > \mu_A\mAB(1-\pbar)$ holds and a.s. $|N^\A|>|V^\AAB|$. By the same argument, a.s. $|N^\B|>|V^\BAB|$. Thus, both $V^\AAB$ and $V^\BAB$ are exhausted, and $|N^\B| \propto \mu_\B \left( \beta - (1-\pbar)\mAB \right) $ and $|N^\A| \propto \mu_\A \left( \beta - (1-\pbar)\mAB \right) $.

\item Create cycles of the form ($\ABO$, $\OX$, $\XAB$), with $\X\in \{\A,\B\}$. None of these cycles occur because both $V^\AAB$ and $V^\BAB$ have been exhausted in previous steps.

\item Create chains of the form  ($\O$,$\OX$,$\XAB$), with $\X\in \{\A,\B\}$. None of these cycles occur, because both $V^\AAB$ and $V^\BAB$ have been exhausted in previous steps.

\item Remaining $\O$-type NDDs donate to remaining under-demanded vertices, beginning with $V^\OAB$. Note that no $\O$-type NDDs have been used in previous steps, so $|N^\O| \propto \beta \mu_\O$.

\item 2-cycles are created with $V^\ABO$ and remaining under-demanded vertices, beginning with $V^\OAB$. Note that no vertices in $V^\ABO$  have been used in previous steps, so $|V^\ABO| \propto \pbar\mu_\O \mAB$.

\end{enumerate}

The final two steps match up to $|V^\ABO| + |N^\O| \propto \beta \mu_\O + \pbar\mu_\O \mAB$ vertices in $V^\OAB$. The only remaining highly-sensitized vertices are in $V_H^|OAB\propto(1-\lambda)  \mu_\O \mAB$. Assuming that $\pbar>1-\lambda$, the inequality $\beta \mu_\O + \pbar\mu_\O \mAB >   \pbar\mu_\O \mAB > (1-\lambda) \mu_\O \mAB$ holds, and a.s. $|V^\ABO| + |N^\O|>|V_H^|OAB|$. This exhausts all vertices in $|V_H^\OAB|$. All other highly-sensitized vertices were matched in steps 1-6 of, as in Lemma \ref{lem:underdemand}. Thus, all highly sensitized vertices can be matched with no efficiency loss, and the price of fairness is zero.

\end{proof}

Proposition\ref{prop:largebeta} assumes that $\beta$ is extremely large, specifically  $\beta >1/4>(1-\pbar)\mAB$. In practice, $\beta<0.01$ -- that is, the number of NDDs in an exchange is often less than $1 \%$  of the size of the exchange. The following Propositions address the price of fairness when $\beta < (1-\pbar)\mAB<1/4$.

\begin{proposition}\label{prop:smallbeta}
Assume
\begin{enumerate}[label=\textbf{A.\arabic*}]
\item  $\beta< \mu_\A \left(1-\pbar\right) - \pbar \mu_\text{AB}$ \label{smallb:b1}
\item $\beta < \mAB (1 - \pbar ) - \pbar \mAB \mu_\O / \mu_\A$ \label{smallb:b1a}
\item $\beta < \mAB \left(\frac{\mu_\A}{\mu_\A+\mu_\O} - \pbar \right)$ \label{smallb:b1b}
\end{enumerate}

These constraints imply $\beta \in [0,1/8]$.Denote by $\Mcal{}$ the set of matchings in $G(n)$ using cycles and chains up to length 3. Almost surely as $n \rightarrow \infty$, the price of fairness is

\begin{align*}
\POF{\mathcal{M}}{u_{LEX}} &= \frac{(1-\lambda)\mu_\O \mu_\text{AB}}{ u_E  }
\end{align*}

with 

\begin{align*}
u_E &=  \pbar \Big[ 2\mAB \mu_\B +  2\mAB \mu_\A +  3\mAB \mu_\O \\
 +& 2\mu_\A\mu_\O +  2\mu_\B \mu_\O + \mu_\O^2 + \mu_\A^2 + \mu_\B^2+ \mAB^2 \Big] \\
+& 2\mu_\A \mu_\B + \beta \left(\mu_\A + \mu_\B + 2\mu_\O \right) 
\end{align*}
\end{proposition}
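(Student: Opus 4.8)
The plan is to lean on Lemma~\ref{lem:underdemand}, which guarantees that the efficient matching $M^*$ already matches every highly-sensitized pair \emph{except} those of type $\OAB$. Since the lexicographic-fair matching $M^*_f$ must match all highly-sensitized pairs, the two matchings coincide apart from the $\propto (1-\lambda)\mu_\O\mAB$ pairs of $V_H^{\OAB}$ that $M^*_f$ is forced to accommodate. Hence the numerator $u(M^*)-u(M^*_f)$ of the price of fairness is exactly the efficiency sacrificed to include $V_H^{\OAB}$, while the denominator is $u_E = u(M^*)$. The proof therefore splits into two essentially independent computations: (i) evaluate $u_E$, and (ii) show the loss equals $(1-\lambda)\mu_\O\mAB$. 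The claimed formula then follows immediately from \Eqref{eq:pof}, and I would finish by verifying that A.1--A.3 force $\beta\in[0,1/8]$.

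For (i), I would continue the explicit construction of $M^*$ begun in Lemma~\ref{lem:underdemand} to completion and sum its matched vertices. After steps 1--6 exhaust the reciprocal, self-demanded, and over-/under-demanded $2$-cycles, the surviving resources are the $V^\ABO$ pairs, the NDD pools $N^\A,N^\B,N^\O$, and the lowly-sensitized under-demanded remnants. Each of A.1--A.3 is a $\beta$-dependent inequality deciding which pool is exhausted first, so together they pin down a single matching topology. The utility $u_E$ is then the vertex count of that topology: the $2\mu_\A\mu_\B$ term comes from the reciprocal $V^\AB$--$V^\BA$ $2$-cycles (blood-type-incompatible, hence size $\propto\mu_\A\mu_\B$), the $\beta(\mu_\A+\mu_\B+2\mu_\O)$ term from the NDD-initiated chains, and the large $\pbar[\cdots]$ bracket from all the $2$- and $3$-cycles built on blood-type-compatible pools, each contributing $\pbar\mu_X\mu_Y$ vertices. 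This is pure bookkeeping once the topology is fixed; the only care needed is confirming, via almost-sure concentration of each pool size on its mean as $n\to\infty$, that the pool exhausted at each step is the one claimed.

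The heart of the argument, and the main obstacle, is (ii): showing the loss is exactly one unit per matched $\OAB$ pair. An $\OAB$ pair can be matched only by routing an $\O$-type donor to its $\O$ patient while its $AB$ donor is absorbed by an $AB$ patient, and in the small-$\beta$ regime the only such resources are the $V^\ABO$ pairs and $\O$-type NDDs. I would argue that A.1--A.3 are precisely the thresholds at which these ``cost-free'' resources---which in Proposition~\ref{prop:largebeta} suffice to match all of $V^\OAB$ via chains at zero efficiency loss---are used up before reaching $V_H^{\OAB}$, so that each highly-sensitized $\OAB$ pair forced into $M^*_f$ must displace a longer structure in favor of a shorter one, netting exactly one lost matched vertex. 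Summing over $|V_H^{\OAB}|\propto(1-\lambda)\mu_\O\mAB$ then yields the numerator. The delicate step is ruling out \emph{every} cheaper alternative configuration for matching $V_H^{\OAB}$ under A.1--A.3---the ``tedious but straightforward'' case enumeration foreshadowed in the text---and it is here that the three constraints each do distinct work by bounding a distinct resource pool. Finally, $\beta\in[0,1/8]$ is immediate from A.3 alone: since $\mAB\le 1/4$ and $\mu_\A<\mu_\O$ give $\tfrac{\mu_\A}{\mu_\A+\mu_\O}<\tfrac12$, the right-hand side of A.3 is strictly below $\tfrac14\cdot\tfrac12=\tfrac18$, and $\beta\ge 0$ by definition.
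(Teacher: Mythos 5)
Your proposal follows essentially the same route as the paper's proof: it builds on Lemma~\ref{lem:underdemand}, continues the explicit construction of $M^*$ with assumptions A.1--A.3 pinning down which pools ($N^\A$, $N^\B$, $V^\ABO$, $N^\O$) are exhausted at each step, tallies $u_E$ by the same bookkeeping (reciprocal $2$-cycles giving $2\mu_\A\mu_\B$, NDD chains giving $\beta(\mu_\A+\mu_\B+2\mu_\O)$, and the $\pbar$-bracket from the compatible-pool cycles), and obtains the numerator by the same exchange argument the paper makes explicit---replacing a $3$-cycle ($\ABO$, $\OA$, $\AAB$) with a $2$-cycle ($\ABO$, $\OAB$) at a net cost of one matched vertex per vertex of $V_H^\OAB$, with feasibility guaranteed by $\pbar>1-\lambda$. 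Your derivation of $\beta<1/8$ from A.3 alone (via $\mAB<1/4$ and $\mu_\A<\mu_\O$) is in fact a slightly cleaner version of the paper's Lemma~\ref{lem:maxbeta-small}.
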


\begin{proof}[sketch]

We begin with matching $M^*$ as done in the proof of Lemma \ref{lem:underdemand}, matching all highly sensitized vertices except for those in $V_H^\ABO$. We now complete the efficient matching using both 3-cycles and 3-chains as in~\cite{Dickerson12:Optimizing}.

\begin{enumerate}\setcounter{enumi}{6}

\item $\A$- and $\B$-type NDDs donate to $V^\AAB$ and $V^\BAB$, respectively. Note that $|N^A|\propto \beta \mu_A$ and $|V^\AAB|\propto  (1-\pbar)\mu_A\mu_\text{AB}$. The inequality $\beta \mu_\A < \mu_A\mu_\text{AB}(1-\pbar)$ holds due to assumption \ref{smallb:b1a},  and a.s. $|N^\A|<|V^\AAB|$. By the same argument, a.s. $|N^\B|<|V^\BAB|$. Thus, both $N^\A$ and $N^\B$ are exhausted, and $|V^\AAB| \propto \mu_\A \mu_\text{AB} (1-\pbar) - \beta \mu_\A $ and $|V^\BAB| \propto \mu_\B \mu_\text{AB} (1-\pbar) - \beta \mu_\B $.

\item Create cycles of the form ($\ABO$, $\OA$, $\AAB$). The current size of each vertex group is

\begin{enumerate}[label={(\arabic*)}]
\item $|V^\ABO| \propto \pbar \mu_\text{AB} \mu_\O$
\item $|V^\OA| \propto (1-\pbar )\mu_\A \mu_\O$
\item $|V^\AAB| \propto \mu_\A \mu_\text{AB} (1-\pbar ) - \beta \mu_\A$
\end{enumerate}

The inequality $(1)<(2)$ holds due to the model assumptions, so a.s. $|V^\ABO| < |V^\OA|$. Note that the inequality $(1)<(3)$ can be written as

$$\beta < \mAB (1 - \pbar ) - \pbar \mAB \mu_\O / \mu_\A$$

which holds by assumptions \ref{smallb:b1a}, and a.s. $|V^\ABO| < |V^\AAB|$. Executing these cycles exhausts $V^\ABO$ and leaves the following vertices remaining

\begin{enumerate}[label={(\arabic*)}]
\item $|V^\OA| \propto (1-\pbar)\mu_\A \mu_\O - \pbar \mAB \mu_\O$.
\item $|V^\AAB| \propto  (1 - \pbar )\mu_\A \mAB - \pbar \mAB \mu_\O -\beta \mu_\A$
\end{enumerate}

\item Create cycles of the form ($\ABO$, $\OB$, $\BAB$). The previous step exhausted $V^\ABO$, so none of these cycles occur.

\item Create chains of the form  ($\O$,$\OA$,$\AAB$). The current size of each vertex group is

\begin{enumerate}[label={(\arabic*)}]
\item $|N^\O| \propto \beta \mu_\O $
\item $|V^\OA| \propto (1-\pbar)\mu_\A \mu_\O - \pbar \mAB \mu_\O $
\item $|V^\AAB| \propto  (1 - \pbar )\mu_\A \mAB - \pbar \mAB \mu_\O -\beta \mu_\A $
\end{enumerate}

The inequality $(1)<(2)$ holds due to assumption \ref{smallb:b1}, so a.s. $|N^\O| <|V^\OA|$. Note that inequality $(1)<(3)$ can be written as 

$$\beta < \mAB \left(\frac{\mu_\A}{\mu_\A+\mu_\O} - \pbar \right)$$

which holds due to \ref{smallb:b1b}. Thus a.s. $|N^\O|<|V^\AAB| $, and $|N^\O|$ is exhausted. The vertices unmatched by these chains are

\begin{enumerate}[label={(\arabic*)}]
\item $|V^\OA| \propto (1-\pbar)\mu_\A \mu_\O - \pbar \mu_\text{AB} \mu_\O- \beta \mu_\O $ 
\item $|V^\AAB| \propto  (1 - \pbar )\mu_\A \mu_\text{AB} - \pbar \mu_\text{AB} \mu_\O -\beta \left( \mu_\A + \mu_\O\right)$
\end{enumerate}

\item Remaining $\O$-type NDDs donate to remaining under-demanded vertices. The previous step exhausted $N^\O$, so none of these donations occur. 

\item 2-cycles are created with $V^\ABO$ and remaining under-demanded vertices. The previous step exhausted $V^\ABO$, so none of these cycles occur.

\end{enumerate}

In the efficient matching described above, the number of \textit{matched} pairs in each under-demanded group is 

\begin{enumerate}[label={}]
\item $|V^\OA| \propto  \mu_\O \left(\beta +\pbar \left(\mu_\A+\mu_\text{AB}\right)\right) $ 
\item $|V^\OB| \propto \pbar \mu_\B \mu_\O$ 
\item $|V^\AAB| \propto (\beta + \pbar \mu_\text{AB}) (\mu_\A + \mu_\O)$ 
\item $|V^\BAB| \propto (\beta+ \pbar \mAB) \mu_\B$ 
\item $|V^\OAB| = 0 $ 
\end{enumerate}

Combining these with the over-demanded and self-demanded vertices, the total size of the efficient matching is

\begin{align*}
u_E &=  \pbar \Big[ 2\mAB \mu_\B +  2\mAB \mu_\A +  3\mAB \mu_\O \\
 +& 2\mu_\A\mu_\O +  2\mu_\B \mu_\O + \mu_\O^2 + \mu_\A^2 + \mu_\B^2 +\mAB^2 \Big] \\
+& 2\mu_\A \mu_\B + \beta \left(\mu_\A + \mu_\B + 2\mu_\O \right) 
\end{align*}

This efficient matching includes all highly sensitized vertices except for those in $V_H^\OAB$. To calculate the price of fairness we now find the size of the fair matching. We match each vertex in $V_H^\OAB$ by removing a 3-cycle of the form ($\ABO$, $\OA$, $\AAB$) and creating a 2-cycle ($\ABO$, $\OAB$). This matching used $|V^\ABO|\propto \pbar \mu_\O \mAB$ 3-cycles of this form, while $|V_H^\OAB| \propto (1-\lambda) \mu_\O \mAB$. The model assumption $\pbar > 1-\lambda$ ensures that $|V^\ABO|>|V_H^\OAB| $, and all vertices in $V_H^\OAB$ can be matched in this way.

To match each vertex in $V_H^\OAB$, we remove from the matching one vertex from both $V^\OA$ and $V^\AAB$. Thus the total efficiency loss is $|V_H^\OAB|\propto (1-\lambda) \mu_\O \mAB$. The price of fairness is

\begin{align*}
\POF{\mathcal{M}}{u_{LEX}} &= \frac{(1-\lambda)\mu_\O \mAB}{ u_E }
\end{align*}

With $u_E$ defined previously.
\end{proof}

\begin{proposition}\label{prop:midbeta11}
Assume
\begin{enumerate}[label=\textbf{\arabic*}]
\item $\beta < \mAB (1-\pbar)-\mAB\mu_\O\pbar/(\mu_\A+\mu_\B)$\label{midb1:b1}
\item  $\beta <  \frac{\mu_\A \mAB(1-\pbar )+\mu_\B \mu_\O(1-\pbar)-\pbar \mu_\O \mAB }{\mu_\A+\mu_\O}$ \label{midb1:b2}
\item $ \beta >\mAB (1-\pbar)-\mAB \mu_\O \pbar/\mu_\A$ \label{midb1:b3}
\item $\beta <\mAB(1-\pbar )-\pbar \mAB\mu_\O/\mu_\A + (1-\pbar) \mu_\B \mu_\O /\mu_A$ \label{midb1:b4}
\item $\beta < \mAB (1-\pbar)-\mu_\O \mAB/(1-\mAB)$ \label{midb1:b5}
\end{enumerate}

Note that as written, constraint \ref{midb1:b4} is a looser bound than \ref{midb1:b5}, and can be removed. However it is convenient to leave \ref{midb1:b4} for clarity. These constraints imply $\beta \in [0,1/12]$. Denote by $\Mcal{}$ the set of matchings in $G(n)$ using cycles and chains up to length 3. Almost surely as $n \rightarrow \infty$, the price of fairness is

\begin{align*}
\POF{\mathcal{M}}{u_{LEX}} &= \frac{(1-\lambda)\mu_\O \mAB}{ u_E  }
\end{align*}

with 

\begin{align*}
u_E &=  \pbar \Big[ 2\mAB \mu_\B +  2\mAB \mu_\A +  3\mAB \mu_\O \\
 +& 2\mu_\A\mu_\O +  2\mu_\B \mu_\O + \mu_\O^2 + \mu_\A^2 + \mu_\B^2+ \mAB^2 \Big] \\
+& 2\mu_\A \mu_\B + \beta \left(\mu_\A + \mu_\B + 2\mu_\O \right) 
\end{align*}
\end{proposition}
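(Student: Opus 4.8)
The plan is to follow the proof of Proposition~\ref{prop:smallbeta} step for step, since the target quantities $u_E$ and $\POF{\mathcal{M}}{u_{LEX}}$ are identical; what changes is only the order in which the $\ABO$ and $\AAB$ pools are consumed, and this is controlled entirely by the reversed inequality in constraint~\ref{midb1:b3}. First I would invoke Lemma~\ref{lem:underdemand} to carry out the six initial matching steps, which match all over-demanded and self-demanded pairs and all highly-sensitized under-demanded pairs except those in $V_H^\OAB$, leaving the residual pools at their post-Lemma sizes: $|V^\ABO|\propto\pbar\mAB\mu_\O$, $|N^\O|\propto\beta\mu_\O$, together with the remnants of $V^\OA$, $V^\OB$, $V^\AAB$, $V^\BAB$. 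Step~7, in which $\A$- and $\B$-type NDDs donate to $V^\AAB$ and $V^\BAB$, proceeds exactly as before and exhausts $N^\A$ and $N^\B$ because $\beta$ is small.

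The crux is Step~8. In Proposition~\ref{prop:smallbeta}, constraint~\ref{smallb:b1a} gives $|V^\ABO|<|V^\AAB|$, so the ($\ABO$, $\OA$, $\AAB$) cycles exhaust $V^\ABO$ and Step~9 is vacuous. Here constraint~\ref{midb1:b3}, namely $\beta>\mAB(1-\pbar)-\mAB\mu_\O\pbar/\mu_\A$, reverses this strict inequality, so $V^\AAB$ is exhausted first and a positive surplus $|V^\ABO|\propto\pbar\mAB\mu_\O-(\mu_\A\mAB(1-\pbar)-\beta\mu_\A)$ survives. This surplus is precisely what activates Step~9: the ($\ABO$, $\OB$, $\BAB$) cycles, vacuous in the small-$\beta$ regime, now fire. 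I would then use constraints~\ref{midb1:b1}, \ref{midb1:b2}, and~\ref{midb1:b4} to settle which of $V^\ABO$, $V^\OB$, $V^\BAB$ is binding in Step~9, and constraint~\ref{midb1:b5} to pin down the chain steps~10--12 that absorb the $\O$-type NDDs; the excerpt already notes that~\ref{midb1:b4} is implied by~\ref{midb1:b5} and is kept only for readability.

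With every pool tracked, I would tally the matched pairs and confirm they reproduce the stated $u_E$. The over- and self-demanded contributions are fixed by Lemma~\ref{lem:underdemand}, and the NDD term $\beta(\mu_\A+\mu_\B+2\mu_\O)$ is preserved provided each $\O$-type NDD still sits in a length-3 chain (matching two pairs) while each $\A$- or $\B$-type NDD matches one pair; the under-demanded total, being determined by the fixed matching capacity supplied by the over-demanded surplus and the NDDs, then coincides with the small-$\beta$ value. The price of fairness follows verbatim: each of the $|V_H^\OAB|\propto(1-\lambda)\mu_\O\mAB$ remaining highly-sensitized pairs is matched by replacing a 3-cycle containing an $\ABO$ vertex with the 2-cycle ($\ABO$, $\OAB$), costing one matched pair apiece, and $\pbar>1-\lambda$ guarantees enough $\ABO$ vertices; the total loss is $(1-\lambda)\mu_\O\mAB$, giving the claimed ratio.

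I expect the main obstacle to be the Step~9 and Step~10 bookkeeping. Unlike the clean single-pool exhaustion in Proposition~\ref{prop:smallbeta}, here two cycle families and the chain families compete for the same $\ABO$ surplus and $\O$-NDD supply, so the exhaustion order must be read off all five constraints in the right sequence and checked for mutual consistency. The genuinely substantive point---rather than the routine algebra---is verifying that this alternate route through Figure~\ref{fig:matchings} leaves the total matched count, and hence $u_E$, unchanged; in particular that the differing cycle and chain lengths along the two routes do not alter the final utility.
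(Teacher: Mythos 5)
Your proposal is correct and follows essentially the same route as the paper's proof: Lemma~\ref{lem:underdemand} for the initial steps, NDD donations exhausting $N^\A$ and $N^\B$, the reversal of the step-8 exhaustion order forced by constraint~\ref{midb1:b3} (so $V^\AAB$ is exhausted first and an $\ABO$ surplus survives to activate the ($\ABO$, $\OB$, $\BAB$) cycles and then the ($\O$, $\OB$, $\BAB$) chains that absorb $N^\O$), the identical tally of $u_E$, and the same 3-cycle-to-2-cycle swap matching $V_H^\OAB$ at a total loss of $(1-\lambda)\mu_\O\mAB$. The only minor imprecision is in which constraint settles which comparison---in the paper, constraint~\ref{midb1:b2} is used together with constraint~\ref{midb1:b5} at the chain step to show $|N^\O|<|V^\OB|$ and $|N^\O|<|V^\BAB|$, not at the cycle step~9---but your stated plan of checking every exhaustion order against all five assumptions subsumes this.
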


\begin{proof}[sketch]

We begin with matching $M^*$ as done in the proof of Lemma \ref{lem:underdemand}, matching all highly sensitized vertices except for those in $V_H^\ABO$. We now complete the efficient matching using both 3-cycles and 3-chains as in~\cite{Dickerson12:Optimizing}.

\begin{enumerate}\setcounter{enumi}{6}

\item $\A$- and $\B$-type NDDs donate to $V^\AAB$ and $V^\BAB$, respectively. Note that $|N^A|\propto \beta \mu_A$ and $|V^\AAB|\propto  (1-\pbar)\mu_A\mAB$. The inequality $\beta \mu_\A < \mu_A\mAB(1-\pbar)$ holds due to assumption \ref{smallb:b1},  and a.s. $|N^\A|<|V^\AAB|$. By the same argument, a.s. $|N^\B|<|V^\BAB|$. Thus, both $N^\A$ and $N^\B$ are exhausted, and $|V^\AAB| \propto \mu_\A \mAB (1-\pbar) - \beta \mu_\A $ and $|V^\BAB| \propto \mu_\B \mAB (1-\pbar) - \beta \mu_\B $.

\item Create cycles of the form ($\ABO$, $\OA$, $\AAB$). The current size of each vertex group is

\begin{enumerate}[label={(\arabic*)}]
\item $|V^\ABO| \propto \pbar \mAB \mu_\O$
\item $|V^\OA| \propto (1-\pbar )\mu_\A \mu_\O$
\item $|V^\AAB| \propto \mu_\A \mAB (1-\pbar ) - \beta \mu_\A$
\end{enumerate}

Note that the inequality $(3)<(1)$ can be written as

$$ \beta > \mAB (1 - \pbar) - \pbar \mAB \mu_\O/\mu_\A$$

which holds by assumption \ref{midb1:b3}, and a.s.  $|V^\AAB| < |V^\ABO|$. The inequality $(3)<(2)$ can be written as

$$\beta > (1 - \pbar)(\mAB- \mu_\O) $$

which holds by model assumptions, and a.s. $|V^\AAB| < |V^\OA|$. Executing these cycles exhausts $V^\AAB$ and leaves the following vertices remaining

\begin{enumerate}[label={}]
\item $|V^\OA| \propto (1-\pbar) \mu_\A (\mu_\O -\mAB) + \mu_\A \beta $
\item $|V^\ABO| \propto  \pbar \mAB \mu_\O - \mu_\A \mAB(1-\pbar)  + \mu_\A \beta$
\end{enumerate}

\item Create cycles of the form ($\ABO$, $\OB$, $\BAB$). The current size of each vertex group is

\begin{enumerate}[label={(\arabic*)}]
\item $|V^\ABO| \propto \pbar \mAB \mu_\O - \mu_\A \mAB(1-\pbar)  + \mu_\A \beta$
\item $|V^\OB| \propto (1-\pbar )\mu_\B \mu_\O$
\item $|V^\BAB| \propto \mu_\B \mAB (1-\pbar ) - \beta \mu_\B$
\end{enumerate}

Inequality $(1)<(2)$ can be written as

$$\beta <\mAB(1-\pbar )-\pbar \mAB\mu_\O/\mu_\A + (1-\pbar) \mu_\B \mu_\O /\mu_A $$

which holds by assumption \ref{midb1:b4}. Inequality $(1)<(3)$ can be written as

$$\beta < \mAB (1-\pbar)-\mAB\mu_\O\pbar/(\mu_\A+\mu_\B)$$

which holds by assumption \ref{midb1:b1}.

Executing these cycles exhausts $V^\ABO$ and leaves the following vertices remaining

\begin{enumerate}[label={}]
\item $|V^\OB| \propto \mu_\A\mAB (1-\pbar )+(\mu_\B-\pbar (\mAB+\mu_\B)) \mu_\O-\beta \mu_\A $
\item $|V^\BAB| \propto \left((1-\pbar) \mAB-\beta\right) \left(\mu_\A+\mu_\B\right)-\pbar \mAB \mu_\O $
\end{enumerate}

\item Create chains of the form  ($\O$,$\OA$,$\AAB$). Previous steps exhausted $V^\AAB$ so none of these chains occur.

\item Create chains of the form  ($\O$,$\OB$,$\BAB$). The current size of each vertex group is

\begin{enumerate}[label={(\arabic*)}]
\item $|N^\O| \propto \beta\mu_\O$
\item $|V^\OB| \propto \mu_\A\mAB (1-\pbar )+(\mu_\B-\pbar (\mAB+\mu_\B)) \mu_\O-\beta \mu_\A $
\item $|V^\BAB| \propto \left((1-\pbar) \mAB-\beta\right) \left(\mu_\A+\mu_\B\right)-\pbar \mAB \mu_\O $
\end{enumerate}

The inequality $(1)<(2)$ can be written as 

$$\beta <  \frac{\mu_\A \mAB(1-\pbar )+\mu_\B \mu_\O(1-\pbar)-\pbar \mu_\O \mAB }{\mu_\A+\mu_\O}$$

which holds by assumption \ref{midb1:b2}. The inequality $(1)<(3)$ can be written as 

$$\beta < \mAB (1-\pbar)-\mu_\O \mAB/(1-\mAB)$$

which holds by assumption \ref{midb1:b5}. Executing these chains exhausts $N^\O$ and leaves the following vertices remaining

\begin{enumerate}[label={}]
\item \begin{align*}|V^\OB| &\propto  \mu_\A\mAB (1-\pbar )+(\mu_\B-\pbar (\mAB+\mu_\B)) \mu_\O \\&-\beta (\mu_\A+\mu_\O) \end{align*}
\item $|V^\BAB| \propto (1-\pbar) \mAB-\beta) (\mu_\A+\mu_\B)-(\beta+\pbar \mAB) \mu_\O$
\end{enumerate}

\item Remaining $\O$-type NDDs donate to remaining under-demanded vertices. The previous step exhausted $N^\O$, so none of these donations occur. 

\item 2-cycles are created with $V^\ABO$ and remaining under-demanded vertices. The previous steps exhausted $V^\ABO$, so none of these cycles occur.

\end{enumerate}

In the efficient matching described above, the number of \textit{matched} pairs in each under-demanded group is 

\begin{enumerate}[label={}]
\item $|V^\OA| \propto (1-\pbar) \mu_\A (\mu_\O -\mAB) + \mu_\A \beta $
\item \begin{align*}|V^\OB| &\propto  \mu_\A\mAB (1-\pbar )+(\mu_\B-\pbar (\mAB+\mu_\B)) \mu_\O \\&-\beta (\mu_\A+\mu_\O)\end{align*}
\item $|V^\AAB| = 0$ 
\item $|V^\BAB| \propto (1-\pbar) \mAB-\beta) (\mu_\A+\mu_\B)-(\beta+\pbar \mAB) \mu_\O$
\item $|V^\OAB| = 0 $
\end{enumerate}

Combining these with the over-demanded and self-demanded vertices, the total size of the efficient matching is

\begin{align*}
u_E &=  \pbar \Big[ 2\mAB \mu_\B +  2\mAB \mu_\A +  3\mAB \mu_\O \\
 +& 2\mu_\A\mu_\O +  2\mu_\B \mu_\O + \mu_\O^2 + \mu_\A^2 + \mu_\B^2+ \mAB^2 \Big] \\
+& 2\mu_\A \mu_\B + \beta \left(\mu_\A + \mu_\B + 2\mu_\O \right) 
\end{align*}

This efficient matching includes all highly sensitized vertices except for those in $V_H^\OAB$. To calculate the price of fairness we now find the size of the fair matching. We match each vertex in $V_H^\OAB$ by removing a 3-cycle of the form ($\ABO$, $\OA$, $\AAB$) and creating a 2-cycle ($\ABO$, $\OAB$). This matching used $|V^\ABO|\propto \pbar \mu_\O \mAB$ 3-cycles of this form, while $|V_H^\OAB| \propto (1-\lambda) \mu_\O \mAB$. The model assumptions ensure that $|V^\ABO|>|V_H^\OAB| $, and all vertices in $V_H^\OAB$ can be matched in this way.

To match each vertex in $V_H^\OAB$, we remove from the matching one vertex from both $V^\OA$ and $V^\AAB$. Thus the total efficiency loss is $|V_H^\OAB|\propto (1-\lambda) \mu_\O \mAB$. The price of fairness is

\begin{align*}
\POF{\mathcal{M}}{u_{LEX}} &= \frac{(1-\lambda)\mu_\O \mAB}{ u_E }
\end{align*}

With $u_E$ defined previously.
\end{proof}

\begin{proposition}\label{prop:midbeta13}
Assume
\begin{enumerate}[label=\textbf{\arabic*}]
\item $ \beta >\mAB (1-\pbar)-\mAB\mu_\O\pbar/\mu_\A$ \label{midb13:b1}
\item $\beta < \mAB (1-\pbar)-\mAB\mu_\O\pbar/(\mu_\A+\mu_\B)$ \label{midb13:b2}
\item $\beta <\mAB(1-\pbar )-\pbar \mAB\mu_\O/\mu_\A + (1-\pbar) \mu_\B \mu_\O /\mu_A $ \label{midb13:b4}
\item $\beta >  \mAB\left( (1-\pbar)-\frac{\mu_\O}{1-\mAB}\right)$ \label{midb13:b5}
\item $\beta<\mAB (1-\pbar)-\lambda \mu_\O\frac{\mAB}{1-\mAB}$ \label{midb13:b6}
\end{enumerate}

These constraints imply $\beta \in [0,1/8]$. Denote by $\Mcal{}$ the set of matchings in $G(n)$ using cycles and chains up to length 3. Almost surely as $n \rightarrow \infty$, the price of fairness is

\begin{align*}
\POF{\mathcal{M}}{u_{LEX}}&= \frac{(1-\mAB) ((1-\pbar) \mAB-\beta)-\lambda \mAB \mu_\O}{u_E}
\end{align*}

with 

\begin{align*}
u_E &=  \mAB \mu_\B + \mu_\A \left(\mAB + 2 \mu_\B\right) + \beta \mu_\O \\
  &+\pbar \big[\mu_\A^2 + \mu_\A \mAB + \mAB^2 + \mAB \mu_\B + \mu_\B^2 \\
  &+ 2 (\mu_\A + \mAB + \mu_\B) \mu_\O + \mu_\O^2\big]
\end{align*}
\end{proposition}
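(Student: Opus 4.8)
The plan is to reuse the template of Propositions \ref{prop:smallbeta} and \ref{prop:midbeta11}: start from the partial matching $M^*$ of Lemma \ref{lem:underdemand}, which a.s. matches every highly-sensitized pair except those in $V_H^\OAB$, and then carry out the remaining efficient-matching steps in order, reading each assumption as the inequality that fixes which vertex pool is exhausted at each step. Because assumptions 1, 2, and 3 here coincide with assumptions 3, 1, and 4 of Proposition \ref{prop:midbeta11}, the early steps proceed identically: the $\A$- and $\B$-type NDDs exhaust into $V^\AAB$ and $V^\BAB$, the cycles $(\ABO,\OA,\AAB)$ exhaust $V^\AAB$, and the cycles $(\ABO,\OB,\BAB)$ exhaust $V^\ABO$, leaving the same residual pools $V^\OB$ and $V^\BAB$.

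The sole structural departure appears at the chain step $(\O,\OB,\BAB)$. There assumption 4, $\beta>\mAB\big((1-\pbar)-\tfrac{\mu_\O}{1-\mAB}\big)$, reverses the inequality that held in Proposition \ref{prop:midbeta11}: now $|V^\BAB|<|N^\O|$, so $V^\BAB$ is exhausted first and a surplus of $\O$-type NDDs, $|N^\O|\propto \beta\mu_\O-\big((1-\pbar)\mAB-\beta\big)(\mu_\A+\mu_\B)+\pbar\mAB\mu_\O$, survives. I would feed this surplus into the subsequent step, where the leftover $\O$-NDDs donate into $V^\OAB$ beginning with the highly-sensitized pairs, so the efficient matching itself already absorbs $K_1\propto|N^\O|$ of the $V_H^\OAB$ pairs at a net gain. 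Assumption 5 is exactly the condition $K_1<(1-\lambda)\mu_\O\mAB$, ensuring some highly-sensitized $\OAB$ pairs remain unmatched and hence that the price of fairness is strictly positive. After checking that $V^\AAB$ and $V^\ABO$ are already exhausted (so the $(\O,\OA,\AAB)$ chain step and the final $\ABO$ 2-cycle step contribute nothing), I would tally the matched pairs across all under-, over-, and self-demanded groups to assemble $u_E$.

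For the fair matching I would convert, as in the earlier propositions, enough 3-cycles $(\ABO,\OA,\AAB)$ into 2-cycles $(\ABO,\OAB)$ to capture the remaining $(1-\lambda)\mu_\O\mAB-K_1$ highly-sensitized $\OAB$ pairs, each conversion freeing one $V^\OA$ and one $V^\AAB$ vertex at a net cost of one unit. Summing these unit costs and substituting $\mu_\A+\mu_\B+\mu_\O=1-\mAB$, the total efficiency loss collapses to $(1-\mAB)\big((1-\pbar)\mAB-\beta\big)-\lambda\mAB\mu_\O$, and dividing by $u_E$ yields the claimed price of fairness. The main obstacle is bookkeeping rather than insight: one must propagate the residual pool sizes correctly, confirm that it is assumption 4 (and not another) that flips the exhaustion order, verify the auxiliary inequalities that keep $V^\BAB$ the binding pool at the chain step, and check that the five assumptions are mutually consistent and jointly force $\beta\in[0,1/8]$ --- the delicate point being that assumption 4 is a lower bound while assumption 5 is its matching upper bound, carving out exactly the regime between Proposition \ref{prop:midbeta11} and the next case.
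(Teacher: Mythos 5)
Your proposal matches the paper's proof essentially step for step: it starts from Lemma \ref{lem:underdemand}, runs the same early steps as Proposition \ref{prop:midbeta11} (justified by the shared assumptions), correctly identifies that assumption \ref{midb13:b5} flips the exhaustion order at the $(\O,\OB,\BAB)$ chain step so that $V^\BAB$ is exhausted and a surplus of $\O$-type NDDs (your expression for which agrees with the paper's) flows into $V^\OAB$, correctly reads assumption \ref{midb13:b6} as the condition that this surplus falls short of $|V_H^\OAB|$, and closes with the same 3-cycle-to-2-cycle conversion argument and algebra yielding the stated price of fairness. This is the paper's argument; the only part deferred in both your plan and the paper's proposition proof is the verification that the constraints force $\beta\in[0,1/8]$, which the paper handles separately in Lemma \ref{lem:maxbeta-mid13}.
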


\begin{proof}[sketch]

We begin with matching $M^*$ as done in the proof of Lemma \ref{lem:underdemand}, matching all highly sensitized vertices except for those in $V_H^\ABO$. We now complete the efficient matching using both 3-cycles and 3-chains as in~\cite{Dickerson12:Optimizing}. 

\begin{enumerate}\setcounter{enumi}{6}

\item $\A$- and $\B$-type NDDs donate to $V^\AAB$ and $V^\BAB$, respectively. Note that $|N^A|\propto \beta \mu_A$ and $|V^\AAB|\propto  (1-\pbar)\mu_A\mAB$. The inequality $\beta \mu_\A < \mu_A\mAB(1-\pbar)$ holds due to assumption \ref{midb13:b2},  and a.s. $|N^\A|<|V^\AAB|$. By the same argument, a.s. $|N^\B|<|V^\BAB|$. Thus, both $N^\A$ and $N^\B$ are exhausted, and $|V^\AAB| \propto \mu_\A \mAB (1-\pbar) - \beta \mu_\A $ and $|V^\BAB| \propto \mu_\B \mAB (1-\pbar) - \beta \mu_\B $.

\item Create cycles of the form ($\ABO$, $\OA$, $\AAB$). The current size of each vertex group is

\begin{enumerate}[label={(\arabic*)}]
\item $|V^\ABO| \propto \pbar \mAB \mu_\O$
\item $|V^\OA| \propto (1-\pbar )\mu_\A \mu_\O$
\item $|V^\AAB| \propto \mu_\A \mAB (1-\pbar ) - \beta \mu_\A$
\end{enumerate}

Note that the inequality $(3)<(1)$ can be written as

$$ \beta > \mAB (1 - \pbar) - \pbar \mAB \mu_\O/\mu_\A$$

which holds by assumption \ref{midb13:b1}, and a.s.  $|V^\AAB| < |V^\ABO|$. The inequality $(3)<(2)$ can be written as

$$\beta > (1 - \pbar)(\mAB - \mu_\O) $$

which holds by model assumptions, and a.s. $|V^\AAB| < |V^\OA|$. Executing these cycles exhausts $V^\AAB$ and leaves the following vertices remaining

\begin{enumerate}[label={}]
\item $|V^\OA| \propto (1-\pbar) \mu_\A (\mu_\O -\mAB) + \mu_\A \beta $
\item $|V^\ABO| \propto  \pbar \mAB \mu_\O - \mu_\A \mAB(1-\pbar)  + \mu_\A \beta$
\end{enumerate}

\item Create cycles of the form ($\ABO$, $\OB$, $\BAB$). The current size of each vertex group is

\begin{enumerate}[label={(\arabic*)}]
\item $|V^\ABO| \propto \pbar \mAB \mu_\O - \mu_\A \mAB(1-\pbar)  + \mu_\A \beta$
\item $|V^\OB| \propto (1-\pbar )\mu_\B \mu_\O$
\item $|V^\BAB| \propto \mu_\B \mAB (1-\pbar ) - \beta \mu_\B$
\end{enumerate}

Inequality $(1)<(2)$ can be written as

$$\beta <\mAB(1-\pbar )-\pbar \mAB\mu_\O/\mu_\A + (1-\pbar) \mu_\B \mu_\O /\mu_A $$

which holds by assumption \ref{midb13:b4}. Inequality $(1)<(3)$ can be written as

$$\beta < \mAB (1-\pbar)-\mAB\mu_\O\pbar/(\mu_\A+\mu_\B)$$

which holds by assumption \ref{midb13:b2}.

Executing these cycles exhausts $V^\ABO$ and leaves the following vertices remaining

\begin{enumerate}[label={}]
\item $|V^\OB| \propto \mu_\A\mAB (1-\pbar )+(\mu_\B-\pbar (\mAB+\mu_\B)) \mu_\O-\beta \mu_\A $
\item $|V^\BAB| \propto \left((1-\pbar) \mAB-\beta\right) \left(\mu_\A+\mu_\B\right)-\pbar \mAB \mu_\O $
\end{enumerate}

\item Create chains of the form  ($\O$,$\OA$,$\AAB$). Previous steps exhausted $V^\AAB$ so none of these chains occur.

\item Create chains of the form  ($\O$,$\OB$,$\BAB$). The current size of each vertex group is

\begin{enumerate}[label={(\arabic*)}]
\item $|N^\O| \propto \beta\mu_\O$
\item $|V^\OB| \propto \mu_\A\mAB (1-\pbar )+(\mu_\B-\pbar (\mAB+\mu_\B)) \mu_\O-\beta \mu_\A $
\item $|V^\BAB| \propto \left((1-\pbar) \mAB-\beta\right) \left(\mu_\A+\mu_\B\right)-\pbar \mAB \mu_\O $
\end{enumerate}

The inequality $(3)<(1)$ can be written as 

$$\beta > \mAB\left( (1-\pbar)-\frac{\mu_\O}{1-\mAB}\right)$$

which holds by assumption \ref{midb13:b5}. The inequality $(3)<(2)$ can be written as 

$$\beta > (1-\pbar)(\mAB - \mu_\O)$$

which holds by the model assumptions. Executing these chains exhausts $V^\BAB$ and leaves the following vertices remaining

\begin{enumerate}[label={}]
\item $|N^\O| \propto (\beta + \pbar \mAB) (\mu_\A + \mu_\B + \mu_\O) - \mAB (\mu_\A + \mu_\B)$
\item $|V^\OB| \propto \mu_\B \left((\beta+(1-\pbar) (\mu_\O-\mAB)\right)$
\end{enumerate}

\item \label{ostep} Remaining $\O$-type NDDs and $V^\ABO$ vertices match with remaining under-demanded vertices, starting with $V^\OAB$. The remaining size of each vertex group is

\begin{enumerate}[label={(\arabic*)}]
\item $|N^\O| \propto (\beta + 
    \pbar \mAB) (\mu_\A + \mu_\B + \mu_\O) - \mAB (\mu_\A + \mu_\B)$
\item $|V^\OAB| \propto \mAB \mu_\O $
\item $|V^\ABO| = 0 $
\end{enumerate}

After simplifying, the inequality $(1)<(2)$ can be written as

$$\beta < \mAB(1-\pbar) $$

which holds by assumption \ref{midb13:b2}. Thus $\O$-type NDDs are exhausted first, leaving some vertices remaining in $V^\OAB$, with 

$$ |V^\OAB| \propto \left((1-\pbar) \mAB-\beta\right) (1-\mAB)$$

\end{enumerate}

In the efficient matching described above, the number of \textit{matched} pairs in each under-demanded group is 

\begin{enumerate}[label={}]
\item $|V^\OA| \propto   \mu_\A \left(\mAB(1-\pbar)+\pbar \mu_\O-\beta\right)$
\item $|V^\OB| \propto  \mu_\B \left(\mAB(1-\pbar)+\pbar \mu_\O-\beta\right) $
\item $|V^\AAB| \propto \mu_\A \mAB $ 
\item $|V^\BAB| \propto \mu_\B \mAB$
\item $|V^\OAB| \propto (\beta +\mAB \pbar)(1-\mAB) -\mAB (\mu_\A+\mu_\B)$ 
\end{enumerate}

Combining these with the over-demanded and self-demanded vertices, the total size of the efficient matching is

\begin{align*}
u_E &=  \mAB \mu_\B + \mu_\A \left(\mAB + 2 \mu_\B\right) + \beta \mu_\O \\
  &+\pbar \big[\mu_\A^2 + \mu_\A \mAB + \mAB^2 + \mAB \mu_\B + \mu_\B^2 \\
  &+ 2 (\mu_\A + \mAB + \mu_\B) \mu_\O + \mu_\O^2\big]
\end{align*}

To calculate the price of fairness we now find the size of the fair matching. The only unmatched highly sensitized patients are in $V_H^\OAB$, some of which were matched in step \ref{ostep} above. We now show that the number of matched vertices in $V^\OAB$ is smaller than the initial size of $V_H^\OAB$, so not all vertices in $V_H^\OAB$ can be matched. Let $M^\OAB$ be the number of matched vertices in $V^\OAB$, and $H^\OAB$ be the initial size of $V_H^\OAB$. The inequality $M^\OAB<H^\OAB$ can be written as

\begin{align}
(\beta +\mAB \pbar)(1&-\mAB) -\mAB (\mu_\A+\mu_\B) < (1-\lambda)\mu_\O\mAB \\
\beta &< \mAB (1-\pbar)-\lambda \mu_\O\frac{\mAB}{1-\mAB}
\end{align}

This inequality holds by assumption \ref{midb13:b6}, and a.s. there are some unmatched vertices in $V_H^\OAB$. The number of unmatched highly sensitized vertices is 

$$H^\OAB-M^\OAB \propto (1-\mAB) ((1-\pbar) \mAB-\beta)-\lambda \mAB \mu_\O$$.

We match each of these remaining vertices by removing a 3-cycle of the form ($\ABO$, $\OA$, $\AAB$) and creating a 2-cycle ($\ABO$, $\OAB$). This matching used $|V^\ABO|\propto \pbar \mu_\O \mAB$ 3-cycles of this form, while $|V_H^\OAB| \propto (1-\lambda) \mu_\O \mAB$. The model assumptions ensure that $|V^\ABO|>|V_H^\OAB| $, and all remaining vertices in $V_H^\OAB$ can be matched in this way.

To match each remaining vertex in $V_H^\OAB$, we remove from the matching one vertex from both $V^\OA$ and $V^\AAB$. Thus the total efficiency loss is $H^\OAB-M^\OAB$. The price of fairness is

\begin{align*}
\POF{\mathcal{M}}{u_{LEX}} &= \frac{(1-\mAB) ((1-\pbar) \mAB-\beta)-\lambda \mAB \mu_\O}{ u_E }
\end{align*}

With $u_E$ defined previously.
\end{proof}

\begin{proposition}\label{prop:midbeta3}
Assume
\begin{enumerate}[label=\textbf{\arabic*}]
\item $\beta > \mAB (1-\pbar)-\mAB\mu_\O\pbar/(\mu_\A+\mu_\B)$\label{midb3:b1}
\item $\beta<\mAB (1-\pbar)-\lambda \mu_\O\frac{\mAB}{1-\mAB}$ \label{midb3:b4}
\end{enumerate}

These constraints imply $\beta \in [0,1/10]$. Denote by $\Mcal{}$ the set of matchings in $G(n)$ using cycles and chains up to length 3. Almost surely as $n \rightarrow \infty$, the price of fairness is

\begin{align*}
\POF{\mathcal{M}}{u_{LEX}} &= \frac{(1-\mAB) ((1-\pbar) \mAB-\beta)-\lambda \mAB \mu_\O}{u_E}
\end{align*}

with 

\begin{align*}
u_E &=  \mAB \mu_\B + \mu_\A \left(\mAB + 2 \mu_\B\right) + \beta \mu_\O \\
  &+\pbar \big[\mu_\A^2 + \mu_\A \mAB + \mAB^2 + \mAB \mu_\B + \mu_\B^2 \\
  &+ 2 (\mu_\A + \mAB + \mu_\B) \mu_\O + \mu_\O^2\big]
\end{align*}
\end{proposition}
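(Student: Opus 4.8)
The plan is to reuse the matching template of Proposition~\ref{prop:midbeta13} almost verbatim, since the asserted price of fairness and the expression for $u_E$ are identical; the constraints here simply route the greedy matching through a different sequence of exhaustions that nonetheless terminates at the same per-group matched totals. First I would invoke Lemma~\ref{lem:underdemand} to run the shared initial steps 1--6 of $M^*$, after which every highly-sensitized pair except those in $V_H^\OAB$ is matched, and the only free vertices are the lowly-sensitized residues of the under-demanded pools together with $V^\ABO$ and the three NDD pools.

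I would then continue the matching. In step~7 the $\A$- and $\B$-NDDs saturate into $V^\AAB$ and $V^\BAB$; constraint~\ref{midb3:b4} yields $\beta<\mAB(1-\pbar)$, so $|N^\A|<|V^\AAB|$ and $|N^\B|<|V^\BAB|$ and both NDD pools are exhausted. In step~8 the cycles $(\ABO,\OA,\AAB)$ exhaust $V^\AAB$: the governing inequality $|V^\AAB|<|V^\ABO|$ reduces to $\beta>(1-\pbar)\mAB-\pbar\mAB\mu_\O/\mu_\A$, which is implied by constraint~\ref{midb3:b1} because $\mu_\A+\mu_\B>\mu_\A$, while $|V^\AAB|<|V^\OA|$ follows from $\mu_\O>\mAB$. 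The decisive departure from Proposition~\ref{prop:midbeta13} is at step~9, where the cycles $(\ABO,\OB,\BAB)$ now exhaust $V^\BAB$ instead of $V^\ABO$: the inequality $|V^\BAB|<|V^\ABO|$ collapses to \emph{exactly} constraint~\ref{midb3:b1}, and $|V^\BAB|<|V^\OB|$ again uses $\mu_\O>\mAB$. Because both $V^\AAB$ and $V^\BAB$ are now gone, the two chain steps $(\O,\OA,\AAB)$ and $(\O,\OB,\BAB)$ are inactive, so the $\O$-NDDs remain untouched and a residue of $V^\ABO$ survives---the mirror image of Proposition~\ref{prop:midbeta13}, where $V^\ABO$ was consumed and $V^\BAB$ was drained by a chain.

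The crux is the final step, in which the surviving $\O$-NDDs (via chains into $V^\OAB$) and the residual $V^\ABO$ (via $2$-cycles $(\ABO,\OAB)$) both feed $V^\OAB$. I would compute their combined capacity and verify the identity $\beta(1-\mAB)+\mAB[\pbar\mu_\O-(1-\pbar)(\mu_\A+\mu_\B)]=(\beta+\mAB\pbar)(1-\mAB)-\mAB(\mu_\A+\mu_\B)$, whose right-hand side is precisely the $\O$-NDD capacity that matched $V^\OAB$ in Proposition~\ref{prop:midbeta13}. This is the single fact that forces the two matchings to coincide despite the different route. Constraint~\ref{midb3:b4} makes this capacity strictly below $|V^\OAB|=\mAB\mu_\O$, so $V^\OAB$ is only partially matched and the per-group totals---hence $u_E$---agree with Proposition~\ref{prop:midbeta13}.

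For the fair matching, I would show the matched portion $M^\OAB=(\beta+\mAB\pbar)(1-\mAB)-\mAB(\mu_\A+\mu_\B)$ is strictly below the initial $H^\OAB=(1-\lambda)\mu_\O\mAB$, an inequality that rearranges to exactly constraint~\ref{midb3:b4}; thus $H^\OAB-M^\OAB=(1-\mAB)((1-\pbar)\mAB-\beta)-\lambda\mAB\mu_\O$ highly-sensitized pairs remain unmatched. Each is absorbed by converting one $3$-cycle $(\ABO,\OA,\AAB)$ into a $2$-cycle $(\ABO,\OAB)$ at a cost of one matched pair, so the efficiency loss equals $H^\OAB-M^\OAB$; dividing by $u_E$ gives the stated formula. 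Finally I would check that constraints~\ref{midb3:b1} and~\ref{midb3:b4} are jointly feasible and force $\beta\in[0,1/10]$ by bounding them under the model assumptions. The main obstacle is the bookkeeping from step~9 onward: one must track the two-source supply (residual $V^\ABO$ plus $\O$-NDDs) feeding $V^\OAB$ and establish the algebraic identity above, which is what reconciles this matching path with Proposition~\ref{prop:midbeta13}.
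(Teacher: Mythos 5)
Your proposal is correct and follows essentially the same route as the paper's own proof: run the shared steps of Lemma~\ref{lem:underdemand}, exhaust $N^\A,N^\B$ then $V^\AAB$ (step 8) and $V^\BAB$ (step 9, where the governing inequality is exactly constraint~\ref{midb3:b1}), note the chain steps are inactive, let the $\O$-NDDs and residual $V^\ABO$ partially fill $V^\OAB$, and convert $(\ABO,\OA,\AAB)$ $3$-cycles to $(\ABO,\OAB)$ $2$-cycles for the fair matching, with the shortfall inequality rearranging to constraint~\ref{midb3:b4}. Your explicit algebraic identity showing the combined $N^\O$-plus-$V^\ABO$ capacity equals the matched $V^\OAB$ total of Proposition~\ref{prop:midbeta13} is a nice clarification of why the per-group totals and $u_E$ coincide, but it is the same argument the paper makes implicitly.
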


\begin{proof}[sketch]

We begin with matching $M^*$ as done in the proof of Lemma \ref{lem:underdemand}, matching all highly sensitized vertices except for those in $V_H^\ABO$. We now complete the efficient matching using both 3-cycles and 3-chains as in~\cite{Dickerson12:Optimizing}. 

\begin{enumerate}\setcounter{enumi}{6}

\item $\A$- and $\B$-type NDDs donate to $V^\AAB$ and $V^\BAB$, respectively. Note that $|N^A|\propto \beta \mu_A$ and $|V^\AAB|\propto  (1-\pbar)\mu_A\mAB$. The inequality $\beta \mu_\A < \mu_A\mAB(1-\pbar)$ holds due to assumption \ref{midb3:b4},  and a.s. $|N^\A|<|V^\AAB|$. By the same argument, a.s. $|N^\B|<|V^\BAB|$. Thus, both $N^\A$ and $N^\B$ are exhausted, and $|V^\AAB| \propto \mu_\A \mAB (1-\pbar) - \beta \mu_\A $ and $|V^\BAB| \propto \mu_\B \mAB (1-\pbar) - \beta \mu_\B $.

\item Create cycles of the form ($\ABO$, $\OA$, $\AAB$). The current size of each vertex group is

\begin{enumerate}[label={(\arabic*)}]
\item $|V^\ABO| \propto \pbar \mAB \mu_\O$
\item $|V^\OA| \propto (1-\pbar )\mu_\A \mu_\O$
\item $|V^\AAB| \propto \mu_\A \mAB (1-\pbar ) - \beta \mu_\A$
\end{enumerate}

Note that the inequality $(3)<(1)$ can be written as

$$ \beta > \mAB (1 - \pbar) - \pbar \mAB \mu_\O/\mu_\A$$

which holds by assumption \ref{midb3:b1} and a.s.  $|V^\AAB| < |V^\ABO|$. The inequality $(3)<(2)$ can be written as

$$\beta > (1 - \pbar)(\mAB - \mu_\O) $$

which holds by the model assumptions, and a.s. $|V^\AAB| < |V^\OA|$. Executing these cycles exhausts $V^\AAB$ and leaves the following vertices remaining

\begin{enumerate}[label={}]
\item $|V^\OA| \propto (1-\pbar) \mu_\A (\mu_\O -\mAB) + \mu_\A \beta $
\item $|V^\ABO| \propto  \pbar \mAB \mu_\O - \mu_\A \mAB(1-\pbar)  + \mu_\A \beta$
\end{enumerate}

\item Create cycles of the form ($\ABO$, $\OB$, $\BAB$). The current size of each vertex group is

\begin{enumerate}[label={(\arabic*)}]
\item $|V^\ABO| \propto \pbar \mAB \mu_\O - \mu_\A \mAB(1-\pbar)  + \mu_\A \beta$
\item $|V^\OB| \propto (1-\pbar )\mu_\B \mu_\O$
\item $|V^\BAB| \propto \mu_\B \mAB (1-\pbar ) - \beta \mu_\B$
\end{enumerate}

Inequality $(3)<(2)$ can be written as

$$\beta > (1-\pbar)(\mAB-\mu_\O )$$

which holds by the model assumptions. Inequality $(3)<(1)$ can be written as

$$\beta > \mAB (1-\pbar)-\mAB\mu_\O\pbar/(\mu_\A+\mu_\B)$$

which holds by assumption \ref{midb3:b1}.

Executing these cycles exhausts $V^\BAB$ and leaves the following vertices remaining

\begin{enumerate}[label={}]
\item $|V^\ABO| \propto (\beta - (1 - \pbar) \mAB) (\mu_\A + \mu_\B) + 
 \pbar \mAB \mu_\O $
\item $|V^\BAB| \propto \mu_\B (\beta - (1 - \pbar) (\mAB - \mu_\O)) $
\end{enumerate}

\item Create chains of the form  ($\O$,$\OA$,$\AAB$). Previous steps exhausted $V^\AAB$ so none of these chains occur.

\item Create chains of the form  ($\O$,$\OB$,$\BAB$). Previous steps exhausted $V^\BAB$ so none of these chains occur.

\item $\O$-type NDDs and $V^\ABO$ match with remaining under-demanded vertices, starting with $V^\OAB$. The remaining size of each vertex group is 

\begin{enumerate}[label={(\arabic*)}]
\item $|N^\O| \propto \beta \mu_\O$
\item $|V^\ABO| \propto (\beta - (1 - \pbar) \mAB) (\mu_\A + \mu_\B) + 
 \pbar \mAB \mu_\O$
\item $|V^\OAB| \propto \mu_\O \mAB $
\end{enumerate}

Note that the inequality $(1)+(2)<(3)$ can be written as

$$\beta < \mAB(1-\pbar) $$

which holds by assumption \ref{midb3:b4} Thus $\O$-type NDDs are exhausted first, leaving some vertices remaining in $V^\OAB$, with 

$$ |V^\OAB| \propto \left((1-\pbar) \mAB-\beta) (1-\mAB\right)$$

\end{enumerate}

In the efficient matching described above, the number of \textit{matched} pairs in each under-demanded group is 

\begin{enumerate}[label={}]
\item $|V^\OA| \propto   \mu_\A (\mAB+\pbar( \mu_\O-\mAB)-\beta)$
\item $|V^\OB| \propto  \mu_\B (\mAB+\pbar( \mu_\O-\mAB)-\beta) $
\item $|V^\AAB| \propto \mu_\A \mAB $ 
\item $|V^\BAB| \propto \mu_\B \mAB$.
\item $|V^\OAB| \propto (\beta +\mAB \pbar)(1-\mAB) -\mAB (\mu_\A+\mu_\B) $ 
\end{enumerate}

Combining these with the over-demanded and self-demanded vertices, the total size of the efficient matching is

\begin{align*}
u_E &=  \mAB \mu_\B + \mu_\A \left(\mAB + 2 \mu_\B\right) + \beta \mu_\O \\
  &+\pbar \big[\mu_\A^2 + \mu_\A \mAB + \mAB^2 + \mAB \mu_\B + \mu_\B^2 \\
  &+ 2 (\mu_\A + \mAB + \mu_\B) \mu_\O + \mu_\O^2\big]
\end{align*}

To calculate the price of fairness we now find the size of the fair matching. The only unmatched highly sensitized patients are in $V_H^\OAB$, some of which were matched in step \ref{ostep} above. We now show that the number of matched vertices in $V^\OAB$ is smaller than the initial size of $V_H^\OAB$, so not all vertices in $V_H^\OAB$ can be matched. Let $M^\OAB$ be the number of matched vertices in $V^\OAB$, and $H^\OAB$ be the initial size of $V_H^\OAB$. The inequality $M^\OAB<H^\OAB$ can be written as

\begin{align}
(\beta +\mAB \pbar)(1&-\mAB) -\mAB (\mu_\A+\mu_\B)  < (1-\lambda)\mu_\O\mAB \\
\beta &< \mAB (1-\pbar)-\lambda \mu_\O\frac{\mAB}{1-\mAB}
\end{align}

This inequality holds by assumption \ref{midb3:b4}, and a.s. there are some unmatched vertices in $V_H^\OAB$. The number of unmatched highly sensitized vertices is 

$$H^\OAB-M^\OAB \propto (1-\mAB) ((1-\pbar) \mAB-\beta)-\lambda \mAB \mu_\O.$$

We match each of these remaining vertices by removing a 3-cycle of the form ($\ABO$, $\OA$, $\AAB$) and creating a 2-cycle ($\ABO$, $\OAB$). This matching used $|V^\ABO|\propto \pbar \mu_\O \mAB$ 3-cycles of this form, while $|V_H^\OAB| \propto (1-\lambda) \mu_\O \mAB$. The model assumptions ensure that $|V^\ABO|>|V_H^\OAB| $, and all remaining vertices in $V_H^\OAB$ can be matched in this way.

To match each remaining vertex in $V_H^\OAB$, we remove from the matching one vertex from both $V^\OA$ and $V^\AAB$. Thus the total efficiency loss is $H^\OAB-M^\OAB$. The price of fairness is

\begin{align*}
\POF{\mathcal{M}}{u_{LEX}} &= \frac{(1-\mAB) ((1-\pbar) \mAB-\beta)-\lambda \mAB \mu_\O}{ u_E }
\end{align*}

With $u_E$ defined previously.
\end{proof}

Next we compare the price of fairness in Propositions \ref{prop:smallbeta}, \ref{prop:midbeta11}, \ref{prop:midbeta13}, and \ref{prop:midbeta3} to the price of fairness in the efficient matching without NDDs, given in~\englishcite{Dickerson14:Price}:

\begin{align}\label{eq:dickersonpof}
\text{POF}_0 &= \frac{(1-\lambda)\mu_\O \mAB}{ u_E  }
\end{align}

\begin{align*}
u_E &=  \pbar \Big[ 2\mAB \mu_\B +  2\mAB \mu_\A +  3\mAB \mu_\O \\
 +& 2\mu_\A\mu_\O +  2\mu_\B \mu_\O + \mu_\O^2 + \mu_\A^2 + \mu_\B^2+ \mAB^2 \Big] \\
+& 2\mu_\A \mu_\B 
\end{align*}

The following Lemmas state that $\text{POF}_0$ is an upper bound on the price of fairness when NDDs are used, for each of the four cases when the price of fairness is nonzero.

\begin{lemma} \label{lem:prop12-bounded}
The price of fairness in Propositions \ref{prop:smallbeta} and \ref{prop:midbeta11} is bounded above by $\text{POF}_0$. 
\end{lemma}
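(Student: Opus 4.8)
The plan is to exploit the fact that the price-of-fairness expression in each of the two propositions shares its numerator with $\text{POF}_0$, so the entire comparison collapses to a comparison of denominators. First I would observe that in both Proposition~\ref{prop:smallbeta} and Proposition~\ref{prop:midbeta11} the price of fairness is
$$\POF{\mathcal{M}}{u_{LEX}} = \frac{(1-\lambda)\mu_\O \mAB}{u_E},$$
whose numerator $(1-\lambda)\mu_\O\mAB$ is exactly the numerator of $\text{POF}_0$ in \Eqref{eq:dickersonpof}. Hence the desired inequality $\POF{\mathcal{M}}{u_{LEX}}\leq\text{POF}_0$ is equivalent to the denominator inequality $u_E \geq u_E^0$, where $u_E^0$ denotes the denominator appearing in $\text{POF}_0$.

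Next I would write the two denominators side by side. The bracketed $\pbar$-term together with the summand $2\mu_\A\mu_\B$ is \emph{identical} in $u_E$ and $u_E^0$; the only difference is that $u_E$ carries the additional summand $\beta(\mu_\A+\mu_\B+2\mu_\O)$ that is absent from $u_E^0$. Since $\beta\geq 0$ by the standing assumption and each blood-type fraction $\mu_\A,\mu_\B,\mu_\O$ is nonnegative, this extra summand is nonnegative, so
$$u_E = u_E^0 + \beta(\mu_\A+\mu_\B+2\mu_\O) \geq u_E^0.$$

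Finally I would conclude by monotonicity of $x\mapsto 1/x$ on the positive reals. Because $\pbar>0$ and the blood-type fractions are strictly positive, $u_E^0>0$, and the shared numerator $(1-\lambda)\mu_\O\mAB$ is nonnegative; dividing a fixed nonnegative numerator by the larger denominator $u_E$ therefore yields a value no greater than dividing by $u_E^0$, which is precisely $\POF{\mathcal{M}}{u_{LEX}}\leq\text{POF}_0$. There is no substantive obstacle in this argument—the only point requiring any care is verifying the sign conditions ($\beta\geq 0$ and $u_E^0>0$) that license both the denominator inequality and the monotonicity step, and these follow immediately from the model's standing assumptions on $\beta$ and the $\mu_X$.
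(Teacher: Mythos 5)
Your proposal is correct and matches the paper's own argument: both proofs observe that the two expressions share the numerator $(1-\lambda)\mu_\O\mAB$ and that the denominator of the with-NDD price of fairness exceeds that of $\text{POF}_0$ by exactly the nonnegative term $\beta(\mu_\A+\mu_\B+2\mu_\O)$, so the bound follows. Your added verification of the sign conditions ($\beta\geq 0$, positivity of the denominator) is a minor tightening of the same argument, not a different route.
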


\begin{proof}[sketch]
The price of fairness in Propositions \ref{prop:smallbeta} and \ref{prop:midbeta11} is

\begin{align*}
\text{POF}_A &= \frac{(1-\lambda)\mu_\O \mAB}{ u_E  }\\
\\
u_E &=  \pbar \Big[ 2\mAB \mu_\B +  2\mAB \mu_\A +  3\mAB \mu_\O \\
 +& 2\mu_\A\mu_\O +  2\mu_\B \mu_\O + \mu_\O^2 + \mu_\A^2 + \mu_\B^2+ \mAB^2 \Big] \\
+& 2\mu_\A \mu_\B + \beta \left(\mu_\A + \mu_\B + 2\mu_\O \right) 
\end{align*}

Both $\text{POF}_0$ and $\text{POF}_A$ have the same numerator, and the denominator of  $\text{POF}_A$ is equal to the denominator of $\text{POF}_0$, with the additional term $\beta \left(\mu_\A + \mu_\B + 2\mu_\O \right)$. Thus when $\beta=0$, $\text{POF}_0=\text{POF}_A$, and when $\beta>0$,  $\text{POF}_0>\text{POF}_A$, and the price of fairness in Propositions \ref{prop:smallbeta} and \ref{prop:midbeta11} is bounded above by $\text{POF}_0$.
\end{proof}

\begin{lemma} \label{lem:prop34-bounded}
The price of fairness in Propositions \ref{prop:midbeta13} and \ref{prop:midbeta3} is bounded above by $\text{POF}_0$. 
\end{lemma}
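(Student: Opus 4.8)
The plan is to prove the bound exactly as in Lemma~\ref{lem:prop12-bounded}, by showing that passing from $\text{POF}_0$ to the price of fairness of Propositions~\ref{prop:midbeta13} and~\ref{prop:midbeta3} can only shrink the numerator and grow the denominator. Write the common expression of those two propositions as $N_B/D_B$ with $N_B=(1-\mAB)((1-\pbar)\mAB-\beta)-\lambda\mAB\mu_\O$, and write $\text{POF}_0=N_0/D_0$ with $N_0=(1-\lambda)\mu_\O\mAB$. Both numerators contain the identical term $-\lambda\mAB\mu_\O$ (note $\lambda\mAB\mu_\O=\lambda\mu_\O\mAB$), so it suffices to establish the two facts $N_B\le N_0$ and $D_B\ge D_0$; since the denominators are positive under the model assumptions and the numerators are positive in the valid regime, these together give $N_B/D_B\le N_0/D_B\le N_0/D_0$.

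For the numerator, cancelling the shared term reduces $N_B\le N_0$ to $(1-\mAB)((1-\pbar)\mAB-\beta)\le \mu_\O\mAB$, i.e.\ to $\beta\ge \mAB\big((1-\pbar)-\mu_\O/(1-\mAB)\big)$. This is precisely assumption~\ref{midb13:b5} of Proposition~\ref{prop:midbeta13}, so the numerator bound is immediate there. For Proposition~\ref{prop:midbeta3} I would instead derive it from assumption~\ref{midb3:b1}: its lower bound on $\beta$ exceeds the threshold above because $\pbar\mu_\O/(\mu_\A+\mu_\B)\le \mu_\O/(1-\mAB)$, which is the inequality $\pbar(1-\mAB)\le \mu_\A+\mu_\B$ discussed below.

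For the denominator I would compute $D_B-D_0$ directly; the $\pbar$-bracketed terms differ by $\mAB(\mu_\A+\mu_\B+\mu_\O)$ and the remaining terms contribute $\mAB(\mu_\A+\mu_\B)+\beta\mu_\O$, so
\begin{align*}
D_B-D_0=\mAB(\mu_\A+\mu_\B)(1-\pbar)+\mu_\O(\beta-\pbar\mAB).
\end{align*}
This is increasing in $\beta$, so its minimum over $\beta\ge 0$ occurs at $\beta=0$, where it equals $\mAB\big((\mu_\A+\mu_\B)(1-\pbar)-\pbar\mu_\O\big)$; this is nonnegative exactly when $\mu_\A+\mu_\B\ge \pbar(\mu_\A+\mu_\B+\mu_\O)=\pbar(1-\mAB)$. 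The key inequality $\pbar(1-\mAB)\le \mu_\A+\mu_\B$ is where the model assumptions enter: using $\pbar<2/5$ and $\mu_\O<\tfrac{3}{2}\mu_\A$ gives $\pbar(\mu_\A+\mu_\B+\mu_\O)<\tfrac{2}{5}(\tfrac{5}{2}\mu_\A+\mu_\B)=\mu_\A+\tfrac{2}{5}\mu_\B\le \mu_\A+\mu_\B$. Combining the two facts yields the claim. The main obstacle is really just this last inequality together with correctly matching assumption~\ref{midb13:b5} to the numerator comparison; the denominator difference, once the bracketed $\pbar$-terms are collected, factors cleanly and needs no case analysis in $\beta$ because it is monotone in $\beta$ throughout the regime.
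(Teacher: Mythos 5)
Your proof is correct---I checked the algebra, including the identity $D_B-D_0=\mAB(\mu_\A+\mu_\B)(1-\pbar)+\mu_\O(\beta-\pbar\mAB)$ and the reduction of the numerator comparison to $\beta\ge\mAB\bigl((1-\pbar)-\mu_\O/(1-\mAB)\bigr)$---and it shares the paper's skeleton (show the numerator can only shrink and the denominator can only grow), but it justifies both halves differently. For the numerator, the paper argues combinatorially: both numerators count the unmatched vertices of $V_H^\OAB$, and since the efficient matching with NDDs already matches some of these vertices while the matching without NDDs matches none, the numerator of $\text{POF}_B$ is strictly smaller than that of $\text{POF}_0$. You instead cancel the common $-\lambda\mAB\mu_\O$ term and invoke the propositions' own lower bounds on $\beta$ (assumption \ref{midb13:b5} of Proposition \ref{prop:midbeta13} directly, and assumption \ref{midb3:b1} of Proposition \ref{prop:midbeta3} via the inequality $\pbar(1-\mAB)\le\mu_\A+\mu_\B$), which is self-contained algebra requiring no interpretation of the formulas as vertex counts. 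For the denominator, the roles reverse: the paper derives $D_0<D_B$ from the lower-bound assumption $\beta>\mAB(1-\pbar)-\mAB\mu_\O\pbar/\mu_\A$ combined with $\pbar\le 2/5$ and $\mu_\O\le\tfrac{3}{2}\mu_\A$, whereas you observe that $D_B-D_0$ is increasing in $\beta$ and already nonnegative at $\beta=0$ under the model assumptions alone, so no assumption on $\beta$ is needed there at all. Your route buys a cleaner and slightly stronger denominator claim (valid for every $\beta\ge 0$, not just inside the propositions' regimes) and a fully algebraic argument throughout, at the cost of a longer numerator step; the paper's route buys a one-line numerator argument with combinatorial insight into why adding NDDs can only help, at the cost of tying the denominator bound to a $\beta$ assumption that your calculation shows is unnecessary.
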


\begin{proof}[sketch]
The price of fairness in Propositions \ref{prop:midbeta13} and \ref{prop:midbeta3} is

\begin{align*}
\text{POF}_B &= \frac{(1-\mAB) ((1-\pbar) \mAB-\beta)-\lambda \mAB \mu_\O}{u_E}\\
\\
u_E &=  \mAB \mu_\B + \mu_\A \left(\mAB + 2 \mu_\B\right) + \beta \mu_\O \\
  &+\pbar \big[\mu_\A^2 + \mu_\A \mAB + \mAB^2 + \mAB \mu_\B + \mu_\B^2 \\
  &+ 2 (\mu_\A + \mAB + \mu_\B) \mu_\O + \mu_\O^2\big]
\end{align*}

To show that $\text{POF}_B<\text{POF}_0$ holds, we first show both (1) the numerator of $\text{POF}_B$ is smaller than that of $\text{POF}_0$, and (2) the denominator of $\text{POF}_B$ is larger than the denominator of $\text{POF}_0$.

\noindent \textbf{(1)} In both $\text{POF}_0$ and $\text{POF}_B$, the numerator is proportional to the number of remaining vertices in $V_H^\OAB$, after constructing the efficient matching. In Proposition \ref{prop:midbeta13} and \ref{prop:midbeta3} the efficient matching contains some vertices in $V_H^\OAB$; without NDDs, the efficient matching contains no vertices in $V_H^\OAB$. Thus, the numerator of $\text{POF}_B$ is strictly smaller the numerator of  $\text{POF}_0$.

\noindent \textbf{(2)} Let the $D_0$ be the denominator of $\text{POF}_0$, and $D_B$ be the denominator of $\text{POF}_B$. We now show that the inequality $D_0<D_B$ holds. First, note that this inequality can be written as

$$
\mAB -(1 - \pbar) \mAB^2 + \beta \mu_\O > \mAB (\pbar + \mu_\O). 
$$

Rearranging, we have

{\small
\begin{align}\label{eq:ineq2}
\beta &>(\mAB/\mu_\O)\big[ (1-\pbar)\mAB -(\mu_\A+\mu_\B+\mAB-\pbar)\big].
\end{align}
}

We now show that inequality \ref{eq:ineq2} is satisfied by the the following assumption on $\beta$, made in Propositions \ref{prop:midbeta13} and \ref{prop:midbeta3}:

\begin{equation*}\label{eq:ineq3}
\mathbf{A}:  \beta > \mAB (1-\pbar)-\mAB \mu_\O \pbar / \mu_\A.
 \end{equation*}

Next, we show that assumption $\mathbf{A}$ implies inequality \ref{eq:ineq2}, and thus assumption $\mathbf{A}$ implies $D_0<D_B$. Assumption $\mathbf{A}$ implies \ref{eq:ineq2} if the right-hand side of $\mathbf{A}$ is larger than the right hand side of \ref{eq:ineq2}, that is,

{\small
\begin{align*}
 \mAB (1-\pbar)-\mAB \mu_\O \pbar / \mu_\A  &>  (\mAB/\mu_\O)(1-\pbar)\mAB \\
 &-(\mAB/\mu_\O)(\mu_\A+\mu_\B+\mAB-\pbar)
\end{align*}
}

rearranging, we have

\begin{align*}
\frac{1-\pbar}{\pbar} > \frac{\mu_\O}{\mu_\A}\frac{1-\mAB-\mu_\B}{ 1 - \mu_\B }
\end{align*}

The random graph model assumes $\pbar\leq 2/5$, and $\mu_\O \leq (3/2)\mu_\A$, thus we have

\begin{align*}
\frac{1-\pbar}{\pbar} \geq \frac{3}{2}> \frac{3}{2} \frac{1-\mAB-\mu_\B}{ 1 - \mu_\B } \geq \frac{\mu_\O}{\mu_\A}\frac{1-\mAB-\mu_\B}{ 1 - \mu_\B }.
\end{align*}

This shows that assumption $\mathbf{A}$ implies $D_0<D_B$.

Thus, the numerator of $\text{POF}_0$ is larger than the numerator of $\text{POF}_B$, and the denominator of $\text{POF}_0$ is smaller than the denominator of $\text{POF}_B$, and therefore $\text{POF}_B<\text{POF}_0$.
\end{proof}

Lemmas \ref{lem:prop12-bounded} and \ref{lem:prop34-bounded} show that with $\beta>0$, the price of fairness has the same upper bound as when $\beta=0$, given in~\englishcite{Dickerson14:Price}. That is, adding NDDs to the random graph model does not increase the price of fairness. 

\begin{reptheorem}{thm:pofdecreases}
Adding NDDs to the random graph model ($\beta>0$) does not increase the upper bound on the price of fairness found by~\englishcite{Dickerson14:Price}.
\end{reptheorem}

\begin{proof}
When $\beta>0$, there are only four possible matchings with nonzero price of fairness, and the price of fairness for each case is given in Propositions \ref{prop:smallbeta}, \ref{prop:midbeta11}, \ref{prop:midbeta13}, and \ref{prop:midbeta3}. Lemmas \ref{lem:prop12-bounded} and \ref{lem:prop34-bounded} state that in each of these four cases, the matching with NDDs has a tighter bound on the price of fairness than the matching without NDDs, given in~\englishcite{Dickerson14:Price}.
\end{proof}

Next we show that the price of fairness is zero when $\beta>1/8$, by finding the maximum possible $\beta$ for each of the four cases with nonzero price of fairness.

\begin{lemma}\label{lem:maxbeta-small}
In the matching described by Proposition \ref{prop:smallbeta}, $\beta<1/8$.
\end{lemma}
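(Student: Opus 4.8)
The plan is to bound $\beta$ using only a single one of the three hypotheses of Proposition~\ref{prop:smallbeta}, together with the canonical blood-type assumptions of the random graph model. Specifically, I would keep only constraint~\ref{smallb:b1b},
\[
\beta < \mAB\left(\frac{\mu_\A}{\mu_\A+\mu_\O}-\pbar\right),
\]
and show that its right-hand side is already strictly below $1/8$; the other two constraints~\ref{smallb:b1} and~\ref{smallb:b1a} are not needed for this lemma, since a subset of the hypotheses suffices to prove the conclusion.

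The argument then factors into two elementary estimates. First I would bound $\mAB$: because the blood-type fractions partition the population we have $\mu_\O+\mu_\A+\mu_\B+\mAB=1$, and the model assumes the strict ordering $\mu_\O>\mu_\A>\mu_\B>\mAB$, so $\mAB$ is the strict minimum of four positive numbers summing to $1$. Hence $4\mAB<1$, giving $\mAB<1/4$. Second I would bound the remaining factor: since $\mu_\A<\mu_\O$ we get $\mu_\A+\mu_\O>2\mu_\A$ and therefore $\tfrac{\mu_\A}{\mu_\A+\mu_\O}<\tfrac12$; as $\pbar\ge 0$, subtracting the nonnegative term only decreases the expression, so $\tfrac{\mu_\A}{\mu_\A+\mu_\O}-\pbar<\tfrac12$.

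Finally I would combine the two bounds. Note that for a nonnegative $\beta$ to satisfy~\ref{smallb:b1b} at all, the right-hand side must be positive, so both factors are positive and their upper bounds may be multiplied:
\[
\beta < \mAB\left(\frac{\mu_\A}{\mu_\A+\mu_\O}-\pbar\right) < \frac14\cdot\frac12 = \frac18.
\]
There is no real obstacle here: the whole proof is a short chain of inequalities. The only points deserving care are observing that the sum-to-one normalization combined with the strict ordering of the four blood-type masses yields $\mAB<1/4$, and that positivity of the bound in~\ref{smallb:b1b} is what licenses multiplying the two factor-wise estimates.
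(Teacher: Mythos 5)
Your proof is correct and takes essentially the same approach as the paper: both rest solely on constraint~\ref{smallb:b1b}, bounding $\mAB<1/4$ and $\mu_\A/(\mu_\A+\mu_\O)<1/2$ and discarding the $\pbar$ term (the paper additionally remarks that constraints~\ref{smallb:b1} and~\ref{smallb:b1a} are looser, but like you it derives the $1/8$ bound only from~\ref{smallb:b1b}). If anything, your justification of the fraction bound via $\mu_\A<\mu_\O\Rightarrow\mu_\A+\mu_\O>2\mu_\A$ is tighter than the paper's appeal to $\mu_\A<1/2$ and $\mu_\A+\mu_\O<1$, which as literally written would not by itself bound the ratio by $1/2$.
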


\begin{proof}
Proposition \ref{prop:smallbeta} makes the following assumptions on $\beta$:

\begin{enumerate}[label=\textbf{\arabic*}]
\item  $\beta< \mu_\A \left(1-\pbar\right) - \pbar \mu_\text{AB}$ \label{lemsmallb:1}
\item $\beta < \mAB (1 - \pbar ) - \pbar \mAB \mu_\O / \mu_\A$ \label{lemsmallb:2}
\item $\beta < \mAB \left(\frac{\mu_\A}{\mu_\A+\mu_\O} - \pbar \right)$ \label{lemsmallb:3}
\end{enumerate}

To determine an upper bound on $\beta$, we maximize the right hand side of constraint \ref{lemsmallb:3}. Note that the model assumes $\mAB<1/4$, $\mu_\A<1/2$, and $\mu_\A+\mu_\O<1$. Using these bounds, and $\pbar \rightarrow 0$, constraint \ref{lemsmallb:3} is bounded by

\begin{align*}\beta &< \mAB \left(\frac{\mu_\A}{\mu_\A+\mu_\O} - \pbar \right) < (1/4) \frac{(1/2)}{1}=1/8 \\
\beta &< 1/8
\end{align*}

Constraints \ref{lemsmallb:1} and \ref{lemsmallb:2} are looser than constraint \ref{lemsmallb:3}: with the values $\pbar \rightarrow 0$, $\mu_\A \rightarrow1/4$, and $\mAB \rightarrow 1/4$, both constraints reduce to $\beta < 1/4$.

\end{proof}

\begin{lemma}\label{lem:maxbeta-mid11}
In the matching described by Proposition \ref{prop:midbeta11}, $\beta<1/12$.
\end{lemma}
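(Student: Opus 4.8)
The plan mirrors the proof of Lemma~\ref{lem:maxbeta-small}, but with one extra wrinkle: no single upper bound among the constraints of Proposition~\ref{prop:midbeta11} is by itself tight enough to give $1/12$. Indeed, maximizing the right-hand side of constraint~\ref{midb1:b5} alone (sending $\pbar\to 0$ and pushing the blood-type fractions toward their extreme admissible values) only yields $\beta<1/6$. The key observation is that this regime carries both the upper bounds \ref{midb1:b1}, \ref{midb1:b2}, \ref{midb1:b4}, \ref{midb1:b5} \emph{and} the lower bound \ref{midb1:b3} on $\beta$; requiring the admissible $\beta$-interval to be nonempty forces a lower bound on $\pbar$, which in turn sharpens \ref{midb1:b5} enough to reach $1/12$.

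Concretely, I would first pair the lower bound \ref{midb1:b3}, namely $\beta>\mAB(1-\pbar)-\mAB\mu_\O\pbar/\mu_\A$, with the upper bound \ref{midb1:b5}, namely $\beta<\mAB(1-\pbar)-\mu_\O\mAB/(1-\mAB)$. Demanding that the former be strictly smaller than the latter cancels the common term $\mAB(1-\pbar)$ and, after dividing through by the positive quantity $\mAB\mu_\O$, reduces to the feasibility condition $\pbar>\mu_\A/(1-\mAB)$, equivalently $1-\pbar<(1-\mAB-\mu_\A)/(1-\mAB)$. (Pairing \ref{midb1:b3} with \ref{midb1:b1} instead yields only the vacuous inequality $1/\mu_\A>1/(\mu_\A+\mu_\B)$, which is why \ref{midb1:b5} is the binding upper bound here.)

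Next I would substitute this inequality back into \ref{midb1:b5}. Using $1-\pbar<(1-\mAB-\mu_\A)/(1-\mAB)$ gives
$$\beta<\frac{\mAB}{1-\mAB}\bigl[(1-\mAB-\mu_\A)-\mu_\O\bigr]=\frac{\mAB\,\mu_\B}{1-\mAB},$$
where the final step invokes the blood-type identity $\mu_\O+\mu_\A+\mu_\B+\mAB=1$, so that $1-\mAB-\mu_\A-\mu_\O=\mu_\B$. Writing $1-\mAB=\mu_\A+\mu_\B+\mu_\O$ and applying the model ordering $\mu_\O>\mu_\A>\mu_\B$, the denominator exceeds $3\mu_\B$, so $\mu_\B/(1-\mAB)<1/3$. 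Finally, since $\mAB$ is the smallest of four positive fractions summing to one, $\mAB<1/4$, and therefore $\beta<\mAB/3<1/12$.

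The main obstacle is recognizing that the bound is genuinely two-sided: one must exploit the presence of the lower bound \ref{midb1:b3} (through nonemptiness of the $\beta$-interval) to extract the extra constraint $\pbar>\mu_\A/(1-\mAB)$, since the upper bound \ref{midb1:b5} on its own only delivers $1/6$. Everything after that is routine algebra together with the standing model assumptions $\mu_\O>\mu_\A>\mu_\B>\mAB$ and $\mu_\O+\mu_\A+\mu_\B+\mAB=1$.
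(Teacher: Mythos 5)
Your proof is correct and follows essentially the same route as the paper: combining the lower bound \ref{midb1:b3} with the upper bound \ref{midb1:b5} to force $\pbar > \mu_\A/(1-\mAB)$ (the paper's constraint $\mathbf{A}$) and substituting back into \ref{midb1:b5} is exactly the paper's argument, and your intermediate bound $\mAB\mu_\B/(1-\mAB)$ coincides with the paper's $\mAB\left(1-\frac{\mu_\A+\mu_\O}{1-\mAB}\right)$ via the blood-type identity. The only cosmetic difference is the final numerical step, where you use $\mAB<1/4$ together with $\mu_\B/(1-\mAB)<1/3$ in place of the paper's limiting values $\mAB\to 1/4$ and $\mu_\A+\mu_\O\to 1/2$.
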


\begin{proof}

Proposition \ref{prop:midbeta11} makes the following assumptions

\begin{enumerate}[label=\textbf{\arabic*}]
\item $\beta < \mAB (1-\pbar)-\mAB\mu_\O\pbar/(\mu_\A+\mu_\B)$\label{lem11:b1}
\item  $\beta <  \frac{\mu_\A \mAB(1-\pbar )+\mu_\B \mu_\O(1-\pbar)-\pbar \mu_\O \mAB }{\mu_\A+\mu_\O}$ \label{lem11:b2}
\item $ \beta >\mAB (1-\pbar)-\mAB \mu_\O \pbar/\mu_\A$ \label{lem11:b3}
\item $\beta <\mAB(1-\pbar )-\pbar \mAB\mu_\O/\mu_\A + (1-\pbar) \mu_\B \mu_\O /\mu_A$ \label{lem11:b4}
\item $\beta < \mAB (1-\pbar)-\mu_\O \mAB/(1-\mAB)$ \label{lem11:b5}
\end{enumerate}

Combining \ref{lem11:b3} and \ref{lem11:b5}, we have

\begin{align*}
\mu_\O \mAB/(1-\mAB) &<  \mAB (1-\pbar)-\beta <  \mAB \mu_\O \pbar/\mu_\A \\
\mu_\O \mAB/(1-\mAB) & <  \mAB \mu_\O \pbar/\mu_\A \\
\mathbf{A}: \mu_\A/(1-\mAB) &< \pbar\\
\end{align*}

Combining constraint $\mathbf{A}$ with \ref{lem11:b5} gives a new upper bound on $\beta$,

\begin{align*} 
\beta &< \mAB (1-\pbar)-\mu_\O \mAB/(1-\mAB) \\
&< \mAB (1-\mu_\A/(1-\mAB) )-\mu_\O \mAB/(1-\mAB)\\
\\
\beta &<  \mAB \left( 1-\frac{\mu_\A +\mu_\O}{1-\mAB}\right)
\end{align*}

This bound is maximized when when $\mAB$ is maximal, and $(\mu_\A+\mu_\O)$ is minimal. In the random graph model, these values are $\mAB\rightarrow 1/4$ and $(\mu_\A+\mu_\O)\rightarrow 1/2$, and the numerical bound is

\begin{align*}
\beta &<  (1/4) \left( 1-\frac{(1/2)}{1-1/4}\right)=1/12 \\
\beta &< 1/12
\end{align*}

\end{proof}

\begin{lemma}\label{lem:maxbeta-mid13}
In the matching described by Proposition \ref{prop:midbeta13},  $\beta<1/8$.
\end{lemma}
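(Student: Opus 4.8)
The plan is to bound $\beta$ from above by playing the single lower bound \ref{midb13:b1} against the upper bound \ref{midb13:b6}, together with the model assumption $\lambda > 1-\pbar$, so as to eliminate both $\pbar$ and $\lambda$ and reduce to an expression in the blood‑type fractions alone, which I then maximize subject to $\mu_\O>\mu_\A>\mu_\B>\mAB$. This mirrors the strategy of Lemma \ref{lem:maxbeta-mid11}, where a lower and an upper bound on $\beta$ are combined to force a relationship among the parameters before the final numerical optimization.

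First I would note that for the feasible region of Proposition \ref{prop:midbeta13} to be nonempty, the lower bound \ref{midb13:b1} must lie below the upper bound \ref{midb13:b6}. Writing both as bounds on $\mAB(1-\pbar)-\beta$, we get $\lambda\mAB\mu_\O/(1-\mAB) < \mAB(1-\pbar)-\beta < \mAB\mu_\O\pbar/\mu_\A$, and cancelling $\mAB\mu_\O$ yields $\pbar > \lambda\mu_\A/(1-\mAB)$. Feeding in the model assumption $\lambda > 1-\pbar$ gives $\pbar > (1-\pbar)\mu_\A/(1-\mAB)$, which rearranges to the clean bound $\pbar > \mu_\A/(1-\mAB+\mu_\A)$, equivalently $1-\pbar < (1-\mAB)/(1-\mAB+\mu_\A)$.

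Next I would return to \ref{midb13:b6} and apply $\lambda > 1-\pbar$ once more to obtain $\beta < \mAB(1-\pbar) - (1-\pbar)\mAB\mu_\O/(1-\mAB) = \mAB(1-\pbar)(\mu_\A+\mu_\B)/(1-\mAB)$, using $1-\mAB-\mu_\O=\mu_\A+\mu_\B$. Substituting the upper bound on $1-\pbar$ from the previous step cancels the factor $1-\mAB$ and leaves $\beta < \mAB(\mu_\A+\mu_\B)/(1-\mAB+\mu_\A)$, free of $\pbar$ and $\lambda$. Writing $1-\mAB = \mu_\A+\mu_\B+\mu_\O$, this is $\mAB\cdot(\mu_\A+\mu_\B)/(2\mu_\A+\mu_\B+\mu_\O)$; along the admissible edge $\mu_\B=\mu_\A$ the ratio equals $2\mu_\A/(\mu_\A+1-\mAB)$, which is increasing in $\mu_\A$, so the ratio is maximized at $\mu_\A=\mu_\B=\mu_\O$ with value $1/2$. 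The ordering then forces $\mAB<1/4$, and letting all four fractions tend to $1/4$ gives $\beta < (1/4)(1/2)=1/8$.

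The main obstacle is selecting the correct pair of constraints: pairing \ref{midb13:b5} with \ref{midb13:b6}, or \ref{midb13:b1} with \ref{midb13:b2}, collapses to the vacuous $\lambda<1$ or $\mu_\A<\mu_\A+\mu_\B$, and bounding $\beta$ from \ref{midb13:b6} alone (using only $\lambda>1-\pbar$) yields the too‑weak value $1/6$. It is precisely the interaction of the lower bound \ref{midb13:b1} with \ref{midb13:b6}, which prevents $\pbar$ from shrinking toward $0$, combined with $\lambda > 1-\pbar$, that pins $\beta$ strictly below $1/8$. The remaining work is the constrained maximization itself, where one checks that the optimum sits at all fractions equal to $1/4$ and that the auxiliary assumption $\mu_\O \le (3/2)\mu_\A$ is slack there ($\mu_\O=\mu_\A$), so it does not bind.
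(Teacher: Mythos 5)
Your proof is correct and follows essentially the same route as the paper's: both combine the lower bound \ref{midb13:b1} with the upper bound \ref{midb13:b6} to force $\pbar > \lambda\mu_\A/(1-\mAB)$, invoke $\lambda > 1-\pbar$ to obtain $\pbar > \mu_\A/(1-\mAB+\mu_\A)$, relax \ref{midb13:b6} to $\beta < \mAB(\mu_\A+\mu_\B)/(1-\mAB+\mu_\A)$, and maximize at $\mAB,\mu_\A,\mu_\B,\mu_\O \rightarrow 1/4$ to conclude $\beta < 1/8$. Your added justification of the final maximization step (and the observation that $\mu_\O \leq (3/2)\mu_\A$ does not bind) is slightly more explicit than the paper's, but the argument is the same.
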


\begin{proof}

Proposition \ref{prop:midbeta13} makes the following assumptions on $\beta$

\begin{enumerate}[label=\textbf{\arabic*}]
\item $ \beta >\mAB (1-\pbar)-\mAB\mu_\O\pbar/\mu_\A$ \label{lem13:b1}
\item $\beta < \mAB (1-\pbar)-\mAB\mu_\O\pbar/(\mu_\A+\mu_\B)$ \label{lem13:b2}
\item $\beta <\mAB(1-\pbar )-\pbar \mAB\mu_\O/\mu_\A + (1-\pbar) \mu_\B \mu_\O /\mu_A $ \label{lem13:b4}
\item $\beta >  \mAB\left( (1-\pbar)-\frac{\mu_\O}{1-\mAB}\right)$ \label{lem13:b5}
\item $\beta<\mAB (1-\pbar)-\lambda \mu_\O\frac{\mAB}{1-\mAB}$ \label{lem13:b6}
\end{enumerate}

Combining \ref{lem13:b1} and \ref{lem13:b6} results in the following constraint, which is consistent with the above assumptions:

\begin{align*}
\mathbf{A} : \lambda \frac{ \mu_\A }{1-\mAB} &<  \pbar
\end{align*}

Note that \ref{lem13:b6} is maximized when $\lambda$ is minimized; this occurs when $\lambda+\pbar\rightarrow 1$, and $\lambda \rightarrow 1-\pbar$. In this case, \ref{lem13:b6} can be relaxed as 

\begin{align*}
\beta &< \mAB (1-\pbar)-\lambda \mu_\O\frac{\mAB}{1-\mAB} \\
 &< \mAB (1-\pbar)-(1-\pbar) \mAB \frac{\mu_\O}{1-\mAB} \\ 
 \\
\beta &< \mAB (1-\pbar)-(1-\pbar) \mAB \frac{\mu_\O}{1-\mAB} \\ 
 &= (1-\pbar) \frac{\mAB (\mu_\A+\mu_\B)}{1-\mAB}\\
 \end{align*}
 
 Finally, we have
 
 \begin{align*}
 \beta &< (1-\pbar) \frac{\mAB (\mu_\A+\mu_\B)}{1-\mAB}
\end{align*}

The right hand side of this constraint is maximal when $\pbar$ is minimal; constraint $\mathbf{A}$ determines the lower bound for $\pbar$, with $\lambda \rightarrow 1-\pbar$:

\begin{align*}
 (1-\pbar) \frac{ \mu_\A }{1-\mAB} &<  \pbar \\
 \frac{ \mu_\A }{1-\mAB} &<  \pbar\left( 1+ \frac{ \mu_\A }{1-\mAB}  \right) \\
\frac{\mu_\A}{1-\mAB+\mu_\A} &< \pbar 
\end{align*}

Using this lower bound on $\pbar$, we can further relax \ref{lem13:b6}

\begin{align*}
\beta &< (1-\pbar) \frac{\mAB (\mu_\A+\mu_\B)}{1-\mAB} \\
 &< (1-\frac{\mu_\A}{1-\mAB+\mu_\A}) \frac{\mAB (\mu_\A+\mu_\B)}{1-\mAB} \\
 &= \frac{1-\mAB}{1-\mAB+\mu_\A} \frac{\mAB (\mu_\A+\mu_\B)}{1-\mAB} \\
 &= \frac{\mAB (\mu_\A+\mu_\B)}{1-\mAB+\mu_\A} \\
 \\
 \beta &< \frac{\mAB (\mu_\A+\mu_\B)}{1-\mAB+\mu_\A}
\end{align*}

The right hand side is maximal when $\mAB$ is maximal, and $\mAB,\mu_\A,\mu_\B,\mu_\O \rightarrow 1/4$. This gives the final bound on $\beta$,

\begin{align*}
 \beta &< \frac{(1/4) (1/2)}{1} = 1/8\\
 \beta &< 1/8
\end{align*}

\end{proof}

\begin{lemma}\label{lem:maxbeta-mid3}
In the matching described by Proposition \ref{prop:midbeta3}, $\beta<1/10$.
\end{lemma}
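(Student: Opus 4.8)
The plan is to mirror, almost line for line, the relaxation strategy used in the proof of Lemma~\ref{lem:maxbeta-mid13}, adapting it to the two $\beta$-constraints of Proposition~\ref{prop:midbeta3}. That proposition sandwiches $\beta$ between a lower bound (\ref{midb3:b1}), $\beta > \mAB(1-\pbar) - \mAB\mu_\O\pbar/(\mu_\A+\mu_\B)$, and an upper bound (\ref{midb3:b4}), $\beta < \mAB(1-\pbar) - \lambda\mu_\O\mAB/(1-\mAB)$. First I would impose consistency: for the feasible interval to be nonempty the lower bound cannot exceed the upper bound. Cancelling the common term $\mAB(1-\pbar)$ and the positive factor $\mAB\mu_\O$ yields the auxiliary constraint $\mathbf{A}:\ \lambda\frac{\mu_\A+\mu_\B}{1-\mAB} < \pbar$, which plays exactly the role that constraint $\mathbf{A}$ plays in Lemma~\ref{lem:maxbeta-mid13} (there with $\mu_\A$ in place of $\mu_\A+\mu_\B$, since (\ref{midb3:b1}) carries $\mu_\A+\mu_\B$ in its denominator).

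Next I would maximise the right-hand side of (\ref{midb3:b4}). The subtracted term $\lambda\mu_\O\mAB/(1-\mAB)$ lowers the bound, so it is relaxed most by taking $\lambda$ as small as the model allows; the standing assumption $\pbar > 1-\lambda$ forces $\lambda \to 1-\pbar$. Substituting and using $1-\mAB = \mu_\A+\mu_\B+\mu_\O$ collapses the bound to $\beta < (1-\pbar)\,\mAB\frac{\mu_\A+\mu_\B}{1-\mAB}$, identical to the intermediate bound in the mid13 case. This expression decreases in $\pbar$, so I would minimise $\pbar$ using $\mathbf{A}$ evaluated at $\lambda = 1-\pbar$: solving $(1-\pbar)\frac{\mu_\A+\mu_\B}{1-\mAB} < \pbar$ for $\pbar$ gives the floor $\pbar > \frac{\mu_\A+\mu_\B}{1-\mAB+\mu_\A+\mu_\B}$. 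Plugging this in and simplifying $1-\pbar = \frac{1-\mAB}{1-\mAB+\mu_\A+\mu_\B}$ reduces the bound to the parameter-only expression $\beta < \frac{\mAB(\mu_\A+\mu_\B)}{1-\mAB+\mu_\A+\mu_\B}$.

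Finally I would maximise this over the blood-type fractions. Writing $m = \mAB$ and $s = \mu_\A+\mu_\B$, the bound is $\frac{ms}{1-m+s}$, whose partial derivatives $\frac{m(1-m)}{(1-m+s)^2}$ (in $s$) and $\frac{s(1+s)}{(1-m+s)^2}$ (in $m$) are both positive, so it is increasing in each of $m$ and $s$. The model caps $\mAB < 1/4$, and the ordering $\mu_\O > \mu_\A > \mu_\B > \mAB$ together with $\mu_\A+\mu_\B+\mu_\O+\mAB = 1$ forces all four fractions to the common limit $1/4$ as $\mAB \to 1/4$; hence $m \to 1/4$ and $s \to 1/2$, giving $\beta < \frac{(1/4)(1/2)}{1-1/4+1/2} = \frac{1/8}{5/4} = 1/10$, as claimed.

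The main obstacle I anticipate is bookkeeping rather than conceptual difficulty: correctly identifying that the binding extreme is obtained by sequentially pushing $\lambda$ to its floor $1-\pbar$ and then $\pbar$ to the floor dictated by $\mathbf{A}$, and verifying that the resulting two-variable objective is monotone so that the corner $m=1/4,\ s=1/2$ is genuinely the maximiser. One must also confirm that this corner is consistent with every model assumption---in particular that letting all fractions tend to $1/4$ respects the strict ordering only in the limit---exactly as in the limiting arguments already used in Lemmas~\ref{lem:maxbeta-small} and~\ref{lem:maxbeta-mid13}.
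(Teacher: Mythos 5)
Your proposal is correct and follows essentially the same route as the paper's own proof: combining constraints (\ref{midb3:b1}) and (\ref{midb3:b4}) into the auxiliary condition $\lambda\frac{\mu_\A+\mu_\B}{1-\mAB} < \pbar$, relaxing the upper bound via $\lambda \to 1-\pbar$, substituting the induced floor $\pbar > \frac{\mu_\A+\mu_\B}{2\mu_\A+2\mu_\B+\mu_\O}$ to obtain $\beta < \frac{\mAB(\mu_\A+\mu_\B)}{2\mu_\A+2\mu_\B+\mu_\O}$, and evaluating at the limiting corner $\mAB,\mu_\A,\mu_\B,\mu_\O \to 1/4$ to get $1/10$. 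Every intermediate expression matches the paper's (your $1-\mAB+\mu_\A+\mu_\B$ is the paper's $2\mu_\A+2\mu_\B+\mu_\O$), and your added monotonicity checks only make explicit what the paper asserts.
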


\begin{proof}
Proposition \ref{prop:midbeta3} makes the following assumptions on $\beta$

\begin{enumerate}[label=\textbf{\arabic*}]
\item $\beta > \mAB (1-\pbar)-\mAB\mu_\O\pbar/(\mu_\A+\mu_\B)$\label{lem3:b1}
\item $\beta <\mAB (1-\pbar)-\lambda \mu_\O\frac{\mAB}{1-\mAB}$ \label{lem3:b4}
\end{enumerate}

Combining these assumptions results in the following constraint:

\begin{align*}
\mathbf{A} : \lambda \frac{ \mu_\A +\mu_\B}{1-\mAB} &<  \pbar
\end{align*}

Note that assumption \ref{lem3:b4} is identical to assumption \ref{lem13:b6} in Lemma \ref{lem:maxbeta-mid13}. Following the same procedure used in the proof of Lemma \ref{lem:maxbeta-mid13}, \ref{lem3:b4} can be relaxed as 

\begin{align*}	
 \beta &< (1-\pbar) \frac{\mAB (\mu_\A+\mu_\B)}{1-\mAB}
\end{align*}

The right hand side of this constraint is maximal when $\pbar$ is minimal; constraint $\mathbf{A}$ determines the lower bound for $\pbar$, with $\lambda \rightarrow 1-\pbar$:

\begin{align*}
 (1-\pbar) \frac{ \mu_\A +\mu_\B}{1-\mAB} &<  \pbar \\
 \frac{ \mu_\A+\mu_\B }{1-\mAB} &<  \pbar\left( 1+ \frac{ \mu_\A+\mu_\B }{1-\mAB}  \right) \\
\frac{\mu_\A+\mu_\B}{2\mu_\A+2\mu_\B+\mu_\O} &< \pbar 
\end{align*}

Using this lower bound on $\pbar$, we can further relax \ref{lem3:b4}

\begin{align*}
\beta &< (1-\pbar) \frac{\mAB (\mu_\A+\mu_\B)}{1-\mAB} \\
 &< (1-\frac{\mu_\A+\mu_\B}{2\mu_\A+2\mu_\B+\mu_\O}) \frac{\mAB (\mu_\A+\mu_\B)}{1-\mAB} \\
 &= \frac{1-\mAB}{2\mu_\A+2\mu_\B+\mu_\O} \frac{\mAB (\mu_\A+\mu_\B)}{1-\mAB} \\
 &= \frac{\mAB (\mu_\A+\mu_\B)}{2\mu_\A+2\mu_\B+\mu_\O} \\
 \\
 \beta &< \frac{\mAB (\mu_\A+\mu_\B)}{2\mu_\A+2\mu_\B+\mu_\O}
\end{align*}

The right hand side is maximal when $\mAB$ is maximal, and $\mAB,\mu_\A,\mu_\B,\mu_\O \rightarrow 1/4$. This gives the final bound on $\beta$,

\begin{align*}
 \beta &< \frac{(1/4) (1/2)}{5/4} = 1/10\\
 \beta &< 1/10
\end{align*}
\end{proof}

Combining Lemmas \ref{lem:maxbeta-small}, \ref{lem:maxbeta-mid11}, \ref{lem:maxbeta-mid13}, and \ref{lem:maxbeta-mid3}, we find that the price of fairness is zero when $\beta>1/8$.

\begin{reptheorem}{thm:maxbeta}
The price of fairness is zero when $\beta>1/8$.
\end{reptheorem}

\begin{proof}
There are only four matchings with nonzero price of fairness and $\beta>0$, which are described in Propositions \ref{prop:smallbeta}, \ref{prop:midbeta11}, \ref{prop:midbeta13}, and \ref{prop:midbeta3}. Lemmas \ref{lem:maxbeta-small}, \ref{lem:maxbeta-mid11}, \ref{lem:maxbeta-mid13}, and \ref{lem:maxbeta-mid3} state that the maximum $\beta$ for any of these matchings is $1/8$. When $\beta>1/8$, the matching is not one of these four cases, and the price of fairness is zero.
\end{proof}

Theorems \ref{thm:pofdecreases} and \ref{thm:maxbeta} are the two main theoretical results of this paper: adding NDDs to the random graph model does not increase the upper bound on the price of fairness, and when the proportion of NDDs is high enough ($\beta>1/8$), the price of fairness is zero. We show this by addressing each of the four efficient matchings on the random graph model with nonzero price of fairness. In each case, and $\beta<1/8$, and the matching with NDDs has a smaller price of fairness than the matching without NDDs given in~\englishcite{Dickerson14:Price}.

To further explore these results, we numerically find the maximum price of fairness for the matchings given in Propositions \ref{prop:smallbeta}, \ref{prop:midbeta11}, \ref{prop:midbeta13}, and \ref{prop:midbeta3}. For each matching, we find the maximum price of fairness for a range of $\beta$, within the defined constraints,  using the ``NMaximize'' function in Mathematica with the nonlinear interior point method. 

\begin{figure}[h]\label{fig:allcases}
\centering
\includegraphics[width=0.5\textwidth]{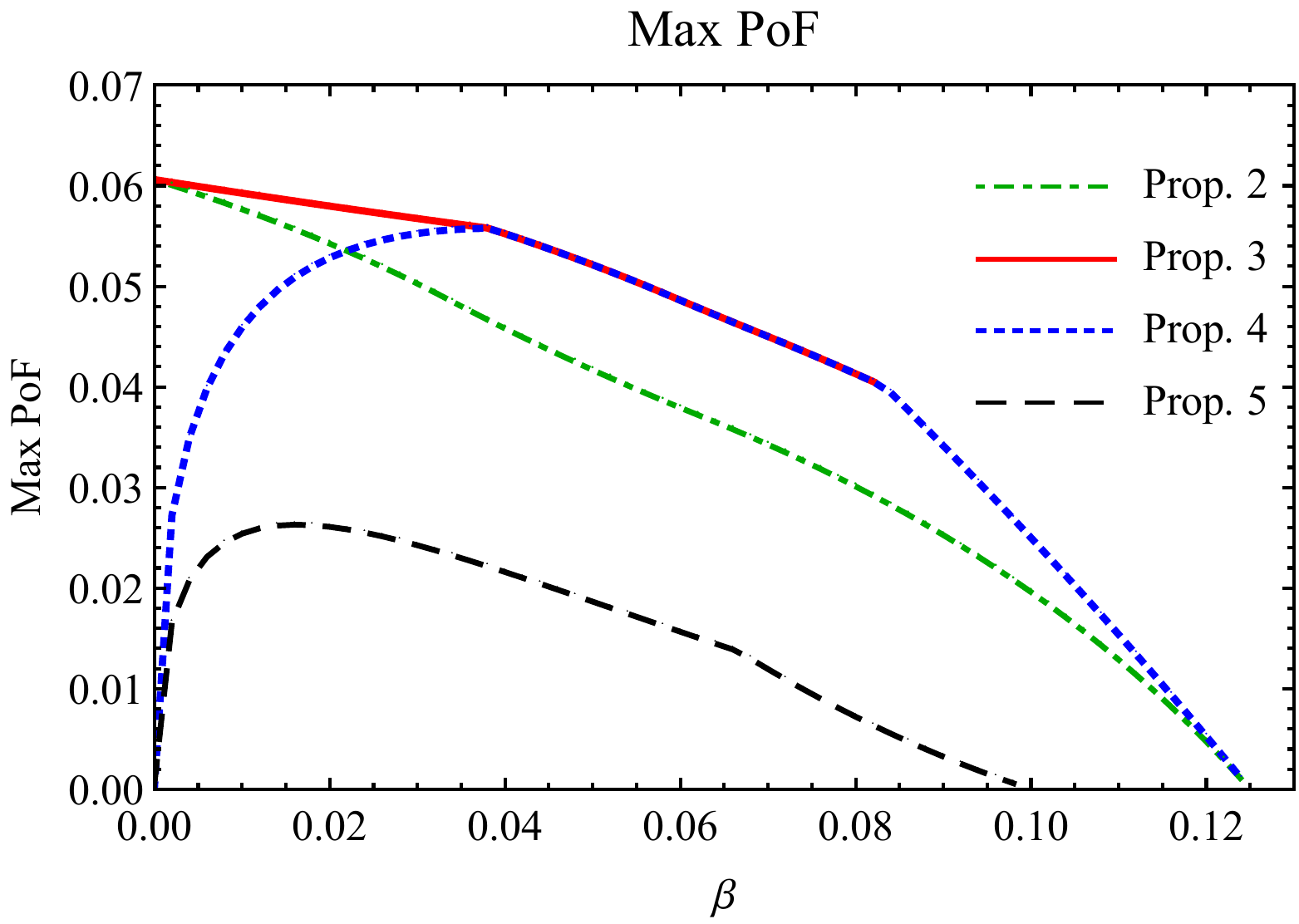}
\caption{Maixmum price of fairness for each of the four matchings addressed in Propositions \ref{prop:smallbeta}, \ref{prop:midbeta11}, \ref{prop:midbeta13}, and \ref{prop:midbeta3}. }
\end{figure}

Figure~\ref{fig:allcases} confirms both of our main theoretical results: adding NDDs to the efficient matching decreases the upper bound on the price of fairness, and when $\beta>1/8$ the price of fairness is zero.

\section{Price of Fairness for $\alpha$-Lexicographic-, Weighted-, and Hybrid-Lexicographic Fairness}\label{sec:poffairness}

This section presents Theorems and Proofs regarding the price of fairness for the lexicographic, weighted, and hybrid-lexicographic fairness rules.

\subsection{Lexicographic Fairness}
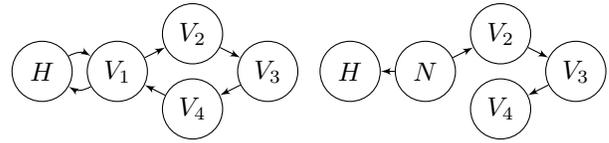
\begin{figure}
\centering
\begin{tikzpicture}
\tikzset{vertex/.style = {shape=circle,draw,minimum size=0.8cm}}
\tikzset{edge/.style = {->,> = latex'}}
\node[vertex] (h1) at  (0,0) {$H$};
\node[vertex] (v1) at  (1,0) {$V_1$};
\node[vertex] (v2) at  (2,0.5) {$V_2$};
\node[vertex] (v3)  at  (3,0) {$V_3$};
\node[vertex] (v4) at (2,-0.5) {$V_4$};
\draw[edge] (h1) to[bend left] (v1);
\draw[edge] (v1) to[bend left] (h1);
\draw[edge] (v1) to (v2);
\draw[edge] (v2) to (v3);
\draw[edge] (v3) to (v4);
\draw[edge] (v4) to (v1);
\end{tikzpicture} \hspace{3pt}
\begin{tikzpicture}
\tikzset{vertex/.style = {shape=circle,draw,minimum size=0.8cm}}
\tikzset{edge/.style = {->,> = latex'}}
\node[vertex] (h1) at  (0,0) {$H$};
\node[vertex] (a1) at  (1,0) {$N$};
\node[vertex] (v2) at  (2,0.5) {$V_2$};
\node[vertex] (v3)  at  (3,0) {$V_3$};
\node[vertex] (v4) at (2,-0.5) {$V_4$};
\draw[edge] (a1) to (h1);
\draw[edge] (v1) to (v2);
\draw[edge] (v2) to (v3);
\draw[edge] (v3) to (v4);
\end{tikzpicture}
\caption{Supporting graphs for Theorems~\ref{thm:lex-cycle} (left) and~\ref{thm:lex-chain} (right), with cycle cap 4 and chain cap 3, respectively.}\label{fig:lex-example}
\end{figure}

\begin{reptheorem}{thm:lex-cycle}
For any cycle cap $L$ there exists a graph $G$ such that the price of fairness of $G$ under the $\alpha$-lexicographic fairness rule with $0<\alpha\leq 1$ is bounded by $\POF{\Mcal{}}{u_{\alpha}}\geq \frac{L-2}{L}.$
\end{reptheorem}

\begin{proof}
Consider a kidney exchange graph consisting of one highly-sensitized patient $H$ and $L$ non-highly-sensitized patients $V_i$ that form a directed cycle of length $L$. A 2-cycle connects $H$ with one $V_i$, as shown in Figure~\ref{fig:lex-example}.  With a cycle cap of $L$, the optimal utilitarian matching has utility $L$, while the optimal lexicographic matching has utility $u_{\alpha}=2$, for any $0<\alpha\leq 1$. The price of fairness in this graph is $\POF{\mathcal{M}}{u_{\alpha}}=(L-2)/L$.
\end{proof}

\begin{reptheorem}{thm:lex-chain}
For any chain cap $R$ there exists a graph $G$ such that the price of fairness of $G$ under the $\alpha$-lexicographic fairness rule with $0<\alpha\leq 1$ is bounded by $\POF{\mathcal{M}}{u_{\alpha}} \geq \frac{R-1}{R}.$
\end{reptheorem}

\begin{proof}
Consider the graph used in the proof of Theorem \ref{thm:lex-cycle}, with vertex $V_2$ as an NDD rather than a pair. With a chain cap of $R$, the optimal utilitarian matching has utility $R$, while the optimal $\alpha$-lexicographic matching has utility $u_{\alpha}=1$ for any $0<\alpha\leq 1$. The price of fairness in this graph is $\POF{\mathcal{M}}{u_{\alpha}}=(R-1)/R$.
\end{proof}

\subsection{Weighted Fairness}

\begin{reptheorem}{thm:weight-cycle}
For any cycle cap $L$ and $\gamma \geq L-1$, there exists a graph $G$ such that the price of fairness of $G$ under the weighted fairness rule is bounded by $\POF{\mathcal{M}}{u_{WF}} \geq \frac{L-2}{L}.$
\end{reptheorem}

\begin{proof} 
Consider the graph used in the proof of Theorem \ref{thm:lex-cycle}, with all edge weights equal to 1. Weighted fairness increases the weight of the edge ending in $H$ to $(1+\gamma)$. The weighted utility of the 2-cycle is $2+\gamma$, while the weighted utility of the $L$-cycle is $L$. If $\gamma$ is chosen such that $\gamma \geq L-2$, then the 2-cycle will be chosen over the $L$-cycle, resulting in the price of fairness $\POF{\mathcal{M}}{u_{WF}}=(L-2)/L$.
\end{proof}

\begin{reptheorem}{thm:weight-chain}
For any chain cap $R$ and $\gamma \geq R-1$, there exists a graph $G$ such that the price of fairness of $G$ under the weighted fairness rule is bounded by $\POF{\mathcal{M}}{u_{WF}} \geq \frac{R-1}{R}.$
\end{reptheorem}

\begin{proof}
Consider the graph used in the proof of Theorem \ref{thm:lex-chain}, with all weights equal to 1. The weighted utility of the 1-chain is $1+\gamma$, while the weight of the $R$-chain is $R$. If $\gamma$ is chosen such that $\gamma \geq R-1$, then the 1-chain will be chosen over the $R$-chain, resulting in the price of fairness $\POF{\mathcal{M}}{u_{WF}}=(R-1)/R$.
\end{proof}

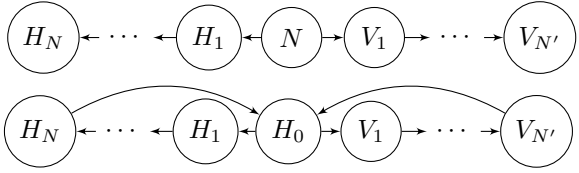
\begin{figure}
\centering
\begin{tikzpicture}
\tikzset{vertex/.style = {shape=circle,draw,minimum size=0.8cm}}
\tikzset{edge/.style = {->,> = latex'}}
\node[vertex] (A) at  (0,0) {$N$};
\node[vertex] (h1) at  (-1.1,0) {$H_1$};
\node[draw=none] (el) at  (-2.2,0) {$\cdots$};
\node[vertex] (hn)  at  (-3.3,0) {$H_N$};
\node[vertex] (v1) at (1.1,0) {$V_1$};
\node[draw=none] (el2) at (2.2,0) {$\cdots$};
\node[vertex] (vn)  at  (3.3,0) {$V_{N'}$};
\draw[edge] (A) to (h1);
\draw[edge] (h1) to (el);
\draw[edge] (el) to (hn);
\draw[edge] (A) to (v1);
\draw[edge] (v1) to (el2);
\draw[edge] (el2) to (vn);
\end{tikzpicture} 

\begin{tikzpicture}
\tikzset{vertex/.style = {shape=circle,draw,minimum size=0.8cm}}
\tikzset{edge/.style = {->,> = latex'}}
\node[vertex] (A) at  (0,0) {$H_0$};
\node[vertex] (h1) at  (-1.1,0) {$H_1$};
\node[draw=none] (el) at  (-2.2,0) {$\cdots$};
\node[vertex] (hn)  at  (-3.3,0) {$H_N$};
\node[vertex] (v1) at (1.1,0) {$V_1$};
\node[draw=none] (el2) at (2.2,0) {$\cdots$};
\node[vertex] (vn)  at  (3.3,0) {$V_{N'}$};
\draw[edge] (A) to (h1);
\draw[edge] (h1) to (el);
\draw[edge] (el) to (hn);
\draw[edge] (hn) to [bend left] (A);

\draw[edge] (A) to (v1);
\draw[edge] (v1) to (el2);
\draw[edge] (el2) to (vn);
\draw[edge] (vn) to [bend right] (A);
\end{tikzpicture}
\caption{Graphs for Theorems~\ref{thm:beta} (top) and~\ref{thm:betacycle} (bottom).}\label{betagraph}
\end{figure}

\begin{reptheorem}{thm:beta}
With no chain cap, there exists a graph $G$ such that the price of fairness of $G$ under the weighted fairness rule is bounded by $\POF{\mathcal{M}}{u_{WF}} \geq \frac{\gamma}{\gamma+1}.$
\end{reptheorem}

\begin{proof}
Consider a graph with a single NDD connected to a chain with highly-sensitized patients $H_i$ of length $N$, and a chain with non-highly sensitized patients $V_i$ of length $N'=\floor*{(\gamma+1) N}-1$. Under weighted fairness, the $V_i$ chain receives utility $u_L=\floor*{(\gamma+1)N}-1$ while the $H_i$ chain receives utility $u_H = (\gamma+1)N$, so $u_H>u_L$. The price of fairness for this graph is

{\small
$$\POF{\mathcal{M}}{u_{WF}} = \frac{\floor*{\gamma N_H}-1}{\floor*{\gamma N}+N-1} \geq \frac{\gamma N-2}{(\gamma+1)N-1}.$$
Taking the limit as $N \rightarrow \infty$ yields 
$$\lim_{N \rightarrow \infty} \frac{\gamma N-2}{(\gamma+1)N-1} = \frac{\gamma}{\gamma+1},$$
}
which implies $\POF{\mathcal{M}}{u_{WF}} \geq \frac{\gamma}{\gamma+1}.$
\end{proof}

\begin{reptheorem}{thm:betacycle}
With no cycle cap there exists a graph $G$ such that the price of fairness of $G$ under the weighted fairness rule is bounded by $\POF{\mathcal{M}}{u_{WF}} \geq \frac{\gamma}{\gamma+1}.$
\end{reptheorem}

\begin{proof}
Consider the graph used in the proof of Theorem \ref{thm:beta}, where the NDD $N$ is instead a highly-sensitized pair $H_0$, and the end vertices of both chains both have edges ending in $H_0$. Under weighted fairness, the $V_i$ cycle receives utility $u_L=\floor*{(\gamma+1)N}$, while the $H_i$ chain receives utility $u_H=(\gamma+1)N+1$, so $u_H>u_L$. The price of fairness for this graph is 
$$\POF{\mathcal{M}}{u_{WF}} = \frac{\floor*{\gamma}N-1}{\floor*{\gamma N}+N} \geq \frac{\gamma N - 1}{(\gamma + 1)N + 1}.$$

Taking the limit as $N \rightarrow \infty$ yields 

$$\lim_{N \rightarrow \infty} \frac{\gamma N - 1}{(\gamma + 1)N + 1} = \frac{\gamma}{\gamma+1},$$

\end{proof}

\subsection{Hybrid-Lexicographic}

\begin{reptheorem}{thm:hybridpof}
Assume the optimal utilitarian outcome $X_E$ receives utility $u(X_E)=u_E$, with one disadvantaged class that receives utility $u_1$, and $Z$ non-disadvantaged classes such that $u_1(X_E)>u_i(X_E)$. For $|\Pcal|$ classes, $ \POF{\mathcal{M}}{u_{\Delta}} \leq \frac{2((|\Pcal|-1)-Z)\Delta}{u_E}$.
\end{reptheorem}

\begin{proof}
Consider two outcomes, one in the fair regime ($X_F$), one in the utilitarian regime ($X_E$). Let $u_{\Delta }(X_F)> u_{\Delta }(X_E)$, such that $X_E$ receives nearly the same utility as $X_F$; that is, $u_{\Delta }(X_E) = u_{\Delta}(X_F)-\epsilon$ for some $0<\epsilon\ll 1$. WLOG, let there be $Z$ classes $i$ such that $u_1(X_E)>u_i(X_E)$, and 

{\small
\begin{equation} 
\begin{split}
u_{\Delta }(X_E) &= u_{\Delta }(X_F) - \epsilon \\
 &\leq \sum_{i=1}^{|\Pcal|} u_i(X_F) + (|\Pcal|-1)\Delta-\epsilon \\
 \end{split}
\end{equation}
}

Using the definition of utilitarian utility $u_E = \sum_{i=1}^{|\Pcal|} u_i$, 
\[
 u_E(X_E) - u_E(X_F)  \leq (2(|\Pcal|-1) - 2Z) \Delta-\epsilon
\]

and the price of fairness is 

$$\POF{\mathcal{M}}{u_{\Delta}} \leq \frac{2( (\Pcal-1) - Z) \Delta}{u_E(X_E)}.$$
\end{proof}

\section{Experimental Results}\label{sec:appendixresults}

This section contains worst-case price of fairness (PoF) and worst-case fairness ($\% F$) for real UNOS graphs, and for simulated graphs; these results were produced using the method described in \Secref{sec:experiments}.

\subsection{UNOS Graphs}

Figure \ref{fig:pofresults-unos} shows the worst-case (maximum) PoF of each fairness rule on the 314 UNOS graphs; Figure \ref{fig:fairresults-unos} shows worst-case (minimum) $\% F$.

\begin{sidewaysfigure*}
\centering
\includegraphics[width=\textwidth]{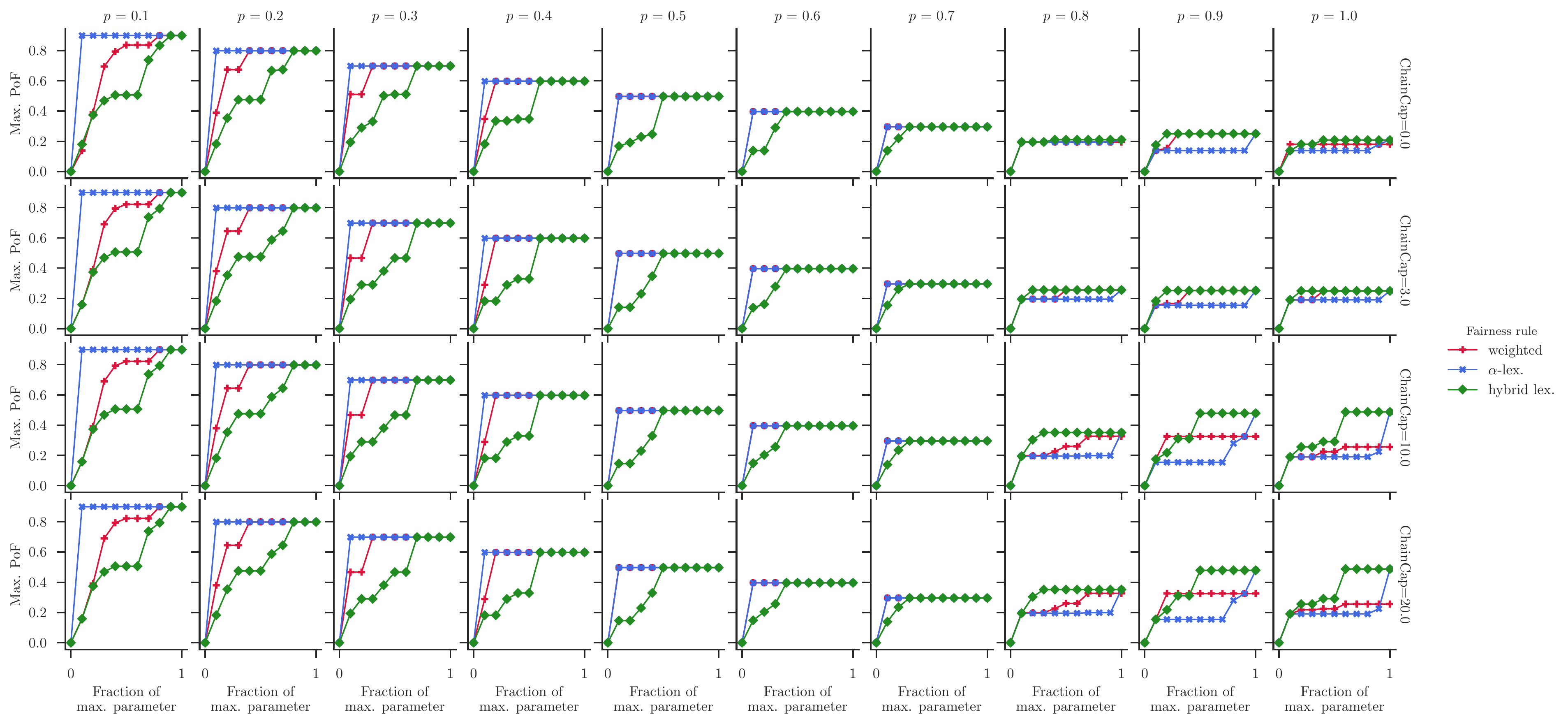} 
\label{fig:pofresults-unos}
\caption{Maximum PoF for each fairness rule. Parameters for each rule are $\alpha\in [0,1]$, $\beta\in [0,20]$, and $\Delta\in [0,u(M_E)]$. Rows correspond to edge success probabilities from $0.1$ to $1.0$; columns correspond to different chain caps: $0$, $3$, and $20$.}
\end{sidewaysfigure*}

Real exchange graphs are relatively sparse, and have very few feasible matchings. Each fairness rule effectively chooses one of these matchings, based on a fairness criteria. Especially with sparse graphs, fairness is often achieved by using longer cycles or cycles to match highly sensitized vertices. When edge success probability $p$ is high, fairness has little effect on overall utility, and PoF is often below 0.3. With lower edge success probability, using longer cycles and chains causes a huge loss in efficiency: the expected utility of $n$-cycles and chains is proportional to $p^n$, which incurs a huge penalty for long cycles and chains when $p$ is small. Thus as $p$ decreases, very small $\alpha$ and $\beta$ values result in a high PoF. Our results show that for  $p\leq 0.8$, even the smallest parameters for $\alpha$-lexicographic and weighted fairness ($\alpha=0.1$ and $\beta=2$) achieve the worst-case PoF. As expected, hybrid-lexicographic fairness limits PoF according to Theorem \ref{thm:hybridpof}. With two classes of patients (highly- and lowly-sensitized), the theoretical price of fairness is bounded by $\POF{\mathcal{M}}{u_\Delta}\leq 2\Delta /u(M_E)$; in the Figures, $\Delta$ is scaled by $u(M_E)$, so the upper bound on the price of fairness has a slope of two.

To illustrate the other side of the fairness-efficiency tradeoff, we consider worst case $\% F$. Figure \ref{fig:fairresults-unos} shows the minimum (worst case) $\% F$ over all UNOS graphs for each fairness rule, and for various edge success probabilities and chain caps.

\begin{sidewaysfigure*}
\centering
\includegraphics[width=\textwidth]{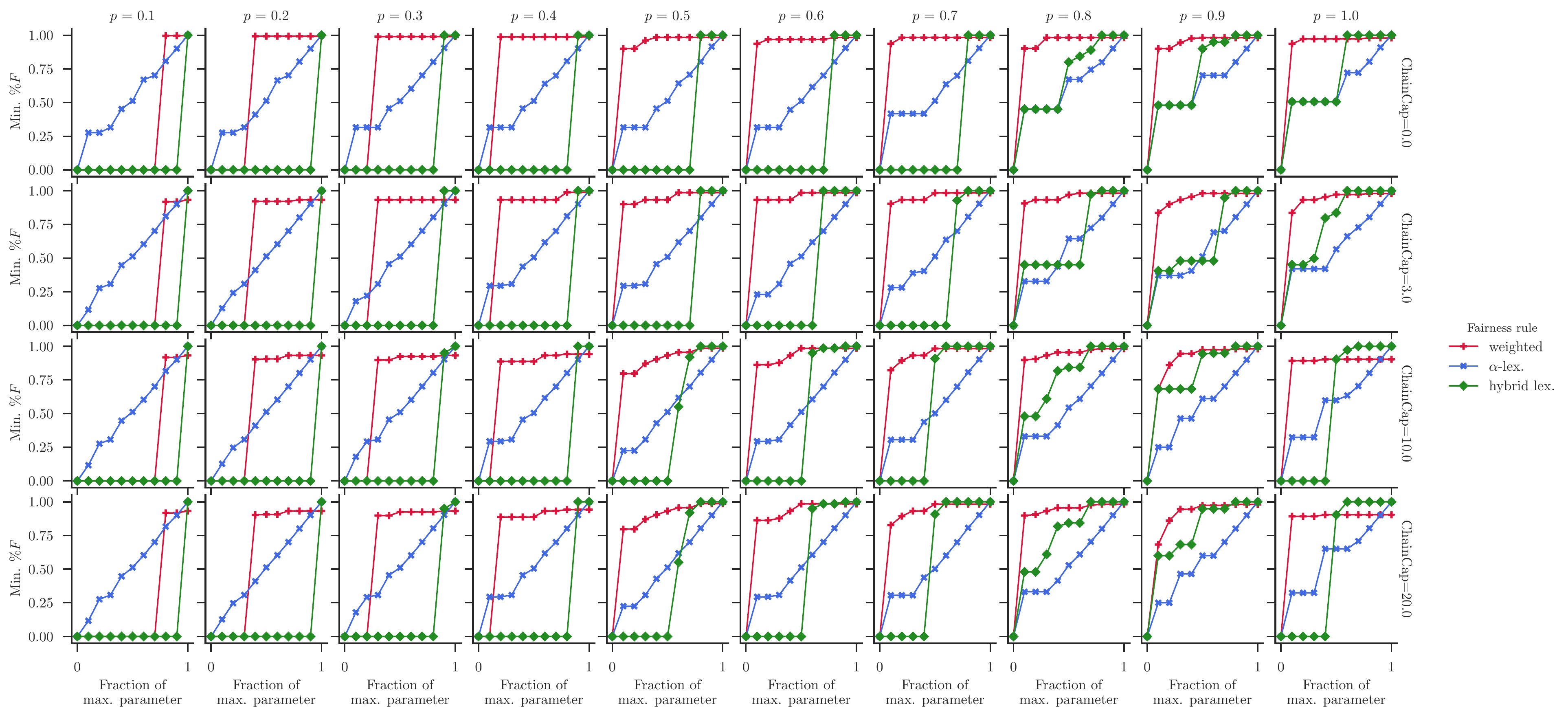}
\label{fig:fairresults-unos}
\caption{Minimum fraction of the fair score for each fairness rule. Parameters for each rule are $\alpha\in [0,1]$, $\beta\in [0,20]$, and $\Delta\in [0,u(M_E)]$. Rows correspond to edge success probabilities from $0.1$ to $1.0$; columns correspond to different chain caps: $0$, $3$, and $20$.}
\end{sidewaysfigure*}

As expected, $\alpha$-lexicographic fairness guarantees at $\% F \geq \alpha$;  weighted and hybrid-lexicogrpahic fairness do not make this guarantee. Small edge success probabilities make it impossible to match highly sensitized patients without large efficiency loss; when $p$ becomes small hybrid-lexicographic fairness matches no highly sensitized patients in the worst case.

These results demonstrate the balance between fairness and efficiency offered by both $\alpha$-lexicographic and hybrid-lexicographic fairness. If fairness is more important than efficiency, then the $\alpha$-lexicographic rule can be used to guarantee that the resulting matching achieves at least fraction $\alpha$ of the maximum possible fair score. Alternatively, if efficiency is more important than fairness, hybrid-lexicographic fairness can bound the price of fairness using parameter $\Delta$. 

\subsection{Simulated Exchange Graphs}

Simulated exchange graphs were drawn from previous UNOS exchanges, using the same method as \englishcite{Dickerson13:Failure}. These graphs are typically denser than real graphs, and have a much lower price of fairness. Figures \ref{fig:rand64} and \ref{fig:rand128} show the worst-case PoF and $\% F$ on 32 simulated exchanges of size 64 and 128. 

\begin{sidewaysfigure*}
\centering
\begin{subfigure}[Maximum PoF.]{\includegraphics[width=.49\linewidth] {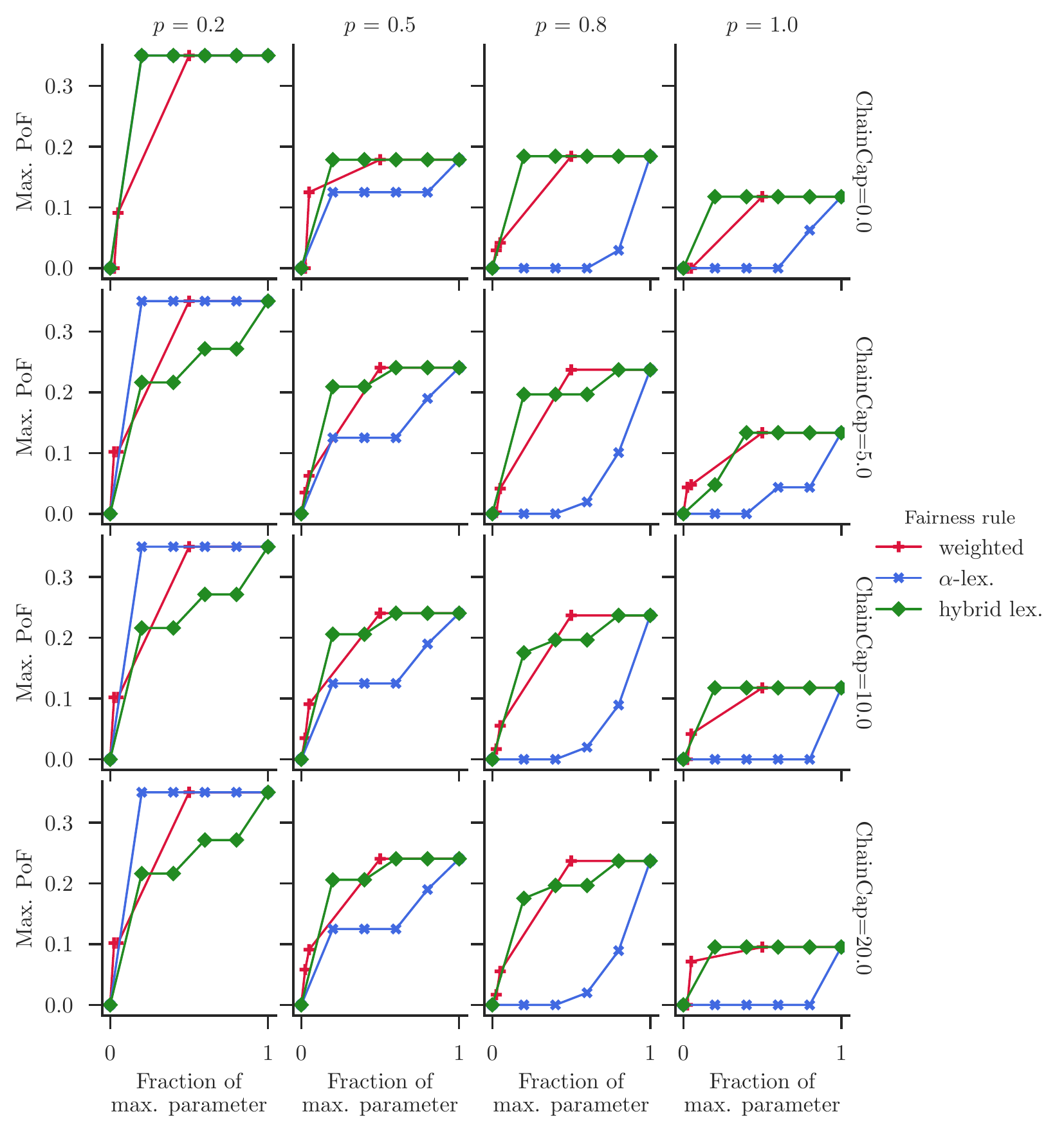}
   \label{fig:rand64pof}
 }%
 \end{subfigure}\hfill
 \begin{subfigure}[Minimum $\% F$.]{\includegraphics[width=.49\linewidth] {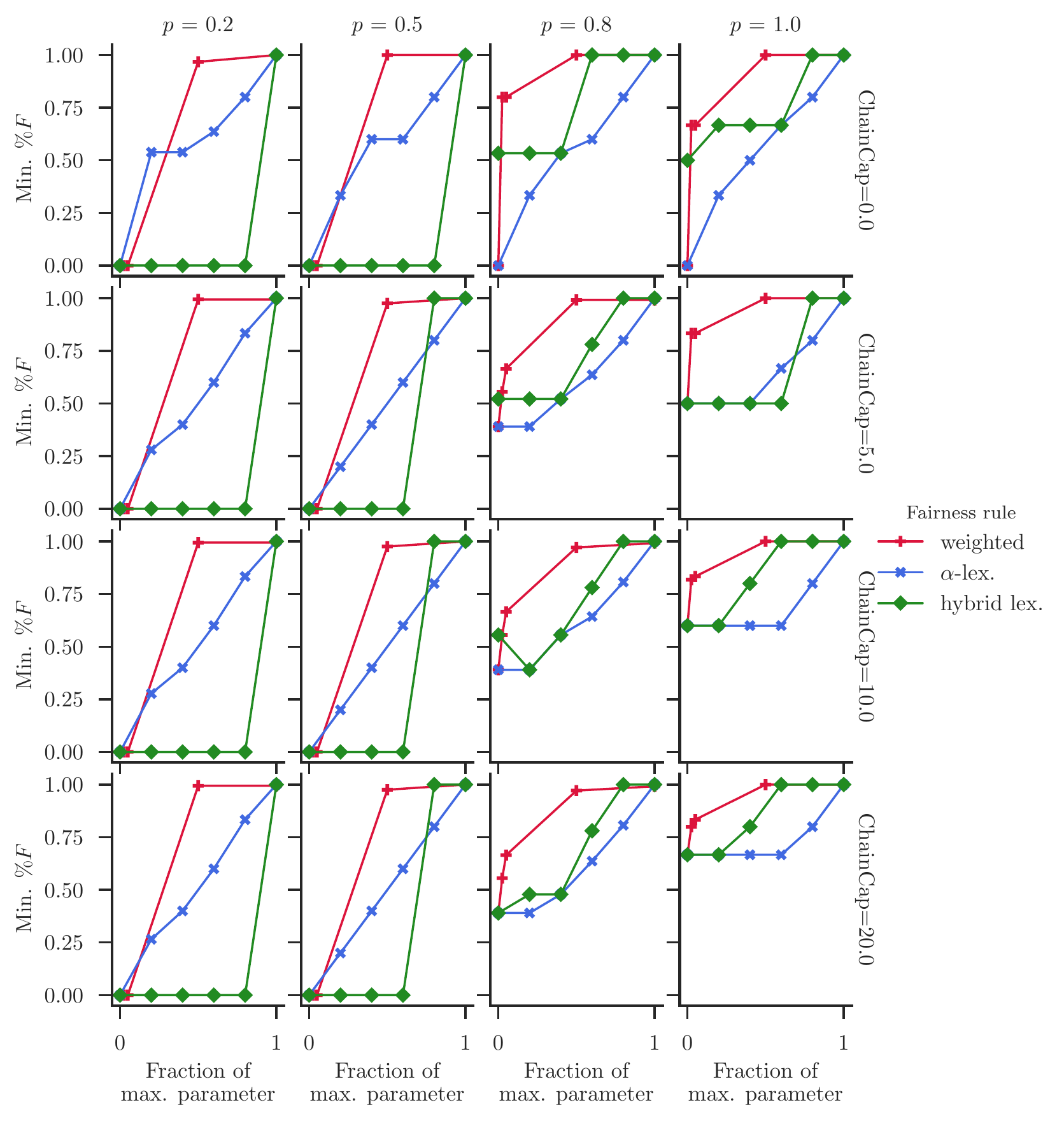}
   \label{fig:rand64fair}
 }%
\end{subfigure}
\caption{Worst-case PoF and $\% F$ for 32 64-vertex random graphs. Parameters for each rule are $\alpha\in [0,1]$, $\beta\in [0,20]$, and $\Delta\in [0,u(M_E)]$. Rows correspond to edge success probabilities from $0.1$ to $1.0$; columns correspond to different chain caps: $0$, $3$, and $20$.}
\label{fig:rand64}
\end{sidewaysfigure*}

\begin{sidewaysfigure*}
\centering
\begin{subfigure}[Maximum PoF.]{\includegraphics[width=.49\linewidth] {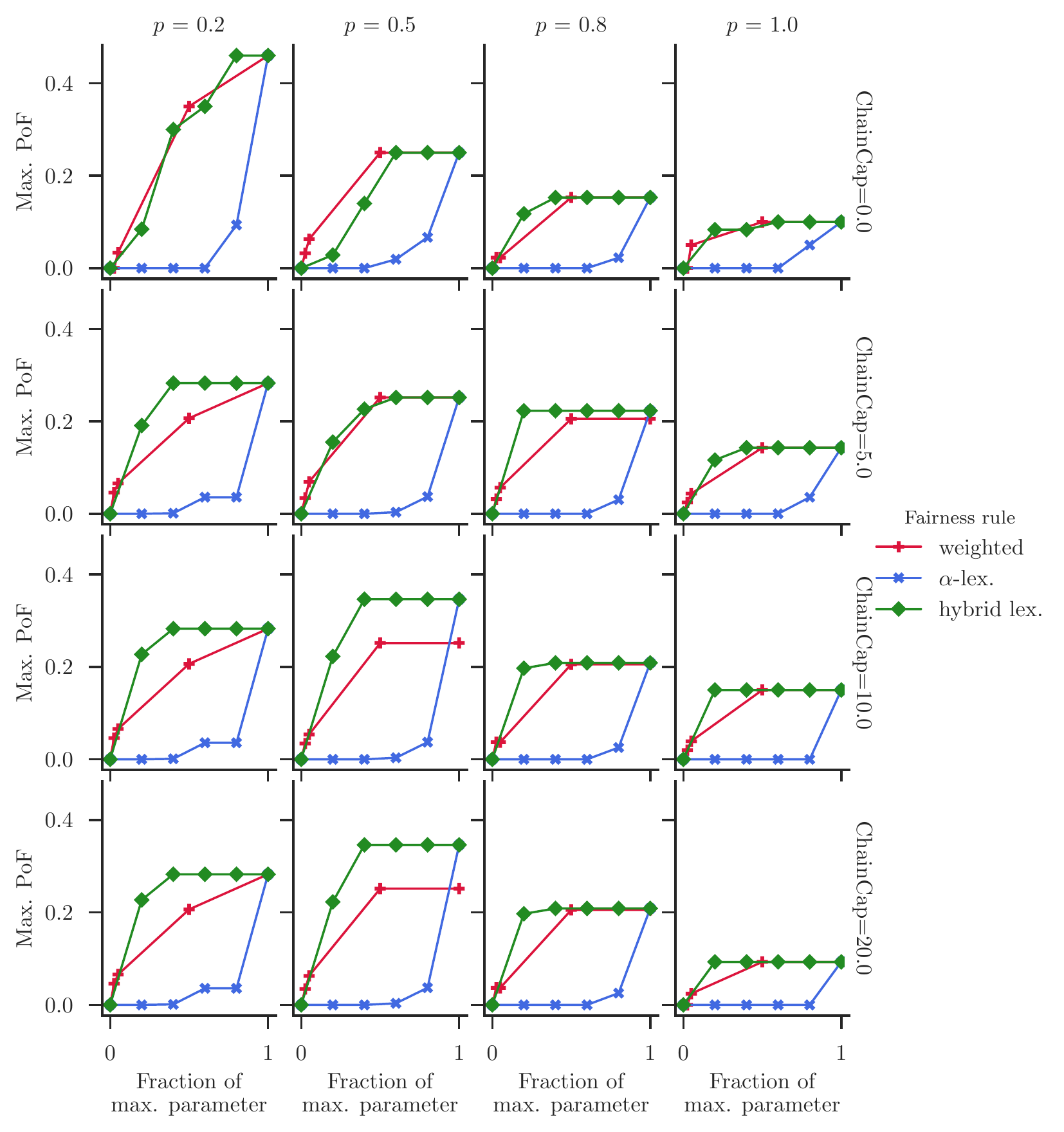}
   \label{fig:rand128pof}
 }%
 \end{subfigure}\hfill
 \begin{subfigure}[Minimum $\% F$.]{\includegraphics[width=.49\linewidth] {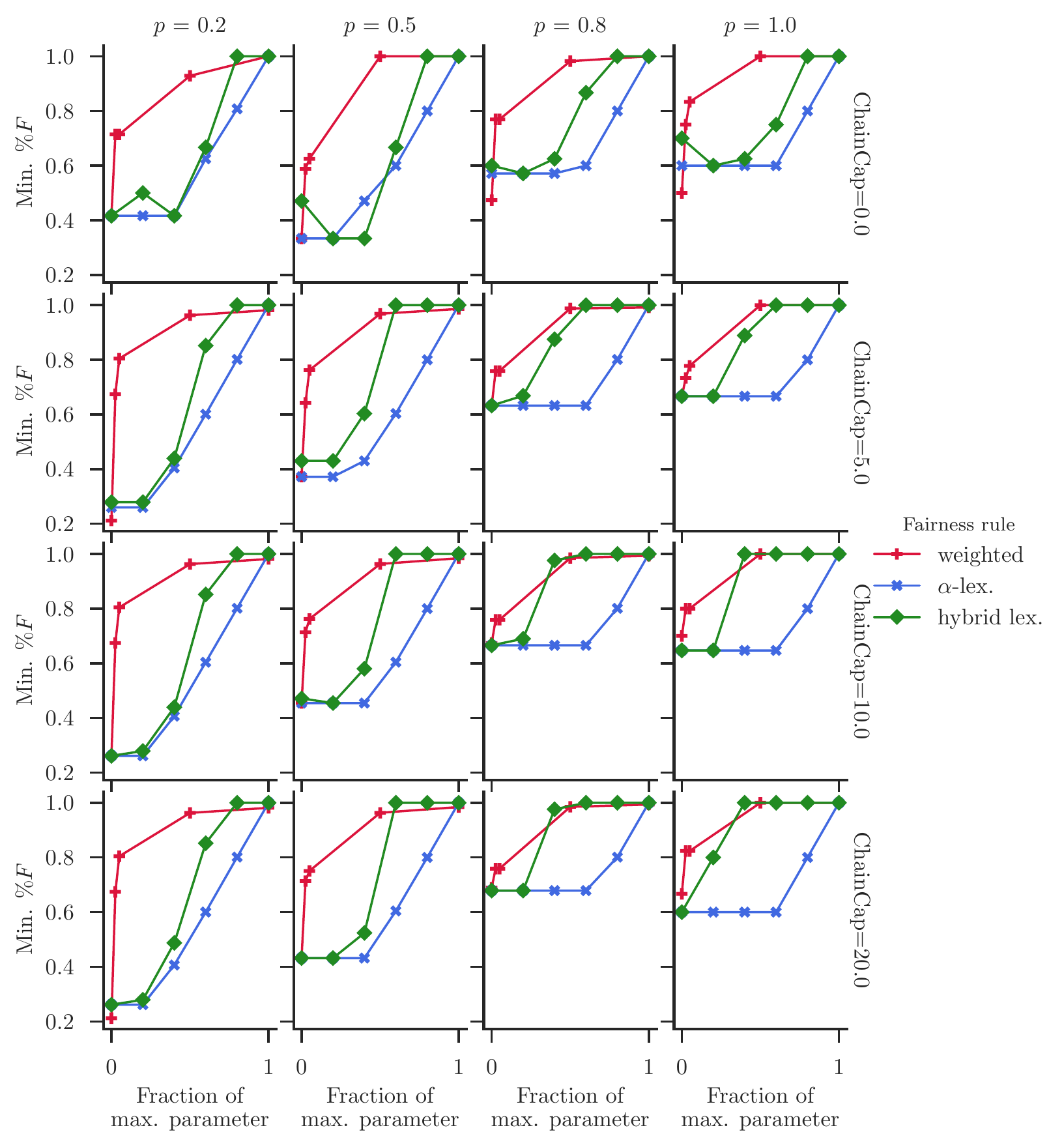}
   \label{fig:rand128fair}
 }%
\end{subfigure}
\caption{Worst-case PoF and $\% F$ for 32 128-vertex random graphs. Parameters for each rule are $\alpha\in [0,1]$, $\beta\in [0,20]$, and $\Delta\in [0,u(M_E)]$. Rows correspond to edge success probabilities from $0.1$ to $1.0$; columns correspond to different chain caps: $0$, $3$, and $20$.}
\label{fig:rand128}
\end{sidewaysfigure*}

\end{document}